\newcommand\assumptionmargin{2.5cm}
\newcommand\assumptionlabel{Assumption~}
\newtheorem{theorem}{Theorem}
\begin{document}

\title{Sampling Error Analysis in Quantum Krylov Subspace Diagonalization}

\author{Gwonhak Lee}
\affiliation{SKKU Advanced Institute of Nanotechnology, Sungkyunkwan University, Suwon 16419, Republic of Korea}

\author{Dongkeun Lee}
\affiliation{Department of Chemistry, Sungkyunkwan University, Suwon 16419, Republic of Korea}
\affiliation{Center for Quantum Information R\&D, Korea Institute of Science and Technology Information, Daejeon 34141, Republic of Korea}

\author{Joonsuk Huh}
\email{joonsukhuh@gmail.com}
\affiliation{SKKU Advanced Institute of Nanotechnology, Sungkyunkwan University, Suwon 16419, Republic of Korea}
\affiliation{Department of Chemistry, Sungkyunkwan University, Suwon 16419, Republic of Korea}
\affiliation{Institute of Quantum Biophysics, Sungkyunkwan University, Suwon 16419, Republic of Korea}

\maketitle

\begin{abstract}
Quantum Krylov subspace diagonalization (QKSD) is an emerging method used in place of quantum phase estimation in the early fault-tolerant era, where limited quantum circuit depth is available. 
In contrast to the classical Krylov subspace diagonalization (KSD) or the Lanczos method, QKSD exploits the quantum computer to efficiently estimate the eigenvalues of large-size Hamiltonians through a faster Krylov projection.
However, unlike classical KSD, which is solely concerned with machine precision, QKSD is inherently accompanied by errors originating from a finite number of samples.
Moreover, due to difficulty establishing an artificial orthogonal basis, ill-conditioning problems are often encountered, rendering the solution vulnerable to noise.
In this work, we present a non-asymptotic theoretical framework to assess the relationship between sampling noise and its effects on eigenvalues.
We also propose an optimal solution to cope with large condition numbers by eliminating the ill-conditioned bases.
Numerical simulations of the one-dimensional Hubbard model demonstrate that the error bound of finite samplings accurately predicts the experimental errors in well-conditioned regions.
\end{abstract}

\section{Introduction}\label{sec:Introduction}

Quantum simulations of many-body systems have been a longstanding challenge in the fields of physics and chemistry.
Quantum computing is expected to address those problems intractable to classical approaches \cite{Feynman.1982, Lloyd.1996}.
Although various quantum algorithms, such as quantum phase estimation, are promising for estimating eigenvalues \cite{Abrams.1997, Abrams.1999}, they are not feasible because of the limitations of contemporary quantum computers in the noisy intermediate-scale quantum (NISQ) era \cite{nisq}.

Research has been conducted on the development of quantum-classical hybrid algorithms for NISQ devices \cite{Bharti.2022}.
One of the most widely used algorithms is the variational quantum algorithm (VQA), comprising the variational quantum eigensolver (VQE) and the quantum approximation optimization algorithm \cite{vqa, Peruzzo.2014, Farhi.2014}.
VQAs employ a parameterized quantum circuit and cost function optimizations on classical devices.
However, adopting a classical optimization strategy, such as gradient descent, has been shown to have drawbacks, such as barren plateaus \cite{McClean.2018,Cerezo.2021}, nonconvexity, and NP-hardness \cite{Bittel.2021}.

Quantum Krylov subspace diagonalization (QKSD) has recently received considerable attention for circumventing the drawbacks of VQAs.
It is another type of quantum-classical hybrid algorithm, positioned between the NISQ device and the fault-tolerant quantum computer, that leverages the quantum circuit to project a large-scale Hamiltonian onto a lower-dimensional space, known as the Krylov subspace. Due to its reduced dimension, a classical computer then solves the generalized eigenvalue problem (GEVP) within this subspace.
Existing studies on the QKSD method have utilized various types of the Krylov bases that rely on the base operator and the reference state. 
These include real-time evolution-based QKSD \cite{quantum_filter_method}, the quantum power method (QPM), whose basis is generated by using the Hamiltonian operator \cite{quantum_power_method}, and the QLanczos method, which utilizes imaginary-time evolution \cite{quantum_imaginary_time_evolution}.
Several other approaches, including the use of multiple reference states \cite{doi:10.1021/acs.jctc.9b01125} and multifidelity estimation protocol \cite{PhysRevA.105.022417} have been proposed to improve the QKSD method \cite{PRXQuantum.3.020323, Cohn.2021,Cortes.2022b,Shen.2022,Stair.2022}.

All these methodologies necessarily involve extracting matrix elements of the Krylov subspace for the Hamiltonian through measurements. 
Due to the probabilistic nature of quantum regimes, repeating quantum measurements on multiple copies of the quantum circuit is inevitable for attaining accurate outcomes and probabilities. 
The ever-present error arising from the finite number of measurements cannot be overlooked in the QKSD method, as the GEVP is vulnerable to errors \cite{Theory_QSDK,Mathias2004TheDG,10.2307/2156670}.
The vulnerability of the GEVP has been highlighted in a previous work that conducts an error analysis of quantum subspace diagonalization \cite{Theory_QSDK}.
In this study, we theoretically analyze the finite sampling error and the error of the QKSD method. 
In the QKSD method, we use the Hadamard test and real-time evolution to measure the entries of the projected Hamiltonian and the overlap matrix.
The finite sampling error from this Hadamard test is modeled as Gaussian noise, which causes perturbations in the GEVP.
Based on this error analysis, we employ a threshold, also called a "truncation point,"
to bypass the ill-conditioning of the overlap matrix in the QKSD method.
Finally, we implement the QKSD algorithm to solve the one-dimensional Fermi-Hubbard model by using a numerical approach under the assumption that the quantum circuit used for simulation does not includes any other errors.
We then estimate its eigenenergy perturbation with respect to the number of shots to verify the analysis.

This paper is organized as follows. First, a preliminary version of the QKSD for a Hamiltonian is explained in Section \ref{sec:QKSD} and then a brief overview of QKSD-based error analyses is provided in Section \ref{sec:GEVP}.
The main results are presented in Section \ref{sec:main_result}, which includes the elaborate model of sampling errors from the Hadamard test, the effect of errors on the final results of the GEVP, and optimal thresholding criteria for alleviating the perturbation effect.
Finally, our result is numerically verified with the 1D Fermi-Hubbard model in Section \ref{sec:result}.

\section{QKSD}\label{sec:QKSD}
The Krylov subspace method is an algorithm for evaluating the approximated solutions to high-dimensional matrix problems by projecting large original matrices onto a small space.
In particular, Krylov subspace diagonalization (KSD) solves the eigenvalue problem of the projected matrix to which the original matrix is converted.
If one adopts KSD to solve a quantum many-body problem, the problem matrix corresponds to the many-body Hamiltonian, $\hat{H}$. Furthermore, the order-$n$ Krylov subspace, $\mathcal{K}_n$, is expressed as follows:
\begin{equation}
\label{eq:krylov_subspace}
\mathcal{K}_n\left(\hat{A}, \ket{\phi_0}\right) := \textrm{span}\left(\{\ket{\phi_0}, \ket{\phi_1} \cdots \ket{\phi_{n-1}}\}\right),
\end{equation}
where the nonorthogonal Krylov basis, $\ket{\phi_k} := \hat{A}^{k}\ket{\phi_0}$ is generated by repeatedly applying the base operator, $\hat{A}$, up to $n-1$ times to the reference state vector, $\ket{\phi_0}$.
Generally, the base operator $\hat{A}$ is a function of $\hat{H}$ and is often chosen to be the original Hamiltonian, such as $\hat{A}=\hat{H}$, in classical KSD.
Furthermore, the order $n$ represents a relatively small number compared to the size of a problem ($n\ll\dim \hat{H}$), allowing the classical GEVP solver to address matrices of size $n\times n$.
In Krylov subspace, an ansatz state approximating an eigenstate of $\hat{H}$, $\ket{\psi(\bm{c})}$, is expressed as a linear combination of Krylov basis states:
\begin{equation}
\label{eq:original_state}  
\ket{\psi(\bm{c})} = \frac{1}{\mathcal{N}_{\bm{c}}}\sum_{k=0}^{n-1} {c}_{k} \ket{\phi_k}, 
\end{equation}
where $\mathcal{N}_{\bm{c}}=(\sum_{kl}{c}_{k}^*{c}_{l}\langle\phi_k|\phi_l\rangle)^{1/2}$ is a normalization factor, and $\bm{c}=(c_0, \cdots, c_{n-1})$ represents the unnormalized expansion coefficients.
 When Eq.\eqref{eq:original_state} is viewed as a variational state parameterized by a vector $\bm{c}$, we can adopt the following variational principle:
\begin{equation*}
\begin{split}
\min_{\bm{c}\neq \bm{0}}\braket{\psi(\bm{c})|\hat{H}|\psi(\bm{c})} =& \min_{\bm{c}\neq \bm{0}}\frac{\sum_{kl}c_k^{*}c_l \braket{\phi_k|\hat{H}|\phi_l}}{\sum_{kl}c_k^{*}c_l \braket{\phi_k|\phi_l}}\\
=&\min_{\bm{c}\neq \bm{0}}\frac{\bm{c}^{\dagger}\bm{H}\bm{c}}{\bm{c}^{\dagger}\bm{S}\bm{c}},
\end{split}
\end{equation*}
which corresponds to the generalized Rayleigh quotient of the projected Hamiltonian matrix, $\bm{H}$, and the overlap matrix, $\bm{S}$.
These matrices are defined respectively as
\begin{align}
\bm{H}_{kl} &:= \bra{\phi_k}\hat{H}\ket{\phi_l}, \label{eq:prj_H} \\ 
\bm{S}_{kl} &:= \braket{\phi_k|\phi_l} \label{eq:overlap}.
\end{align}
Therefore, one can subsequently derive a GEVP along with a matrix pair $(\bm{H}, \bm{S})$ \cite{parlett1980symmetric}:
\begin{equation}
\label{eq:gen_eigeq}
\bm{H}\bm{c_j} = \bm{S}\bm{c_j} E^{(n)}_j.
\end{equation}
Here, $E^{(n)}_j$ is the $j^{\text{th}}$ approximated eigenvalue of Krylov subspace order $n$ ($E^{(n)}_0 \le E^{(n)}_1 \le \cdots \le E^{(n)}_{n-1}$), and $\bm{c_j}$ becomes the corresponding $n$-dimensional eigenvector. Subscript $j$ is dropped for the lowest energy and the corresponding eigenvector for convenience.

\begin{figure}[t]
	\centering
	\includegraphics[width=0.98\linewidth]{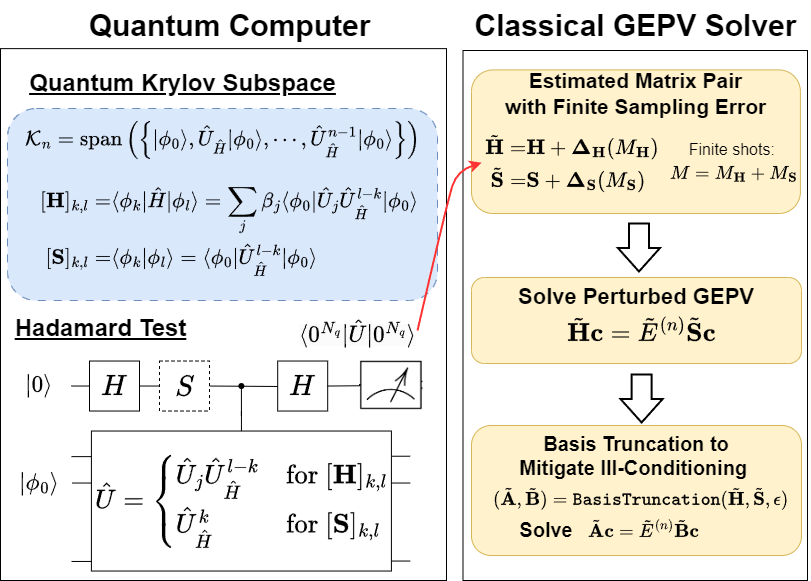}
	\caption{QKSD algorithm involves a quantum part carrying out the projection from the full Hilbert space to the Krylov subspace and a classical part solving the corresponding GEVP.
	Each element in the matrix pair $(\bm{H}, \bm{S})$ is evaluated by the Hadamard test where the controlled $\hat{U}$ gate includes $\hat{U}_{\hat{H}}=e^{-i\hat{H}\Delta_t}$ and $\hat{U}_j$ forming linear combination of unitaries (LCU) of the Hamiltonian.
	Usually, because the GEVP is ill-conditioned, singular vectors with singular values smaller than a parameter $\epsilon$ are discarded by the $\mathtt{BasisThresholding}$ algorithm (see Algorithm \ref{alg:thresholding} in Section \ref{sec:GEVP_PERT}).
	}
	\label{fig:method_flowchart}
\end{figure}

Compared with classical KSD, QKSD is leveraged by quantum simulation algorithms to build a Krylov basis in quantum computers with a reduced level of the computational effort.
Specifically, the quantum algorithms are expected to efficiently yield Eqs.\eqref{eq:prj_H} and \eqref{eq:overlap} within the small error bound, despite the exponentially increasing dimension of the Hilbert space.
Various QKSD algorithms have been proposed for each quantum simulation algorithm, including real-time evolution-based QKSD \cite{quantum_filter_method}, QPM \cite{quantum_power_method}, and quantum imaginary-time evolution (QITE) \cite{quantum_imaginary_time_evolution}.
They are differentiated by the choice of base operator $\hat{A}$, which determines the Krylov subspace.
In the QPM \cite{quantum_power_method}, the base operator $\hat{A}$ is chosen as $\hat{H}$, and its basis, $\hat{H}^k\ket{\phi_0}$, is approximately calculated using the finite difference method (FDM).
However, it exponentially amplifies the sampling errors with respect to $k$, especially when the FDM time step is small, and requires an exponential number of quantum computer calls.
In QITE \cite{quantum_imaginary_time_evolution}, the Krylov subspace is constructed by using an imaginary time evolution basis ($e^{-k\Delta_t}\ket{\phi_0}$) based on the sequential unitary approximations to a normalized imaginary time evolution operator.
However, such approximations involve iterative local tomography, which requires exponential computational effort to expand the range of tomography and compensate for approximation errors.
Therefore, in this work, we focused on the most widely discussed method \cite{quantum_filter_method, PRXQuantum.3.020323, Cohn.2021}, real-time quantum Krylov diagonalization, because it is a compact method with a base operator served by a unitary operator that can be implemented in a substantially simper manner in a quantum circuit than other algorithms invoking nonunitary processes.

We considered the order-$n$ quantum Krylov method that uses a real-time evolution basis with time step $\Delta_t$, $\{\ket{\phi_k}:=e^{-i k \Delta_t \hat{H}}\ket{\phi_0}\}_{k=0}^{n-1}$, whose procedure is described in Fig.\ref{fig:method_flowchart}.
As mentioned elsewhere \cite{quantum_filter_method}, it only requires $n$ elements rather than $n^2$ to define $(\bm{H}, \bm{S})$ if one can perform exact Hamiltonian simulations.
Using the commutation relation, one can rewrite $\bm{H}_{kl}$ and $\bm{S}_{kl}$ in terms of the index difference, $k-l$, expressed as:
\begin{equation}
\begin{split}
\bm{H}_{kl} :=& \bra{\phi_0}e^{ik\hat{H}\Delta_t} \hat{H} e^{-il\hat{H}\Delta_t}\ket{\phi_0}
\\=& \bra{\phi_0}\hat{H} e^{-i(l-k)\hat{H}\Delta_t}\ket{\phi_0}, \label{eq:prj_H_QKD}     
\end{split}    
\end{equation}
\begin{equation}
\begin{split}
\bm{S}_{kl} :=& \bra{\phi_0}e^{ik\hat{H}\Delta_t}e^{-il\hat{H}\Delta_t}\ket{\phi_0}
\\=& \bra{\phi_0}e^{-i(l-k)\hat{H}\Delta_t}\ket{\phi_0}. \label{eq:overlap_QKD}    
\end{split}    
\end{equation}
Subsequently, constructing the Toeplitz matrices, $\bm{H}$, becomes simpler; the elements $\bm{H}_{kl} = h_{l-k}$ are obtained from sequence $\{h_k\}_{k=-(n-1)}^{n-1}$ defined as 
\begin{equation}
\begin{split}
h_k :=& \bra{\phi_0}\hat{H} e^{-ik\hat{H}\Delta_t}\ket{\phi_0}\\
=& \sum_{j=1}^{N_\beta}{\beta_j \bra{\phi_0}\hat{U}_{j} e^{-ik\hat{H}\Delta_t}\ket{\phi_0}}, \label{eq:QKD_H_seq}
\end{split}
\end{equation}
where each term can be obtained from the Hadamard test.
The summation in Eq.\eqref{eq:QKD_H_seq} indicates the separated measurements of $\hat{H}$ by applying unitary partitioning \cite{unitary_partitioning, LCU_Love} with a set of unitaries, $\{\hat{U}_j\}$:
\begin{equation}
\label{eq:pauli_ham_grp}
\hat{H} = \sum_{j=1}^{N_\beta} \beta_j \hat{U}_j,
\end{equation}
where $\beta_j$ are coefficients determined by the unitary partitioning.
Further details are provided in Appendix \ref{sec:appendix_hamiltonian_partitioning}.
Unitary partitioning offers an advantage over the individual measurements of Pauli operators in terms of sampling noise \cite{LCU_Love}.
As an alternative to Eq.\eqref{eq:pauli_ham_grp}, one can adopt fragments of the Hamiltonian, $\hat{H}=\sum_j \hat{H}_j$, where each $\hat{H}_j$ can be diagonalized by an efficiently constructed unitary.
This approach involves using the extended swap test \cite{quantum_filter_method}.
Similarly, matrix $\bm{S}$ is generated by using sequence $\bm{S}_{kl}=s_{l-k}$, where
\begin{equation}
s_k := \bra{\phi_0} e^{-ik\hat{H}\Delta_t}\ket{\phi_0} \label{eq:QKD_S_seq}.
\end{equation} 
Accordingly, we must obtain sequences $\{h_k\}_{k=0}^{n-1}$ and $\{s_k\}_{k=0}^{n-1}$ to fill the matrices using the help of time reversal properties, $h_{-k} = h_{k}^{*}$ and $s_{-k} = s_{k}^{*}$. 

However, the Toeplitz construction is impractical when an approximated simulation, such as Suzuki-Trotter decomposition, possesses a large amount of error.
This issue is inevitable when dealing with Hamiltonians containing noncommuting terms with a shallow quantum circuit.
Here, let us call $\hat{U}_{\mathrm{ST}}(t)=e^{-it\hat{H}}+\hat{\mathcal{E}}_{\mathrm{ST}}(t)$ a decomposition of $e^{-it\hat{H}}$, where $\hat{\mathcal{E}}_{\mathrm{ST}}(t)$ denotes the simulation error.
For example, by adapting the fractal Suzuki-Trotter decomposition \cite{fractal_decomposition, Suzuki_book} of order $r$, the error becomes $\|\hat{\mathcal{E}}_{\mathrm{ST}}(k\Delta_t)\|=O(k\Delta_t^{r+1})$ at a circuit depth of $O(k 2^r N_\Gamma)$, where $k$ is the number of repetitions of $\Delta_t$, and $N_\Gamma$ is the number of terms that can be efficiently diagonalized and thus exponentiated.
When such deep quantum circuits are not accessible, $\hat{\mathcal{E}}_{\mathrm{ST}}$ becomes nonnegligible; thus, an alternative basis, $\{\ket{\phi_k'}:=\hat{U}_{\mathrm{ST}}(\Delta_t)^k \ket{\phi_0}\}$, should be chosen instead of the exact time evolution basis, $\{\ket{\phi_k}:=e^{-ik\Delta_t \hat{H}}\ket{\phi_0}\}$, for the analysis.
In such settings, the Toeplitz nature of $\bm{H}$ (see Eq.\eqref{eq:prj_H_QKD}) vanishes because $[\hat{H},\hat{U}_{\mathrm{ST}}(t)]\neq0$, requiring that $n^2$ elements be obtained instead of $n$.
On the other side, the overlap matrix $\bm{S}$ maintains its Toeplitz property because $\hat{U}_{\mathrm{ST}}(\Delta_t)$ is still unitary, requiring only $n$ elements.

\section{General Error Analysis}\label{sec:GEVP}

The total error of the QKSD algorithm in practice can be attributed to two factors: one is the perturbation of the pair $(\bm{H}, \bm{S})$ owing to the noise from the practical implementation.
As a representative example of a noise source, sampling noise becomes dominant when the elements are obtained using a limited number of samplings; therefore, the finite sampling issue will be mainly addressed in this study.
Conversely, the other factor corresponds to the difference between the approximation solution obtained from a noiseless KSD and the exact solution.
Because the Krylov projection approximates smaller spaces, a projection error can be induced, particularly when the Krylov order $n$ is small.
In addition, the projection error can increase when the overlap between the reference vector $\ket{\phi_0}$ and the true eigenvector is small.
To estimate the lowest eigenvalue of the Hamiltonian $\hat{H}$, the practical execution of QKSD can produce a noisy outcome $\tilde{E}^{(n)}$, causing the total error to be described as a triangular inequality:
\begin{equation}
\label{eq:krylov_total_error}
\left| \tilde{E}^{(n)} - E \right| \le \left| \tilde{E}^{(n)} - E^{(n)} \right| + \left| E^{(n)} - E \right|.
\end{equation}
Here, $E$ is the lowest exact eigenvalue of the Hamiltonian (i.e., the lowest solution of Schr\"{o}dinger's equation), $E^{(n)}$ is the lowest noiseless Krylov approximated eigenvalue of order $n$, and the first and second terms on the right-hand side of Eq.\eqref{eq:krylov_total_error} are two error factors. In the following two subsections, we review recent works on these two error factors.

\subsection{Generalized Eigenvalue Perturbation}\label{sec:GEVP_PERT}
Compared with the eigenvalue problem, the generalized eigenvalue of a matrix pair $(\bm{H}, \bm{S})$ tends to be more sensitive to noise.
In QKSD, this noise can generally involves finite sampling, Suzuki-Trotterization, and other algorithm-dependent errors. 
For each element of the matrices in Eqs.\eqref{eq:prj_H} and \eqref{eq:overlap}, the errors result in matrix perturbations $\bm{\Delta_H}$ and $\bm{\Delta_S}$ as small deviations.
As demonstrated previously \cite{STEWART197969, SUN1982331, 10.1007/BFb0062105}, these perturbations can affect the solution of GEVP, expressed as:
\begin{equation}
\label{eq:perterb_gen_eigeq}
\tilde{\bm{H}} \tilde{\bm{c_j}} = \tilde{\bm{S}} \tilde{\bm{c_j}} \tilde{E}^{(n)}_j, 
\end{equation}
where $\tilde{E}^{(n)}_j = E^{(n)}_j + \Delta E^{(n)}_j$ is the $j^{\text{th}}$ perturbed eigenvalue, and $\tilde{\bm{c_j}}$ is corresponding perturbed eigenvector of the perturbed matrix pair $(\tilde{\bm{H}}:=\bm{H}+\bm{\Delta_H}, \tilde{\bm{S}}:=\bm{S}+\bm{\Delta_S})$.

A previous study by Mathias and Li \cite{Mathias2004TheDG} reported an improved perturbation theory from a geometrical perspective on the complex plane describing the quadratic form of the problem, $\bm{x}^{\dagger}(\bm{H}+i\bm{S})\bm{x}$, for any normalized vector $\bm{x}$.
Subsequently, Epperly et al. \cite{Theory_QSDK} utilized perturbation theory to describe QKSD perturbation with a real-time evolution ansatz.
They also proposed the thresholding that cuts off the least-significant singular values of matrix $\bm{S}$ to address the large condition number.
Because solving the GEVP involves the calculation of $\bm{S}^{-1/2}$, small singular values of $\bm{S}$ may amplify the noise in the matrix pair significantly.
Such cases in which $\bm{S}$ possesses small singular values are called \textit{ill-conditioned problems}.

\begin{figure}[t]
	\begin{algorithm}[H]
		\caption{Truncation of ill-conditioned basis of $\tilde{\bm{S}}$.}
		\label{alg:thresholding}
		\begin{algorithmic}[1]
			\Procedure{BasisThresholding}{$\tilde{\bm{H}}, \tilde{\bm{S}}, \epsilon$}
			\State $n \gets \texttt{dim}(\tilde{\bm{H}})$
			\State $\tilde{\bm{V}}, \tilde{\bm{\Lambda}} \gets \texttt{SVD}(\tilde{\bm{S}})$
			\State $\tilde{\bm{V}}_{>\epsilon} \gets \texttt{EmptyMatrix}(\texttt{size}=(n, 0))$
			\State $j=0$
			\For {$i \gets 0 \cdots n-1$}
			\If {$\tilde{\bm{\Lambda[i]}} > \epsilon$}
			\State $\tilde{\bm{V}}_{>\epsilon}.\texttt{AddColumn()}$
			\State $\tilde{\bm{V}}_{>\epsilon}[:,j] \gets \tilde{\bm{V}}[:,i]$
			\State $j \gets j + 1$
			\EndIf
			\EndFor
			\State $\tilde{\bm{A}}, \tilde{\bm{B}} \gets \tilde{\bm{V}}_{>\epsilon}^\dagger \tilde{\bm{H}}\tilde{\bm{V}}_{>\epsilon}, \tilde{\bm{V}}_{>\epsilon}^\dagger \tilde{\bm{S}}\tilde{\bm{V}}_{>\epsilon}$
			\State \Return $(\tilde{\bm{A}}, \tilde{\bm{B}})$
			\EndProcedure
		\end{algorithmic}
	\end{algorithm}
\end{figure}

Algorithm \ref{alg:thresholding} describes the detailed thresholding technique conducted by removing the singular basis of $\bm{S}$ with smaller singular values than a certain noise level, $\epsilon$.
Such process produces matrices $(\tilde{\bm{A}}, \tilde{\bm{B}}):=\mathtt{BasisThresholding}(\tilde{\bm{H}}, \tilde{\bm{S}}, \epsilon)$ with the size of ${n_\epsilon} \times {n_\epsilon}$ so that the overlap matrix $\tilde{\bm{B}}$ has singular values larger than $\epsilon$.
Therefore, its inverse matrix becomes numerically stable, although the dimensions of the problem are reduced (${n_\epsilon \le n}$).

The following generalized eigenvalue perturbation theory \cite[Corollary~3.6]{Mathias2004TheDG} was introduced to explain the perturbation between matrix pairs $(\tilde{\bm{A}}, \tilde{\bm{B}})$ and $(\bm{A}, \bm{B}):=\mathtt{BasisThresholding}(\bm{H}, \bm{S}, \epsilon)$, where the dimensions of both pairs are assumed to be the same as $n_{\epsilon}$.
Here, the spectral norm is denoted with the symbol $\|\cdot\|$.

\begin{figure}[t]
	\centering
	\includegraphics[width=0.8\linewidth]{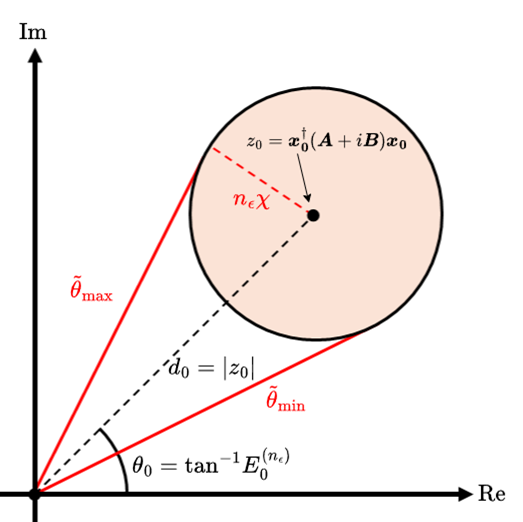}
	\caption{
		A geometrical description of Theorem \ref{theorem:generalized_eigenvalue_perturbation} (\cite[Corollary 3.6]{Mathias2004TheDG}).
		For the $n_\epsilon$-dimensional GEVP of ($\bm{A}, \bm{B}$) and a normalized eigenvector $\bm{x_0}$, a complex value $z_0=\bm{x_0}^\dagger(\bm{A}+i\bm{B})\bm{x_0}$ is defined.
		Then, the corresponding eigenangle is defined and determined as $\theta_0 := \mathrm{arg}(z_0) = \mathrm{tan}^{-1}E_0^{(n_\epsilon)}$, where $E_0^{(n_\epsilon)}$ is the eigenvalue.
		If the perturbation is induced, which respectively shifts ($\bm{A}, \bm{B}$) and $\bm{x_0}$ to ($\tilde{\bm{A}}, \tilde{\bm{B}}$) and $\tilde{\bm{x}}_{\bm{0}}$, one can show that the perturbation on $z_0$, $\tilde{z}_0:=\tilde{\bm{x}}_{\bm{0}}^\dagger(\tilde{\bm{A}}+i\tilde{\bm{B}})\tilde{\bm{x}}_{\bm{0}}$ is located within the disk of $|\tilde{z}_0 - z_0|\le n_\epsilon\chi$, where $\chi$ is defined as Eq.\eqref{eq:def_chi}.
		Therefore, the corresponding perturbed eigenangle is bounded within $[\tilde{\theta}_{\mathrm{min}}, \tilde{\theta}_{\mathrm{max}}]$, where $|\theta_0 - \tilde{\theta}_{\mathrm{min}}|=|\theta_0 - \tilde{\theta}_{\mathrm{max}}|=\mathrm{sin}^{-1}(n_\epsilon\chi / d_0)$ is found with the geometrical interpretation.
	}
	\label{fig:fig_pert_thm}
\end{figure}

\begin{theorem}[Generalized Eigenvalue Perturbation Theory, Corollary 3.6 of a previous report \cite{Mathias2004TheDG}.]
	\label{theorem:generalized_eigenvalue_perturbation}
	For the $n_\epsilon\times n_\epsilon$ matrix pair $(\bm{A}, \bm{B})$, whose eigenvalues are $E_0^{(n_\epsilon)}\le E_1^{(n_\epsilon)} \cdots \le E_{n_\epsilon-1}^{(n_\epsilon)}$, with positive definite $\bm{B}$, assume that the matrix pair $(\tilde{\bm{A}}, \tilde{\bm{B}})=(\bm{A}+\bm{\Delta_A}, \bm{B}+\bm{\Delta_B})$ is perturbed from $(\bm{A}, \bm{B})$, and $\chi$ is defined as
	\begin{equation}\label{eq:def_chi}
	\chi := \sqrt{\|\bm{\Delta_A}\|^2+\|\bm{\Delta_B}\|^2}.
	\end{equation}
	Assume that error bound $\chi$ is sufficiently small,
	\begin{equation}\label{eq:assume_err_bound}
	\sqrt{2}{n_\epsilon}\chi \le \lambda_{n_\epsilon},
	\end{equation}
	and the following gap condition holds:
	\begin{equation}\label{eq:assume_gap_condition}
	|\tan^{-1}E_1^{({n_\epsilon})} - \tan^{-1}E_0^{({n_\epsilon})}| \ge \sin^{-1}\frac{{n_\epsilon}\chi}{\lambda_{n_\epsilon}},
	\end{equation}
	where $\lambda_{n_\epsilon}$ is the smallest singular value of $\bm{B}$.
	Then, the lowest perturbed eigenangle of $(\tilde{\bm{A}}, \tilde{\bm{B}})$, $\tan^{-1}\tilde{E}_0^{({n_\epsilon})}$ satisfies the following bound:
	\begin{equation}
	\label{eq:gep_theorem_result}
	\left| \tan^{-1}E^{({n_\epsilon})}_0 - \tan^{-1}\tilde{E}^{({n_\epsilon})}_0 \right| \le \sin^{-1}\frac{{n_\epsilon}\chi}{d_0}.
	\end{equation}
	Here, $d_0^{-1}$ is the condition number of the eigenangle $\tan^{-1}E_0^{({n_\epsilon})}$, defined as
	\begin{equation}
	d_0^{-1} := |\bm{x}_0^{\dagger}(\bm{A}+i\bm{B})\bm{x}_0|^{-1},
	\end{equation}
	where $\bm{x}_0$ is the unit-norm eigenvector of $(\bm{A}, \bm{B})$ with the lowest eigenvalue.
\end{theorem}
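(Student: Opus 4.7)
The strategy is to follow the geometric picture of Fig.~\ref{fig:fig_pert_thm}: lift each eigenvalue to the argument of a complex number living in the upper half-plane, control the perturbation of that complex number, and translate a planar-distance bound into an angular one.

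First I would introduce the sesquilinear form $z(\bm{x}) := \bm{x}^{\dagger}(\bm{A}+i\bm{B})\bm{x}$ on the unit sphere. Because $\bm{B}$ is positive definite, every value $z(\bm{x})$ lies strictly in the open upper half-plane, and for any eigenvector $\bm{x}_j$ of $(\bm{A},\bm{B})$ a direct computation using $\bm{A}\bm{x}_j = E_j^{(n_\epsilon)}\bm{B}\bm{x}_j$ gives $z(\bm{x}_j) = (E_j^{(n_\epsilon)}+i)\,\bm{x}_j^{\dagger}\bm{B}\bm{x}_j$, so $\arg z(\bm{x}_j) = \tan^{-1}E_j^{(n_\epsilon)}$ and $d_0 = |z_0|$. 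The entire theorem then becomes a statement about $|\arg \tilde z_0 - \arg z_0|$.

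The analytic core is to show $|\tilde z_0 - z_0| \le n_\epsilon \chi$. At a fixed unit vector the matrix-perturbation contribution is elementary: $|\bm{x}^{\dagger}(\bm{\Delta_A}+i\bm{\Delta_B})\bm{x}|^2 = (\bm{x}^{\dagger}\bm{\Delta_A}\bm{x})^2 + (\bm{x}^{\dagger}\bm{\Delta_B}\bm{x})^2 \le \chi^2$, which is where the definition \eqref{eq:def_chi} is used. The dimensional factor $n_\epsilon$ must therefore absorb the rotation of the eigenvector $\bm{x}_0 \to \tilde{\bm{x}}_0$. To quantify that rotation I would pass to the equivalent standard Hermitian problem for $\bm{B}^{-1/2}\bm{A}\bm{B}^{-1/2}$ (well-defined because assumption \eqref{eq:assume_err_bound} keeps $\tilde{\bm{B}}$ positive definite), invoke a Davis--Kahan-type bound whose denominator is precisely the spectral gap secured by \eqref{eq:assume_gap_condition}, and expand $\tilde{\bm{x}}_0$ in the unperturbed eigenbasis so that summing over the $n_\epsilon$ components supplies the required constant.

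Once $|\tilde z_0 - z_0| \le n_\epsilon \chi$ is established, the conclusion is pure plane geometry: $\tilde z_0$ lies in the closed disk of radius $n_\epsilon \chi$ centered at $z_0$, which itself sits at distance $d_0$ from the origin, and the extremal angular displacement, realized by a tangent line from the origin to that disk, equals $\sin^{-1}(n_\epsilon \chi / d_0)$; this is exactly \eqref{eq:gep_theorem_result}. The main obstacle I anticipate is the second step: nailing down the sharp constant $n_\epsilon$ in the combined matrix-plus-eigenvector perturbation bound is the technical heart of Corollary~3.6 of \cite{Mathias2004TheDG}, and a clean treatment appears to require the full variational / interlacing machinery for definite pencils rather than a naive Davis--Kahan estimate applied term by term.
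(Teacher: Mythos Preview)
The paper does not prove Theorem~\ref{theorem:generalized_eigenvalue_perturbation} itself---it is quoted from \cite{Mathias2004TheDG}, with the only proof content being the geometric sketch in the caption of Fig.~\ref{fig:fig_pert_thm}---and your proposal reproduces that sketch step for step. Your caveat about the $n_\epsilon$ factor is well placed: in \cite{Mathias2004TheDG} that constant comes not from a Davis--Kahan eigenvector bound but from the norm of the matrix $\bm{X}$ diagonalizing $\bm{A}+i\bm{B}$, via the condition $r(\bm{\Delta_A},\bm{\Delta_B})\,\|\bm{X}\|^2 < c(\bm{A},\bm{B})$ that the paper records in Appendix~\ref{sec:appendix_generalized_eignevalue_perturbation}, so the variational/interlacing route for definite pencils you flag at the end is indeed the correct one.
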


A pictorial representation of Theorem \ref{theorem:generalized_eigenvalue_perturbation} is shown in Fig.\ref{fig:fig_pert_thm}.
Note that the smallest eigenvalue of $\bm{B}$ is assured to be larger than $\epsilon$ ($\lambda_{\epsilon}>\epsilon$) because $(\bm{A}, \bm{B})$ is thresholded from $(\bm{H}, \bm{S})$.
So, with a sufficiently small $\chi$, the assumption in Eq.\eqref{eq:assume_gap_condition} may accept relatively small gap between the eigenangles.
Although Theorem \ref{theorem:generalized_eigenvalue_perturbation} may explain the perturbation of $(\tilde{\bm{A}}, \tilde{\bm{B}})$ from the ideal $(\bm{A}, \bm{B})$ obtained by truncating the basis of $(\bm{H}, \bm{S})$ using Algorithm \ref{alg:thresholding}, the error magnitude $\chi$ could not still be explicitly obtained in terms of the error matrices $(\bm{\Delta_H}, \bm{\Delta_S})$.
Hence, another error magnitude is defined to perform error analysis using explicit matrices:
\begin{equation}
\label{eq:def_eta}
\eta := \sqrt{\|\bm{\Delta_H}\|^2+\|\bm{\Delta_S}\|^2}.
\end{equation}
Subsequently, the relationship between the explicit and implicit error magnitudes, $\eta$ and $\chi$, is obtained by the following \cite[Theorem~2.7]{Theory_QSDK}, which is
\begin{equation}
\label{eq:chi_bound}
\chi \le O\left(\eta^{\frac{1}{1+\alpha}}/n\right),
\end{equation}
subject to the condition
\begin{equation}
\label{eq:asymptotic_epsilon}
\epsilon = \Theta \left( \eta^{\frac{1}{1+\alpha}} \right),
\end{equation}
where $\alpha$ is a constant ranging from 0 to $1/2$.
Finally, the corresponding asymptotic perturbation bound is given by
\begin{equation}
\left| \tan^{-1}E^{(n\rightarrow {n_\epsilon})}_0 - \tan^{-1}\tilde{E}^{(n\rightarrow {n_\epsilon})}_0 \right| \le O\left(\eta^{\frac{1}{1+\alpha}}d_0^{-1} \right).
\label{eq:informal_eigenangle_error}
\end{equation}
Here, the superscript $(n\rightarrow {n_\epsilon})$ denotes the calculated eigenvalue obtained from the ${n_\epsilon}$-dimensional subspace produced by a thresholding from the $n$-dimensional space.

In summary, the perturbation bound in Eq.\eqref{eq:informal_eigenangle_error} indicates that the perturbation error is sublinear to the error matrix norms and condition number after the truncation of the basis.
Thus, with additional information about the error matrix norms $\|\bm{\Delta_H}\|$ and $\|\bm{\Delta_S}\|$, one can establish a sampling error analysis for the QKSD algorithm.
Appendix \ref{sec:appendix_generalized_eignevalue_perturbation} presents a detailed version \cite[Theorem~2.7]{Theory_QSDK}.

\subsection{Krylov Approximations}
Concerning the effect of projection to the small Krylov subspace, Krylov convergence, represented as the second factor on the right-hand side of Eq.\eqref{eq:krylov_total_error}, is reviewed here.
For the classical version, this effect was quantified \cite{10.2307/2156670} by plugging an ansatz based on Chebyshev polynomials.
In contrast, Stair et al. \cite{doi:10.1021/acs.jctc.9b01125} argued that the classical ($\hat{A}=\hat{H}$) and quantum ($\hat{A}=e^{-i\hat{H}\Delta_t}$) Krylov space could be close, up to $O(\Delta_t^2)$.
Their work allows the utilization of the classical result \cite{10.2307/2156670} for the analysis of QKSD when a small $\Delta_t$ is employed.
However, a small $\Delta_t$ to make the QKSD similar to the classical KSD causes ill-conditioning and degrades the perturbation error.
A remarkable QKSD error analysis was proposed by Epperly et al. \cite{Theory_QSDK}, which is advantageous compared with previous studies in two ways: the thresholding effect is considered, and it is directly analogous to classical KSD analysis \cite{10.2307/2156670} by replacing Chebyshev polynomials with trigonometric polynomials.
Their analysis is reviewed in the following.

\begin{figure*}[t!]
	\centering
	\includegraphics[width=\linewidth]{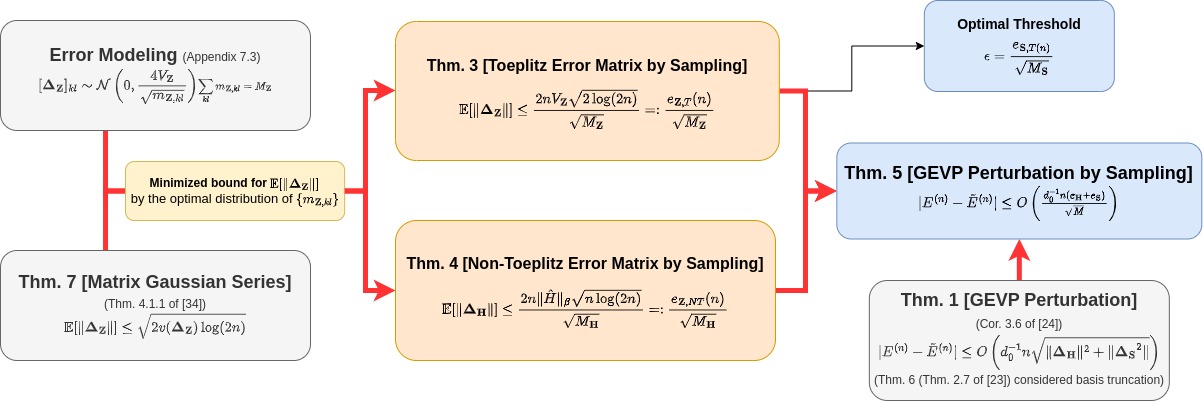}
	\caption{Overview of the main results.
		Error matrices are considered as Gaussian random matrices whose standard deviations are derived in Appendix \ref{sec:appendix_hadamard_sampling_variance}.
		The statistics of the error matrix norm is derived using \cite[Theorem~4.4.1]{MAL-048}, whose proofs are provided in Appendix \ref{sec:appendix_matrix_variance_statistics}.
		Accordingly, a bound of the expected norm is estimated with the optimal distribution of shots in Theorems \ref{theorem:toeplitz_error_matrix} and \ref{theorem:non_toeplitz_error_matrix} for each case of Toeplitz and non-Toeplitz construction.
		Based on the previously known GEVP perturbation theories \cite[Corollary~3.6]{Mathias2004TheDG} and \cite[Theorem~2.7]{Theory_QSDK}, we derive the GEVP perturbation caused by finite sampling.
		Additionally, optimal threshold parameter $\epsilon$ is suggested using Theorem \ref{theorem:toeplitz_error_matrix}.}
	\label{fig:error_analysis_flowchart}
\end{figure*}

\begin{theorem}[Quantum Krylov Error, Theorem 1.2 of  \cite{Theory_QSDK}.]\label{theorem:qksd_err}
	Let $\ket{\psi_0}\cdots \ket{\psi_{N-1}}$ and $E_0\le\cdots\le E_{N-1}$ be the eigenvectors and corresponding eigenvalues, respectively, for a Hamiltonian $\hat{H}$ in the Hilbert space (dimension: $N$). Then, the initial state is expanded on an eigenbasis,
	\begin{equation}
	\ket{\phi_0} = \sum_{j=0}^{N-1}\gamma_i\ket{\psi_i}.
	\end{equation}
	Suppose that the Krylov subspace is generated within time sequence $\{t_j:=\pi j/\Delta E_{N-1}\}_{j=-k}^k$ for a certain integer, $k$; then, the Krylov order can be given as $n=2k+1$, where $\Delta E_j := E_j -E_0$ is also defined.
	Overall, if the GEVP returned by Algorithm \ref{alg:thresholding} with the parameter $\epsilon$ in Eq.\eqref{eq:asymptotic_epsilon} is solved, the following upper bound for the error between the lowest eigenangle and the full Hilbert space solutions holds:
	\begin{multline}  \label{eq:krylov_convergence}
	\left|\tan^{-1}E_0^{(n)} - \tan^{-1} E_0\right|\\ \le O\left( \frac{1-|\gamma_0|^2}{|\gamma_0|^2} e^{-n O\left( \frac{\Delta E_1}{\Delta E_{N-1}} \right)} + \frac{\Delta E_{N-1}}{|\gamma_0|^2}\eta^{\frac{1}{1+\alpha}} \right).      
	\end{multline}
\end{theorem}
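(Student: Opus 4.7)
The plan is to apply the triangle inequality of Eq.\eqref{eq:krylov_total_error} to split the error into a noiseless Krylov projection contribution and a sampling perturbation contribution, each of which will produce one of the two summands of Eq.\eqref{eq:krylov_convergence}. Writing $E_0^{(n),\mathrm{ideal}}$ for the lowest eigenvalue returned by Algorithm \ref{alg:thresholding} applied to the noiseless pair $(\bm{H},\bm{S})$, one has
\[
|\tan^{-1}\!E_0^{(n)} - \tan^{-1}\!E_0| \le |\tan^{-1}\!E_0^{(n)} - \tan^{-1}\!E_0^{(n),\mathrm{ideal}}| + |\tan^{-1}\!E_0^{(n),\mathrm{ideal}} - \tan^{-1}\!E_0|,
\]
reducing the theorem to two independent bounds.

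For the sampling term I would apply Eq.\eqref{eq:informal_eigenangle_error} directly, which already supplies the $\eta^{1/(1+\alpha)}$ scaling; the remaining work is to bound the condition number $d_0^{-1}=1/|\bm{x}_0^{\dagger}(\bm{A}+i\bm{B})\bm{x}_0|$. Because $\bm{x}_0$ is the $\bm{S}$-normalized coefficient vector of the Krylov approximation to the ground eigenvector and the true ground-state weight in $\ket{\phi_0}$ is $|\gamma_0|^2$, a benign ground-state shift so that $\|\hat{H}\|\le \Delta E_{N-1}$ yields $|\bm{x}_0^{\dagger}\bm{B}\bm{x}_0|\gtrsim 1$ and $|\bm{x}_0^{\dagger}\bm{A}\bm{x}_0|\lesssim \Delta E_{N-1}/|\gamma_0|^2$, hence $d_0^{-1}\le O(\Delta E_{N-1}/|\gamma_0|^2)$. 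This gives the second summand of Eq.\eqref{eq:krylov_convergence}.

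For the projection term, the plan is variational. Any vector of the form $p(\hat H)\ket{\phi_0}$ with $p$ a degree-$k$ trigonometric polynomial lies in $\mathcal K_n$, because the time grid $t_j=\pi j/\Delta E_{N-1}$ spans exactly such Fourier modes and $n=2k+1$. The Rayleigh quotient then gives
\[
E_0^{(n),\mathrm{ideal}} - E_0 \le \frac{\sum_{j\ge 1}|\gamma_j|^2\,|p(E_j)|^2\,(E_j - E_0)}{\sum_j |\gamma_j|^2\,|p(E_j)|^2}.
\]
I would design $p$ as a Chebyshev-type filter: under the substitution $\theta = \pi E/\Delta E_{N-1}$ a trigonometric polynomial of degree $k$ becomes an algebraic polynomial in $\cos\theta$, and a rescaled Chebyshev polynomial peaked at $\theta_0=\pi E_0/\Delta E_{N-1}$ and bounded by unity on $[\theta_1,\pi]$ achieves $|p(E_j)/p(E_0)|^2\le e^{-nO(\Delta E_1/\Delta E_{N-1})}$ for all $j\ge 1$. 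Combining with $\sum_{j\ge 1}|\gamma_j|^2 = 1-|\gamma_0|^2$ and $E_j-E_0\le\Delta E_{N-1}$, and finally invoking $|\tan^{-1}x-\tan^{-1}y|\le |x-y|$ (with the $\Delta E_{N-1}$ prefactor absorbed into the $O$ of the first summand), produces the first contribution.

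The main obstacle is the construction and sharp analysis of the trigonometric filter polynomial: one must track how the classical Chebyshev acceleration rate transfers through the nonlinear change of variables $E\mapsto\cos(\pi E/\Delta E_{N-1})$ to give an exponent linear in the gap ratio $\Delta E_1/\Delta E_{N-1}$, which requires linearizing near $\theta_0$ and carefully bounding the shifted Chebyshev polynomial on $[\cos\theta_1,-1]$. A secondary subtlety is that Algorithm \ref{alg:thresholding} restricts the variational minimization to the retained subspace after ill-conditioned singular directions of $\bm{S}$ are discarded; one must verify (as done by Epperly et al.) that the optimal trial vector retains sufficient overlap with the kept singular directions so that the Rayleigh-quotient bound on $E_0^{(n),\mathrm{ideal}}$ still applies after truncation.
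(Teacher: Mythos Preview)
The paper does not prove this theorem at all: it is stated purely as a review of Theorem~1.2 of Epperly et al.~\cite{Theory_QSDK}, with only a one-sentence gloss afterward explaining that the two summands in Eq.~\eqref{eq:krylov_convergence} correspond respectively to the full-Krylov approximation error and the effect of basis truncation. There is therefore no in-paper proof to compare your proposal against.

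That said, your sketch tracks the structure of the original argument in~\cite{Theory_QSDK} quite faithfully: the triangle-inequality split, the use of the perturbation bound Eq.~\eqref{eq:informal_eigenangle_error} for the second summand, and a variational Rayleigh-quotient estimate with a trigonometric (Chebyshev-mapped) filter polynomial for the first summand are exactly the ingredients employed there. One point to tighten: your claim that $|\bm{x}_0^{\dagger}\bm{B}\bm{x}_0|\gtrsim 1$ is not automatic for a unit-norm $\bm{x}_0$, since after thresholding one only has $\bm{x}_0^{\dagger}\bm{B}\bm{x}_0\ge\lambda_{n_\epsilon}>\epsilon$; the $|\gamma_0|^{-2}$ dependence in $d_0^{-1}$ arises instead from relating the normalization of the Krylov coefficient vector to the ground-state overlap $|\gamma_0|^2$, which requires the additional structural argument carried out in~\cite{Theory_QSDK}. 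The secondary subtlety you flag---that the optimal trial polynomial must survive the truncation of Algorithm~\ref{alg:thresholding}---is indeed the delicate step, and is handled in the cited reference rather than in the present paper.
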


In the right-hand side of Eq.\eqref{eq:krylov_convergence}, the first term corresponds to the approximation error between the full Krylov subspace and the Hilbert space, whereas the second term corresponds to the subspace after basis truncation.
Furthermore, if the effect of thresholding is small, it implies that the Krylov error bound decays exponentially to the Krylov order, with the prefactor being inversely proportional to the overlap, $|\gamma_0|^2$. 

\section{Main Results}\label{sec:main_result}

When evaluating the elements of matrices $\bm{H}$ in Eq.\eqref{eq:QKD_H_seq} and $\bm{S}$ in Eq.\eqref{eq:QKD_S_seq}, the Hadamard test subroutine intrinsically suffers from sampling errors, even if the  other noises such as hardware noise and Trotter error are reduced to a negligible level.
Although Eq.\eqref{eq:krylov_convergence} shows that the QKSD algorithm converges quickly toward the full Hilbert space solution, it is necessary to assess the tolerance to perturbation errors resulting from finite sampling.
Thus, in this section, Krylov perturbations are mainly investigated with respect to a finite number of shots.
Fig.\ref{fig:error_analysis_flowchart} illustrates an overview of the structure of this section.
First, the sampling error model using a Gaussian random matrix is described, i.e., the element in the Gaussian random matrix is regarded as the sampling error.
The variance of the Gaussian element is set from that of the binomial distribution whose random variable is the averaged outcome obtained by implementing a finite number of Hadamard tests. 
Then, the statistical behavior of the error matrix norms, $\bm{\Delta_H}$ and $\bm{\Delta_S}$, is formulated.
Last, the norm statistics and Theorem \ref{theorem:generalized_eigenvalue_perturbation} are subsequently used to suggest a nearly optimal threshold to mitigate the effect of ill-conditioning bases and finally derive the sampling perturbation bound of the QKSD.

\subsection{Error Model}
Random matrix theory is introduced to address the behavior of random matrices whose elements consist of sampling errors.
Because the single-shot outcome of Hadamard test is binary, the sampling average follows a binomial distribution.
For example, if the real part of $\bm{S}_{kl}$ is evaluated using $m_{\bm{S},kl}^{(\mathrm{r})}$ samplings, its noisy element follows a binomial distribution $\textrm{Re}[\tilde{\bm{S}}_{kl}]\sim 2\textrm{Bin}\left(m_{\bm{S},kl}^{(\mathrm{r})}, p^{(\mathrm{r})}_{kl}\right)/m_{\bm{S},kl}^{(\mathrm{r})}-1$, where $p^{(\mathrm{r})}_{kl}$ denotes the true probability that exactly estimates the matrix element ($2p^{(\mathrm{r})}_{kl}-1=\textrm{Re}[\bm{S}_{kl}]$).
Subsequently, if $m_{\bm{S},kl}^{(\mathrm{r})}$ is sufficiently large, $\mathrm{Re}[\tilde{\bm{S}}_{kl}]$ can be approximated by a Gaussian distribution.
Therefore, if we consider the sampling of the imaginary part in the similar way, the off-diagonal elements in the first row ($[\bm{\Delta_S}]_{0l}$ for $l>0$) are expressed as complex Gaussian random variables, where the real and imaginary parts are independent:
\begin{equation}\label{eq:delta_s}
[\bm{\Delta_S}]_{0l} \sim \mathcal{N}\left(0, \sigma_{\bm{S},0l}^{(\mathrm{r})2}\right) + i\mathcal{N}\left(0, \sigma_{\bm{S},0l}^{(\mathrm{i})2}\right).
\end{equation}
Here, $\sigma_{\bm{S},0l}^{(\mathrm{r})2}$ and $\sigma_{\bm{S},0l}^{(\mathrm{i})2}$ are the variances of $\mathrm{Re}[\tilde{\bm{S}}_{0l}]$ and $\mathrm{Im}[\tilde{\bm{S}}_{0l}]$, respectively.
The other rows are determined by the first one, because the residual matrix $\bm{\Delta_S}$ is Toeplitz and Hermitian.
Specifically, the upper triangular part ($k<l$) is assigned as $[\bm{\Delta_S}]_{kl}=[\bm{\Delta_S}]_{0,l-k}$, while the lower triangular part is its transposed conjugate.
Furthermore, we set the error in the diagonal elements to zero 
\begin{equation}\label{eq:diagonal_S}
[\bm{\Delta_S}]_{kk}=0,
\end{equation} 
because $\bm{S}_{kk}=\braket{\phi_k|\phi_k}=1$ is determined without sampling.

Meanwhile, for the Hamiltonian matrix, unitary fragments are independently measured as shown in Eq.\eqref{eq:QKD_H_seq}.
This fragmentation results in the residual matrix whose upper triangular elements ($k < l$) are
\begin{equation}\label{eq:delta_h}
\begin{split}
[\bm{\Delta}_{\bm{H}}]_{kl} &\sim \sum_{j=1}^{N_\beta}\beta_j\left(\mathcal{N}\left(0, \sigma_{\bm{H},kl;j}^{(\mathrm{r})2}\right) + i \mathcal{N}\left(0, \sigma_{\bm{H},kl;j}^{(\mathrm{i})2}\right)\right)\\
&=\mathcal{N}\left(0, \sigma_{\bm{H},kl}^{(\mathrm{r})2}\right) + i \mathcal{N}\left(0, \sigma_{\bm{H},kl}^{(\mathrm{i})2}\right),
\end{split}
\end{equation}
where $\sigma_{\bm{H},kl;j}^{(\mathrm{r})2}$ and $\sigma_{\bm{H},kl;j}^{(\mathrm{i})2}$ denote the variances associated with measuring real and imaginary part of $\braket{\phi_k|\hat{U}_j|\phi_l}$ respectively, and $\sigma_{\bm{H},kl}^{(\mathrm{r,i})2}:=\sum_j \beta_j^2 \sigma_{\bm{H},kl;j}^{(\mathrm{r,i})2}$ is defined.
Due to the Hermiticity of $\tilde{\bm{H}}$, the lower triangular elements ($k > l$) are assigned as $[\bm{\Delta_H}]_{kl}=[\bm{\Delta_H}]_{lk}^*$.
However, unlike the case of $\bm{\Delta_S}$, the Toeplitz property of $\bm{\Delta_H}$ depends on the construction method.
If $\tilde{\bm{H}}$ is assumed to be Toeplitz, only the elements in the first row ($k=0$) are sampled from the distribution in Eq.\eqref{eq:delta_h}, which then determines the other elements.
Furthermore, because the diagonal elements are real, the Hadamard tests for their imaginary parts are skipped, resulting in the diagonal residual elements of
\begin{equation}\label{eq:diagonal_H}
[\bm{\Delta_H}]_{kk}\sim \mathcal{N}(0, \sigma_{\bm{H},kk}^{(\mathrm{r})2}).
\end{equation}

The variance of each Hadamard test is governed by the standard quantum limit (SQL), which is inversely proportional to the number of shots allocated.
However, since a single matrix element is estimated by multiple Hadamard tests, a proper allocation of shots can minimize the variance within the SQL.
For example, let us assume $m_{\bm{S},0l}=m_{\bm{S},0l}^{(\mathrm{r})}+m_{\bm{S},0l}^{(\mathrm{i})}$ shots are given to measure $[\tilde{\bm{S}}]_{0l}$ (Eq.\eqref{eq:delta_s}), where $m_{\bm{S},0l}^{(\mathrm{r})}$ and $m_{\bm{S},0l}^{(\mathrm{i})}$ are the number of shots to measure real and imaginary part, respectively.
Then, the variance is determined as
\begin{equation}
\sigma_{\bm{S},0l}^{(\mathrm{r})2}+\sigma_{\bm{S},0l}^{(\mathrm{i})2}=\frac{1-\mathrm{Re}[[\bm{S}]_{0l}]^2}{m_{\bm{S},0l}^{(\mathrm{r})}}+\frac{1-\mathrm{Im}[[\bm{S}]_{0l}]^2}{m_{\bm{S},0l}^{(\mathrm{i})}},
\end{equation}
and is minimized by $m_{\bm{S},0l}^{(\mathrm{r})}\propto (1-\mathrm{Re}[[\bm{S}]_{0l}]^2)^{1/2}$ and $m_{\bm{S},0l}^{(\mathrm{i})}\propto (1-\mathrm{Im}[[\bm{S}]_{0l}]^2)^{1/2}$.
However, such shot allocation cannot be determined in advance because $[\bm{S}]_{0l}$ is not yet identified.
Therefore, instead of using the exact value of $[\bm{S}]_{0l}$, its Haar-averaged value can be used to estimate the shot allocation, which results in $m_{\bm{S},kl}^{(\mathrm{r})} = m_{\bm{S},kl}^{(\mathrm{i})}$.
The optimization procedure for the Hadamard test variance is detailed in Appendix \ref{sec:appendix_hadamard_sampling_variance}.

For the case of Hamiltonian matrix, the shots are allocated to each unitary fragment: $m_{\bm{H},kl}=\sum_{j}(m_{\bm{H},kl;j}^{(\mathrm{r})} + m_{\bm{H},kl;j}^{(\mathrm{i})}$), where $m_{\bm{H},kl;j}^{(\mathrm{r,i})}$ shots are allocated to measure the real and imaginary part of $\braket{\phi_k|\hat{U}_j|\phi_l}$.
In Appendix \ref{sec:hamiltonian_overlap}, we showed that $m_{\bm{H},kl;j}^{(\mathrm{r,i})}\propto |\beta_j|$ is the optimal allocation, assuming the value of $\braket{\phi_k|\hat{U}_j|\phi_l}$ is unknown.

For average-case complexity analysis, we take Haar-averaging over $\ket{\phi_k}$ and $\ket{\phi_l}$ and define the averaged variance for the $(k,l)$ element of matrix $\bm{\Delta_Z}\in\{\bm{\Delta_H}, \bm{\Delta_S}\}$:
\begin{equation}
\sigma_{\bm{\bm{Z}},kl}^2 := \mathbb{E}_{\ket{\phi_k}, \ket{\phi_l}\sim \mathcal{H}^2}\left[\sigma_{\bm{Z},kl}^{(\mathrm{r})2}+\sigma_{\bm{Z},kl}^{(\mathrm{i})2}\right].
\end{equation} 
As a consequence, the averaged sampling variances are obtained as
\begin{align}
\sigma_{\bm{H},kl}^2 =& \frac{2(2-\delta_{kl})\|\hat{H}\|_\beta^2}{m_{\bm{H},kl}} \label{eq:var_mat_elem_H}\\
\sigma_{\bm{S},kl}^2 =& \frac{4(1-\delta_{kl})}{m_{\bm{S},kl}},\label{eq:var_mat_elem_S}
\end{align}
where $\|\hat{H}\|_\beta:=\sum_j |\beta_j|$ is the 1-norm of the unitary partitioning (Eq.\eqref{eq:pauli_ham_grp}).
Also, $\delta_{kl}$ equals to one if $k=l$ or zero elsewhere and appears because diagonal elements are estimated in simpler ways (Eqs.\eqref{eq:diagonal_S} and \eqref{eq:diagonal_H}).
Appendix \ref{sec:appendix_hadamard_sampling_variance} shows the detailed derivations of Eqs.\eqref{eq:var_mat_elem_H} and \eqref{eq:var_mat_elem_S}.

\subsection{Norm Behavior of Error Matrices}
In this section, by employing the aforementioned approach of modeling random matrices, the norm statistical properties of such matrices are derived, as matrix norm is used in the perturbation theorem (Theorem \ref{theorem:generalized_eigenvalue_perturbation}) and Eq.\eqref{eq:def_eta}.
As an instance of random matrix theory, it is well known that the $n\times n$ Gaussian matrix with independent and zero-mean elements has a spectral norm asymptotically growing as fast as $O(\sigma_{max}\sqrt{n})$ with high probability \cite[Theorem~4.4.5]{vershynin_2018}, where $\sigma_{max}$ is the maximum value of the standard deviations of the matrix elements.
However, to obtain a non-asymptotic bound and impose the Hermitian and Toeplitz conditions, we use a slightly different approach \cite[Theorem~4.1.1]{MAL-048}, where the matrix concentration is based on the matrix Laplace transform.

As reviewed in Section \ref{sec:QKSD}, matrices $\bm{H}$ and $\bm{S}$ are constructed using a series of Hadamard tests.
Specifically, while $\bm{S}$ is always constructed as a Toeplitz matrix, the matrix $\bm{H}$ can be chosen as either Toeplitz or non-Toeplitz depending on whether a precise simulation of $e^{-i\hat{H}n\Delta_t}$ is available.
First, one considers an $n\times n$ Hermitian Toeplitz matrix constructed from a sequence of $n$ complex values, or, equivalently, $2n-1$ real values.
By treating these values as complex Gaussian random variables whose variances are described in Eqs.\eqref{eq:var_mat_elem_H} and \eqref{eq:var_mat_elem_S}, one can then find the statistical behavior of $\|\bm{\Delta_Z}\|$.
In addition, the bound on its expectation value, shown in \cite[Theorem~4.1.1]{MAL-048}, is minimized with a proper distribution of the total samplings $M_{\bm{Z}}$ to the Hadamard test of each matrix element.
Appendices \ref{sec:Toeplitz_mat_theory_proof} and \ref{sec:Non_Toeplitz_mat_theory_proof} describe the results and provide detailed proofs of the following theorems.

\begin{theorem}[Toeplitz Error Matrix Obtained by Sampling]
	\label{theorem:toeplitz_error_matrix}
	Supposing that the Toeplitz Hermitian matrix $\bm{Z}\in\{\bm{H}, \bm{S}\}$ is generated by a sequence of Hadamard tests with a given total number of samplings $M_{\bm{Z}}$, the bound for the expected norm of noise matrix $\|\bm{\Delta_Z}^{(\mathrm{t})}\|$, suggested previously \cite[Theorem~4.1.1]{MAL-048}, can be minimized by adequately distributing $M_{\bm{Z}}$ to each Hadamard test setting:
	\begin{align}
	m_0 = m_0^{(\mathrm{r})} &= \frac{M_{\bm{Z}}\delta_{\bm{ZH}}}{\sqrt{2}(n-1)+1}, \\
	m_{k>0}^{(\mathrm{r})}= m_{k>0}^{(\mathrm{i})} &= \frac{M_{\bm{Z}}}{2(n-1)+\sqrt{2}\delta_{\bm{ZH}}}.
	\end{align}
	Here, $m_k$ denotes the number of samplings for the Hadamard test to obtain the $k^{\text{th}}$ superdiagonal and subdiagonal elements, and the superscripts $(\mathrm{r})$ and $(\mathrm{i})$ correspond to the real and imaginary Hadamard test configurations.
	Also, $\delta_{\bm{ZH}}=1$ for the case of $\bm{Z}=\bm{H}$ and zero otherwise.
	Consequently, the optimal expectation bound becomes
	\begin{multline}\label{eq:toeplitz_error_matrix_norm_bound}
	\mathbb{E}[\|\bm{\Delta_Z}^{(\mathrm{t})}\|] \le \frac{2V_{\bm{Z}}(\sqrt{2}n+\delta_{\bm{ZH}}-\sqrt{2})\sqrt{\log(2n)}}{\sqrt{M_{\bm{Z}}}} \\ \lesssim \frac{2nV_{\bm{Z}}\sqrt{2\log(2n)}}{\sqrt{M_{\bm{Z}}}} =: \frac{e_{\bm{Z}}^{(\mathrm{t})}(n)}{\sqrt{M_{\bm{Z}}}}
	\end{multline}
	where $V_{\bm{Z}}$ is the variance prefactor depending on matrix $\bm{Z}$, specifically, $V_{\bm{H}}=\|\hat{H}\|_\beta$ and $V_{\bm{S}}=1$, and $e_{\bm{Z}}^{(\mathrm{t})}(n):=2nV_{\bm{Z}}\sqrt{2\log(2n)}$.
\end{theorem}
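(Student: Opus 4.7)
The plan is to recognize $\bm{\Delta_Z}^{(\mathrm{t})}$ as a matrix-valued Gaussian series, apply the matrix concentration bound of Tropp (Theorem~4.1.1 of \cite{MAL-048}) to bound $\mathbb{E}\|\bm{\Delta_Z}^{(\mathrm{t})}\|$ in terms of the individual shot counts, and then perform a constrained optimization over the shot allocations subject to the total budget $M_{\bm{Z}}$.

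First, I would exploit the Hermitian Toeplitz structure so that the matrix has only $n$ complex independent parameters, namely its first row. Introducing the fixed Hermitian Toeplitz templates $\bm{A}_0=\bm{I}$ and, for $k\ge 1$, $\bm{A}_k$ carrying unit entries on the $k$-th super- and sub-diagonals together with $\bm{B}_k$ carrying $+i$ on the $k$-th super-diagonal and $-i$ on the sub-diagonal, Eqs.~\eqref{eq:delta_s}--\eqref{eq:diagonal_H} yield the representation
\begin{equation*}
\bm{\Delta_Z}^{(\mathrm{t})} = \sigma_{\bm{Z},00}^{(\mathrm{r})}\xi_0\bm{A}_0\delta_{\bm{ZH}} + \sum_{k=1}^{n-1}\bigl(\sigma_{\bm{Z},0k}^{(\mathrm{r})}\xi_k^{(\mathrm{r})}\bm{A}_k + \sigma_{\bm{Z},0k}^{(\mathrm{i})}\xi_k^{(\mathrm{i})}\bm{B}_k\bigr),
\end{equation*}
with independent standard Gaussians $\xi$ and variances inversely proportional to the corresponding shot counts. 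Tropp's Gaussian-series bound then gives $\mathbb{E}\|\bm{\Delta_Z}^{(\mathrm{t})}\|\le\sqrt{2v\log(2n)}$ with matrix variance $v=\|\sum_k \sigma_{\bm{Z},0k}^{(\mathrm{r})2}\bm{A}_k^2+\sigma_{\bm{Z},0k}^{(\mathrm{i})2}\bm{B}_k^2\|$.

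The second step is to bound $v$ explicitly. The key algebraic observation is that $\bm{A}_k^2+\bm{B}_k^2$ is diagonal: its $\pm 2k$-diagonal contributions cancel exactly because squaring the imaginary phases of $\bm{B}_k$ produces $-1$, whereas $\bm{A}_k^2$ contributes $+1$ on the same positions. Under the balanced choice $m_k^{(\mathrm{r})}=m_k^{(\mathrm{i})}$---already motivated in Appendix~\ref{sec:appendix_hadamard_sampling_variance} by the Haar-averaged equality of the real and imaginary Hadamard-test variances---the summed variance matrix is diagonal and its spectral norm reduces to the largest diagonal entry, which can be bounded slot-by-slot. Substituting the explicit variances from Eqs.~\eqref{eq:var_mat_elem_H}--\eqref{eq:var_mat_elem_S} expresses $v$ in the form $v\le c\,V_{\bm{Z}}^2\bigl(\sum_{k\ge 1}1/m_k^{(\mathrm{r})}+(1/m_0)\delta_{\bm{ZH}}\bigr)$ for a numerical constant $c$, with the diagonal contribution entering only for $\bm{H}$.

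Finally, I would minimize this bound over shot allocations subject to the constraint $m_0\delta_{\bm{ZH}}+2\sum_{k\ge 1}m_k^{(\mathrm{r})}=M_{\bm{Z}}$. This is a standard Cauchy--Schwarz (equivalently Lagrange-multiplier) exercise whose stationarity conditions force $m_k^{(\mathrm{r})}$ to be independent of $k>0$ and fix a specific ratio between $m_0$ and $m_{k>0}^{(\mathrm{r})}$; solving the constraint reproduces the $\sqrt{2}$ ratio stated in the theorem, and inserting the optimal allocation into the bound yields Eq.~\eqref{eq:toeplitz_error_matrix_norm_bound}. The main obstacle I anticipate is the bookkeeping of numerical constants---tracking the factors of $2$ and $\sqrt{2}$ arising from the paired super/sub-diagonal structure, from the separate real and imaginary Hadamard tests, and from the asymmetric treatment of the diagonal slot (real-only for $\bm{H}$, absent for $\bm{S}$)---so that the exact prefactor $\sqrt{2}n+\delta_{\bm{ZH}}-\sqrt{2}$ emerges rather than a looser asymptotically-equivalent surrogate.
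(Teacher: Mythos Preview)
Your overall approach mirrors the paper's: express $\bm{\Delta_Z}^{(\mathrm{t})}$ as a Gaussian matrix series, apply Tropp's bound, exploit the diagonal structure of the variance matrix under the balanced choice $m_k^{(\mathrm{r})}=m_k^{(\mathrm{i})}$, and optimize the allocation. However, there is a genuine gap in your second and third steps. You assert that the matrix variance satisfies $v \le c\,V_{\bm{Z}}^2\bigl(\sum_{k\ge1}1/m_k^{(\mathrm{r})}+\delta_{\bm{ZH}}/m_0\bigr)$ for some constant $c$, and then treat the optimization as minimizing this single convex expression. But the diagonal entries of the variance matrix are \emph{different} functions of the allocation: the $l$-th entry is
\[
v_l(\bm{m})=2V_{\bm{Z}}^2\Bigl(\tfrac{\delta_{\bm{ZH}}}{m_0}+\sum_{k=1}^{l-1}\tfrac{2}{m_k^{(\mathrm{r})}}+\sum_{k=l}^{n-l}\tfrac{1}{m_k^{(\mathrm{r})}}\Bigr),
\]
so that $v=\max_l v_l$ is a genuine minimax objective. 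No single sum of the form you wrote is simultaneously a valid upper bound and tight enough to produce the exact constants: the crude bound obtained by replacing every coefficient by $2$ gives the ratio $m_0/m_{k>0}=1/2$ and a prefactor $2(n-1)+\delta_{\bm{ZH}}$ rather than $\sqrt{2}(n-1)+\delta_{\bm{ZH}}$, while the corner value $v_1$ (which is what your expression with $c=2$ reproduces) is \emph{not} an upper bound on $\max_l v_l$ for arbitrary $\bm{m}$ (e.g.\ $v_2-v_1=\sigma_1^2-\sigma_{n-1}^2$ can be positive).

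The paper closes this gap with an explicit minimax argument: it minimizes $v_1$ alone (yielding the uniform off-diagonal allocation $\bm{m}^{(1)}$) and then verifies that at this particular allocation every other $v_l$ coincides with $v_1$, whence $\min_{\bm{m}}\max_l v_l=v_1(\bm{m}^{(1)})$. This verification---that all diagonal entries agree once the off-diagonal shot counts are equal---is the missing ingredient in your plan, and it is structural rather than mere constant bookkeeping.
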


Similar to Theorem \ref{theorem:toeplitz_error_matrix}, the non-Toeplitz version of the error matrix statistics is also presented below. However, independent of the construction method, only the matrix $\bm{H}$ is considered here because $\bm{S}$ is always Toeplitz.

\begin{theorem}[Non-Toeplitz Error Matrix Obtained by Sampling]
	\label{theorem:non_toeplitz_error_matrix}
	Assuming that a sequence of Hadamard tests generates the non-Toeplitz Hermitian $\bm{H}$ with a total of $M_{\bm{H}}$ samplings, the bound for the expected norm of the noise matrix $\|\bm{\Delta_H}^{(\mathrm{nt})}\|$, suggested previously \cite[Theorem~4.1.1]{MAL-048}, is minimized by the optimal distribution
	\begin{align}
	m_{kk}^{(\mathrm{r})}=m_{k<l}^{(\mathrm{r})}=m_{k<l}^{(\mathrm{i})}=\frac{M_{\bm{H}}}{n^2},
	\end{align}
	where $m_{kl}^{\mathrm{(r,i)}}$ denotes the number of samplings to evaluate the real or imaginary part of the $(k,l)$ and $(l,k)$ sites of $\bm{H}$. The corresponding optimal bound for the expected norm is
	\begin{equation}\label{eq:non_toeplitz_error_matrix_norm_bound}
	\mathbb{E}[\|\bm{\Delta_H}^{(\mathrm{nt})}\|] \le \frac{2n\|\hat{H}\|_\beta\sqrt{n\log(2n)}}{\sqrt{M_{\bm{H}}}} =: \frac{e_{\bm{H}}^{(\mathrm{nt})}(n)}{\sqrt{M_{\bm{H}}}},
	\end{equation}
	where $e_{\bm{H}}^{(\mathrm{nt})}(n):=2n\|\hat{H}\|_\beta\sqrt{n\log(2n)}$.
\end{theorem}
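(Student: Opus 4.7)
The plan is to follow the same matrix-concentration strategy as in Theorem~\ref{theorem:toeplitz_error_matrix}, but without the Toeplitz constraint, so that every independent real/imaginary Hadamard observable contributes a separate Hermitian summand. First, I would write $\bm{\Delta_H}^{(\mathrm{nt})}$ explicitly as a finite Gaussian Hermitian series
\begin{equation*}
\bm{\Delta_H}^{(\mathrm{nt})} = \sum_{k\le l}\xi_{kl}^{(\mathrm{r})}\bm{E}_{kl}^{(\mathrm{r})} + \sum_{k<l}\xi_{kl}^{(\mathrm{i})}\bm{E}_{kl}^{(\mathrm{i})},
\end{equation*}
with $\bm{E}_{kl}^{(\mathrm{r})}=\bm{e}_k\bm{e}_l^{\dagger}+\bm{e}_l\bm{e}_k^{\dagger}$ for $k<l$, $\bm{E}_{kk}^{(\mathrm{r})}=\bm{e}_k\bm{e}_k^{\dagger}$, and $\bm{E}_{kl}^{(\mathrm{i})}=i(\bm{e}_k\bm{e}_l^{\dagger}-\bm{e}_l\bm{e}_k^{\dagger})$, whose independent zero-mean Gaussian coefficients inherit the per-element variances of Eq.~\eqref{eq:var_mat_elem_H} controlled by the per-site shot allocations $m_{kl}^{(\mathrm{r,i})}$.

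Second, I would invoke \cite[Theorem~4.1.1]{MAL-048}, which for such a Gaussian Hermitian series yields
\begin{equation*}
\mathbb{E}[\|\bm{\Delta_H}^{(\mathrm{nt})}\|]\le \sqrt{2\,v(\bm{\Delta_H}^{(\mathrm{nt})})\log(2n)},
\end{equation*}
where $v:=\|\mathbb{E}[(\bm{\Delta_H}^{(\mathrm{nt})})^2]\|$. Exploiting $(\bm{E}_{kl}^{(\mathrm{r})})^2=\bm{e}_k\bm{e}_k^{\dagger}+\bm{e}_l\bm{e}_l^{\dagger}$ (and the analogous identity for the imaginary piece) for $k<l$, together with $(\bm{E}_{kk}^{(\mathrm{r})})^2=\bm{e}_k\bm{e}_k^{\dagger}$, one finds that $\mathbb{E}[(\bm{\Delta_H}^{(\mathrm{nt})})^2]$ is diagonal, with the $k$-th entry equal to the sum of variances along row $k$. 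Hence $v=\max_k\sum_{l}(\sigma_{\bm{H},kl}^{(\mathrm{r})2}+\sigma_{\bm{H},kl}^{(\mathrm{i})2})$, which after substitution becomes a weighted sum of $\|\hat{H}\|_\beta^2/m_{kl}^{(\mathrm{r,i})}$ terms.

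Third, I would minimize this bound under the budget constraint $\sum_{k\le l}(m_{kl}^{(\mathrm{r})}+(1-\delta_{kl})m_{kl}^{(\mathrm{i})})=M_{\bm{H}}$. In contrast to Theorem~\ref{theorem:toeplitz_error_matrix}, the entries here are fully independent, so a Lagrange-multiplier argument forces a pointwise uniform allocation. Counting the $n$ diagonal real observables plus the $n(n-1)$ off-diagonal real and imaginary observables yields $n^2$ independent Hadamard settings and thus $m_{kl}^{(\mathrm{r,i})}=M_{\bm{H}}/n^2$. Substituting back gives $v\sim 2n\|\hat{H}\|_\beta^2\cdot n/M_{\bm{H}}$, and the square-root bound collapses to the claimed $2n\|\hat{H}\|_\beta\sqrt{n\log(2n)}/\sqrt{M_{\bm{H}}}$, matching Eq.~\eqref{eq:non_toeplitz_error_matrix_norm_bound}.

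The main obstacle will be the constant bookkeeping induced by the $\delta_{kl}$ factor in Eq.~\eqref{eq:var_mat_elem_H}: the diagonal entries have half the variance of the off-diagonal ones, slightly breaking the symmetry that justifies a strictly uniform allocation. I would handle this in the same spirit as Theorem~\ref{theorem:toeplitz_error_matrix}, noting that the $O(n)$ diagonal contribution is dominated by the $O(n^2)$ off-diagonal contribution and can be absorbed into lower-order corrections without changing the leading coefficient. A careful check that \cite[Theorem~4.1.1]{MAL-048} applies with precisely the $\sqrt{2\log(2n)}$ factor used here, rather than one of the cognate bounds stated elsewhere in the same monograph, is the only other subtlety worth verifying at the end.
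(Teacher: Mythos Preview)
Your outline follows the paper's proof in Appendix~\ref{sec:Non_Toeplitz_mat_theory_proof} almost step for step: the same Hermitian Gaussian series, the same observation that $\mathbb{E}[(\bm{\Delta_H}^{(\mathrm{nt})})^2]$ is diagonal so that $v=\max_k v_k(\bm{m})$, and the same appeal to \cite[Theorem~4.1.1]{MAL-048}. Two points deserve sharpening.

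First, the optimization is a \emph{minimax}, $\min_{\bm{m}}\max_k v_k(\bm{m})$, and each off-diagonal budget $m_{(k,l)}$ enters \emph{two} row sums $v_k$ and $v_l$; a bare Lagrange multiplier on ``the objective'' is not immediately justified. The paper handles this cleanly by minimizing the average $w_A(\bm{m})=n^{-1}\sum_k v_k(\bm{m})$ (which \emph{is} separable and amenable to Lagrange), obtaining $m_{(k,k)}=M_{\bm{H}}/n^2$ and $m_{(k,l)}=2M_{\bm{H}}/n^2$ for $k<l$, and then checking that at this point all $v_k$ coincide, so $\max_k v_k=w_A$ there and the minimax is attained exactly.

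Second, your anticipated obstacle is not one. The $\delta_{kl}$ ``half-variance'' on the diagonal is exactly compensated by the fact that a diagonal site carries one Hadamard configuration (real only) rather than two; with the uniform per-configuration allocation $M_{\bm{H}}/n^2$ one gets $v_k=2\|\hat{H}\|_\beta^2\bigl(n^2/M_{\bm{H}}+(n-1)\cdot n^2/M_{\bm{H}}\bigr)=2\|\hat{H}\|_\beta^2 n^3/M_{\bm{H}}$ for every $k$, on the nose. No lower-order absorption is needed, and $\sqrt{2v\log(2n)}$ then gives the stated bound with the exact constant.
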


Under the Toeplitz (and non-Toeplitz) construction, although the number of independent matrix elements is taken as $O(n)$ (and $O(n^2)$), the number of samplings $M_{\bm{Z}}$ should increase as fast as $O(n^2\log{n})$ (and $O(n^3\log{n})$) to maintain the expected error norms below a constant level.
Appendix \ref{sec:appendix_matrix_variance_statistics} reviews the statistical distribution of $\|\bm{\Delta_Z}\|$, known as sub-gaussian from concentration inequality \cite[Theorem~5.6.]{10.1093/acprof:oso/9780199535255.002.0004}.
Defining a bound factor $\kappa>0$, such that $e_{\bm{Z}}(n)/\sqrt{M_{\bm{Z}}}=(1+\kappa)\mathbb{E}[\|\bm{\Delta_Z}\|]$, and applying the concentration inequality then gives
\begin{equation}\label{eq:mat_norm_concentration}
\mathbb{P}\left\{ \|\bm{\Delta_Z}\| \ge \frac{e_{\bm{Z}}(n)}{\sqrt{M_{\bm{Z}}}} \right\} \le \left(\frac{1}{2n}\right)^{\left( 1+\sfrac{1}{\kappa} \right)^{-2}},
\end{equation}
where $e_{\bm{Z}}(n)\in \{e_{\bm{Z}}^{(\mathrm{t})}(n), e_{\bm{Z}}^{(\mathrm{nt})}(n)\}$ is determined by the construction method.
Eq.\eqref{eq:mat_norm_concentration} describes the probability of finding a norm larger than that of the upper bound for the expected norm described in Eqs.\eqref{eq:toeplitz_error_matrix_norm_bound} and \eqref{eq:non_toeplitz_error_matrix_norm_bound}.
The numerical results show that the value of $\kappa$ is located within the interval $[0.5, 1.0]$; however, the actual statistical distribution of $\|\bm{\Delta_Z}\|$ is much tighter than in Eq.\eqref{eq:mat_norm_concentration}, causing all $\|\bm{\Delta_Z}\|$ values with the 10,000 random initializations to be less than $e_{\bm{Z}}/\sqrt{M_{\bm{Z}}}$.
Based on such observation, we assume that the following bound condition holds:
\begin{equation}\label{eq:mat_norm_bound}
\|\bm{\Delta_Z}\| < \frac{e_{\bm{Z}}(n)}{\sqrt{M_{\bm{Z}}}}.
\end{equation}

\subsection{QKSD Perturbation Caused by Finite Sampling}
Finally, by applying Eq.\eqref{eq:mat_norm_bound} to Theorem \ref{theorem:generalized_eigenvalue_perturbation}, we connect the random error matrix statistics to the perturbation theory to estimate the impact of finite sampling.
Before doing so, however, the unstable basis vectors are resolved by examining the singular values of the matrix $\tilde{\bm{S}}$, thereby restoring the generalized eigenvalue problem from ill-conditioning.
Specifically, the singular vectors of $\tilde{\bm{S}}$ are removed if corresponding singular values are below parameter $\epsilon$, as specified in Algorithm \ref{alg:thresholding}.
In practice, $\epsilon$ is set to the noise level, such as machine precision in classical KSD.
Similarly, as shown in Eq.\eqref{eq:mat_norm_bound}, we set the parameter $\epsilon$ for QKSD as
\begin{equation}\label{eq:opt_epsilon}
\epsilon = \frac{e_{\bm{S}}^{(\mathrm{t})}(n)}{\sqrt{M_{\bm{S}}}}.
\end{equation}
Although fewer basis vectors are considered after thresholding, the perturbation error of the final solution is significantly reduced in general.
The numerical experiments in the next section will show that such settings of $\epsilon$ yield lower errors in the final solution than any other assignment to $\epsilon$.
In addition to Eq.\eqref{eq:chi_bound}, which shows the bound of $\chi$ with $\eta$, we assume a simpler bound that removes the dependency on $n$ ($\chi\le K\eta$), where $K>0$ is a factor that is close to unity in the numerical simulation.
Combining these assumptions with Theorem \ref{theorem:generalized_eigenvalue_perturbation}, we can formulate a non-asymptotic bound for the QKSD solution error affected by finite sampling.

\begin{theorem}[Non-Asymptotic Sampling Perturbation after Thresholding]\label{theorem:main_result}
Suppose that QKSD algorithm with the order of $n$, whose the matrix construction method is described in Theorem \ref{theorem:toeplitz_error_matrix} or Theorem \ref{theorem:non_toeplitz_error_matrix}, leads to a generalized eigenvalue problem of $(\tilde{\bm{H}}, \tilde{\bm{S}})$.
Afterward, thresholding algorithm (Algorithm \ref{alg:thresholding}) is applied to $(\tilde{\bm{H}}, \tilde{\bm{S}})$ with the parameter in Eq.\eqref{eq:opt_epsilon} and produces another matrix pair of size ${n_\epsilon}\times {n_\epsilon}$, denoted as $(\tilde{\bm{A}}, \tilde{\bm{B}})$, with the lowest eigenvalue being denoted as $\tilde{E}_0^{(n\rightarrow {n_\epsilon})}$.
In addition to small gap, small perturbation, and random matrix norm bound assumptions, as in Eqs.\eqref{eq:assume_err_bound}, \eqref{eq:assume_gap_condition}, and \eqref{eq:mat_norm_bound}, respectively, it is assumed that the matrix perturbations before ($\eta$) and after ($\chi$) in the thresholding satisfy
\begin{equation}
\label{eq:thm_chi_assumption}
\chi \le \eta.
\end{equation}
Accordingly, the eigenvalue perturbation ($|\tilde{E}^{(n\rightarrow {n_\epsilon})}_0 - E^{(n\rightarrow {n_\epsilon})}_0|$) can be bounded as below, if the relative perturbation is sufficiently small ($|\tilde{E}^{(n\rightarrow {n_\epsilon})}_0 - E^{(n\rightarrow {n_\epsilon})}_0|/E \ll 1$)
\begin{multline}
\label{eq:gep_theorem_result_sampling_noise}
|\tilde{E}^{(n\rightarrow {n_\epsilon})}_{0} - E^{(n\rightarrow {n_\epsilon})}_{0}| \le (1+E^{(n\rightarrow {n_\epsilon})2}_{0})\\ \times \sin^{-1}\left( \frac{\sqrt{2}{n_\epsilon}(e_{\bm{H}}+e_{\bm{S}})}{d_{0}\sqrt{M}} \right),
\end{multline}  
where $d_0^{-1}$ is a condition number for the lowest solution and
\begin{equation*}
\begin{split}
e_{\bm{H}}:=&
\begin{cases}
2\sqrt{2}\|\hat{H}\|_\beta n\sqrt{\log(2n)} \quad &\text{Toeplitz case}\\
2\|\hat{H}\|_\beta n^{3/2}\sqrt{\log(2n)} \quad &\text{Non-Toeplitz case}
\end{cases},\\
e_{\bm{S}}:=& 2\sqrt{2}n\sqrt{\log(2n)}
\end{split}
\end{equation*}
and $M=M_{\bm{H}}+M_{\bm{S}}$ is the total number of samplings optimally distributed, as shown below:
\begin{align}
M_{\bm{H}} = \frac{e_{\bm{H}}}{e_{\bm{H}}+e_{\bm{S}}}M,\label{eq:opt_MH}\\
M_{\bm{S}} = \frac{e_{\bm{S}}}{e_{\bm{H}}+e_{\bm{S}}}M\label{eq:opt_MS}.
\end{align}
\end{theorem}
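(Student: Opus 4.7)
The plan is to stitch together three ingredients already established in the excerpt: the Mathias--Li eigenangle bound of Theorem~\ref{theorem:generalized_eigenvalue_perturbation}, the concentration bounds on $\|\bm{\Delta_H}\|$ and $\|\bm{\Delta_S}\|$ from Theorems~\ref{theorem:toeplitz_error_matrix} and \ref{theorem:non_toeplitz_error_matrix} (used in the form of Eq.\eqref{eq:mat_norm_bound}), and a simple convex optimization over the shot budget $M=M_{\bm{H}}+M_{\bm{S}}$.

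First I would apply Theorem~\ref{theorem:generalized_eigenvalue_perturbation} directly to the thresholded pair $(\bm{A},\bm{B})$ and its perturbation $(\tilde{\bm{A}},\tilde{\bm{B}})$, which are guaranteed to have dimension $n_\epsilon\times n_\epsilon$ and $\tilde{\bm{B}}$ well-conditioned (its smallest singular value exceeds the chosen $\epsilon$). This immediately yields
\begin{equation*}
\bigl|\tan^{-1}E^{(n\to n_\epsilon)}_0-\tan^{-1}\tilde E^{(n\to n_\epsilon)}_0\bigr|\le \sin^{-1}\!\Bigl(\tfrac{n_\epsilon\chi}{d_0}\Bigr),
\end{equation*}
provided the small-perturbation and eigenangle-gap hypotheses of that theorem hold, which are assumed in the statement. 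Next I would invoke the hypothesis $\chi\le\eta$ of Eq.\eqref{eq:thm_chi_assumption} to reduce the task to bounding $\eta=\sqrt{\|\bm{\Delta_H}\|^2+\|\bm{\Delta_S}\|^2}$ in terms of $M$.

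The key estimation step is then to combine the event in Eq.\eqref{eq:mat_norm_bound} for both matrices with the shot-budget constraint. Substituting $\|\bm{\Delta_H}\|\le e_{\bm{H}}/\sqrt{M_{\bm{H}}}$ and $\|\bm{\Delta_S}\|\le e_{\bm{S}}/\sqrt{M_{\bm{S}}}$ into $\eta$ and minimizing $e_{\bm{H}}^2/M_{\bm{H}}+e_{\bm{S}}^2/M_{\bm{S}}$ subject to $M_{\bm{H}}+M_{\bm{S}}=M$ is a one-parameter Lagrange problem whose stationarity condition gives $e_{\bm{H}}/M_{\bm{H}}^{3/2}=e_{\bm{S}}/M_{\bm{S}}^{3/2}$; solving simultaneously with the budget equation produces the allocations stated in Eqs.\eqref{eq:opt_MH} and \eqref{eq:opt_MS}, and substitution back yields a bound of the form $\eta\lesssim(e_{\bm{H}}+e_{\bm{S}})/\sqrt{M}$ (up to a small numerical factor coming from whether one first passes through $\sqrt{a^2+b^2}\le a+b$ before optimizing, which accounts for the $\sqrt{2}$ that ultimately appears in the statement).

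Finally I would translate the eigenangle bound into an eigenvalue bound. Writing $f(\theta)=\tan\theta$, the mean-value theorem gives $\tilde E^{(n\to n_\epsilon)}_0-E^{(n\to n_\epsilon)}_0=\sec^2(\theta^\star)\bigl(\tan^{-1}\tilde E^{(n\to n_\epsilon)}_0-\tan^{-1}E^{(n\to n_\epsilon)}_0\bigr)$ for some $\theta^\star$ between the two eigenangles. Under the small-relative-perturbation hypothesis stated in the theorem, one replaces $\sec^2\theta^\star$ by $\sec^2(\tan^{-1}E^{(n\to n_\epsilon)}_0)=1+E^{(n\to n_\epsilon)2}_0$ with controlled error, producing the prefactor in Eq.\eqref{eq:gep_theorem_result_sampling_noise}. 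Combining this linearization with the bound on $n_\epsilon\chi/d_0$ derived above gives the claimed inequality.

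The step I expect to be trickiest is justifying the $\chi\le\eta$ input in a way consistent with the thresholding action: the perturbation of the thresholded pair is not literally the projection of $(\bm{\Delta_H},\bm{\Delta_S})$ because the singular vectors $\tilde{\bm{V}}_{>\epsilon}$ themselves depend on the noisy $\tilde{\bm{S}}$. The excerpt sidesteps this by citing Eq.\eqref{eq:chi_bound} and then adopting $\chi\le\eta$ as a working hypothesis supported numerically; in the formal write-up I would simply invoke this hypothesis at the appropriate point rather than try to re-derive it. The other delicate point is controlling the error in the $\tan$-linearization uniformly; I would handle it by a two-line Taylor-with-remainder argument, folding the remainder into the small-relative-perturbation assumption already stipulated in the theorem.
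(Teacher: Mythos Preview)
Your plan is exactly the paper's proof: bound $\eta$ by $\mathcal{E}_2:=\sqrt{e_{\bm H}^2/M_{\bm H}+e_{\bm S}^2/M_{\bm S}}$ via Eq.\eqref{eq:mat_norm_bound}, minimize $\mathcal{E}_2$ over the shot split, feed $\chi\le\eta\le\mathcal{E}_2^\star$ into Theorem~\ref{theorem:generalized_eigenvalue_perturbation}, and linearize $\tan^{-1}$ using the small-relative-perturbation hypothesis.

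One computational slip to fix: the stationarity condition for minimizing $e_{\bm H}^2/M_{\bm H}+e_{\bm S}^2/M_{\bm S}$ under $M_{\bm H}+M_{\bm S}=M$ is $e_{\bm H}^2/M_{\bm H}^2=e_{\bm S}^2/M_{\bm S}^2$, i.e.\ $e_{\bm H}/M_{\bm H}=e_{\bm S}/M_{\bm S}$, not $e_{\bm H}/M_{\bm H}^{3/2}=e_{\bm S}/M_{\bm S}^{3/2}$. The condition you wrote is the one for minimizing $e_{\bm H}/\sqrt{M_{\bm H}}+e_{\bm S}/\sqrt{M_{\bm S}}$, and it yields $M_{\bm H}\propto e_{\bm H}^{2/3}$, which does \emph{not} reproduce Eqs.\eqref{eq:opt_MH}--\eqref{eq:opt_MS}. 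With the correct condition you get $M_{\bm Z}=e_{\bm Z}M/(e_{\bm H}+e_{\bm S})$ and $\mathcal{E}_2^\star=(e_{\bm H}+e_{\bm S})/\sqrt{M}$ directly, so there is no need to pass through $\sqrt{a^2+b^2}\le a+b$ to reach the stated allocation. (The residual $\sqrt{2}$ in Eq.\eqref{eq:gep_theorem_result_sampling_noise} is not produced by this optimization step; the paper simply inserts $\mathcal{E}_2^\star$ into Eq.\eqref{eq:gep_theorem_result} and linearizes, so treat the extra constant as a conservative slack rather than something your argument must generate.)
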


\begin{proof}
The optimization of a bound for $\eta$, $\mathcal{E}_2:=\sqrt{e_{\bm{H}}^2/M_{\bm{H}}+e_{\bm{S}}^2/M_{\bm{S}}}$, based on the assumption of Eq.\eqref{eq:mat_norm_bound} is obtained by setting Eqs.\eqref{eq:opt_MH} and \eqref{eq:opt_MS} under the constraint of $M=M_{\bm{H}}+M_{\bm{S}}$.
The corresponding optimal value is
\begin{equation}
\mathcal{E}_2^{\star}=\frac{e_{\bm{H}}+e_{\bm{S}}}{\sqrt{M}} \ge \eta \ge \chi.
\end{equation}  
By replacing $\chi$ with $\mathcal{E}_2^{\star}$, we can convert the total perturbation bound in Eq.\eqref{eq:gep_theorem_result} to Eq.\eqref{eq:gep_theorem_result_sampling_noise} through linear approximations:
\begin{multline}
\label{eq:linear_apprx_pert}
\left|\tilde{E} - E\right| = (1+E^2)\left|\tan^{-1}\tilde{E} - \tan^{-1}E\right|\\ + O\left(\left|\tilde{E} - E\right|^2/E\right).
\end{multline}
\end{proof}

Overall, because the number of remaining bases is less than $n$, Theorem \ref{theorem:main_result} leads to an asymptotic bound for finite sampling perturbation,
\begin{equation} \label{eq:pert_bound_bigO}
|\tilde{E}_{0}^{(n\rightarrow {n_\epsilon})} - {E}_{0}^{(n\rightarrow {n_\epsilon})}| = \tilde{O}\left(\frac{\|\hat{H}\|^3 n^z }{d_0 \sqrt{M}}\right),
\end{equation}
where $z=2$ for Toeplitz construction, and $z=5/2$ for non-Toeplitz construction. However, in the numerical simulation, the condition number $d_0^{-1}$ remained large and amplified the sampling error, even though the thresholding alleviated the ill-conditioning.
The following upper bound for $d_0^{-1}$ is found, based on the condition that the least singular value of $\bm{B}$ is larger than $\epsilon$:
\begin{equation} \label{eq:cond_bound}
\begin{split}
d_0^{-1}=&|\bm{x}_0^{\dagger}\bm{B}\bm{x}_0|^{-1}\left(\frac{|\bm{x}_0^{\dagger}\bm{A}\bm{x}_0|^2}{|\bm{x}_0^{\dagger}\bm{B}\bm{x}_0|^2}+1\right)^{-1/2}\\
\le& \epsilon^{-1} \left(E_0^{(n\rightarrow n_{\epsilon})2}+1\right)^{-1/2}.
\end{split}
\end{equation}
Correspondingly, the eigenvalue perturbation in Eq.\eqref{eq:pert_bound_bigO} can be bounded as
\begin{equation}\label{eq:assymptotic_pert_bound_afterd0}
|\tilde{E}_{0}^{(n\rightarrow {n_\epsilon})} - {E}_{0}^{(n\rightarrow {n_\epsilon})}| \le \tilde{O}\left(\frac{\|\hat{H}\| n^z }{\epsilon \sqrt{M}}\right).
\end{equation}
Note that although the perturbation bound decreases as we set larger $\epsilon$, this also causes more basis vectors to be thresholded, resulting in an increase in the projection error, $|E_0^{(n \rightarrow n_\epsilon)} - E_0|$.
Therefore, we can find a trade-off relation between the perturbation effect and the projection error.
The optimal $\epsilon$ is heuristically determined in Eq.\eqref{eq:opt_epsilon} and numerically verified in Section \ref{sec:numerical_truncation}.

When it comes to the condition number, we have utilized $d^{-1}_0$ throughout this work, as it is more suitable for solving the GEVP. 
In addition to $d^{-1}_0$, a different condition number has been adopted in another work \cite[Appendix A]{quantum_power_method}:
\begin{equation}\label{eq:cond_number_s}
\mathrm{cond}(\bm{S}) = \|\bm{S}\|\|\bm{S}^{-1}\|.
\end{equation}
Although this quantity depicts the relative numerical stability to the perturbation with linear equation \cite[Theorem~12.2]{trefethen97}, it also can be related to relative GEVP perturbation because the calculation of $\bm{S}^{-1}$ is involved in GEVP solver and $\mathrm{cond}(\bm{S})$ amplifies the overlap matrix perturbation, $\|\bm{\Delta_S}\|$.
The detailed relation is shown in Appendix \ref{sec:Simpler_GEVP_Perturbation}.
The condition number of $\bm{S}$ becomes large because of this linear dependency on the basis \cite[Appendix A]{quantum_power_method}.
The linear dependency increases especially when the time step is small ($\Delta_t \ll O(\|\hat{H}\|^{-1})$), which makes the base operator $e^{-i\hat{H}\Delta_t}=\hat{I}-i\hat{H}\Delta_t+O(\|\hat{H}\|^2\Delta_t^2)$ close to identity.

\subsection{Trotter and Gate Errors}
Up to this point, we have focused on sampling error analysis. However, in the early fault-tolerant regime, one cannot completely circumvent other sources of noise. 
Here, we present a brief analysis of the matrix perturbation caused by Trotter and gate errors.

When constructing the matrix $\bm{H}$, either Toeplitz or non-Toeplitz construction should be employed.
As mentioned in Section \ref{sec:QKSD}, with a sufficiently small Trotter error, the Toeplitz construction is superior to the non-Toeplitz one because measurements are allocated to obtain fewer elements.
Using Theorems \ref{theorem:toeplitz_error_matrix} and \ref{theorem:non_toeplitz_error_matrix}, we derive the asymptotic bound for the circuit depth, which can suppress the simulation error to make the Toeplitz construction more advantageous than non-Toeplitz one.
The required depth for such circuit is 
\begin{equation}\label{eq:min_ckt_depth}
D^{(\mathrm{t})} = \tilde{\Omega}\left( N_{\Gamma}\Delta_t^2 n^{3/2} M_{\bm{H}}^{1/2} \right),
\end{equation}
whose details are provided in Appendix \ref{sec:appendix_consideration_of_simulation_error}.
Here, note that $N_\Gamma$ is the number of Hamiltonian fragments for which the exact time evolution operators are known and efficiently implementable.
Such deep quantum circuits can reduce the effect of the Trotter error in the Toeplitz construction lower than sampling error in the non-Toeplitz one which does not suffer from the Trotter error.
If the circuit depth of Eq.\eqref{eq:min_ckt_depth} is not satisfied while Toeplitz construction is adopted, the GEVP perturbation caused by the sampling error and Trotterization is bounded as 
\begin{equation}\label{eq:pert_bound_st}
\begin{split}
|\tilde{E}_{0,\mathrm{ST}}^{(n\rightarrow n_\epsilon)}& - {E}_{0,\mathrm{ST}}^{(n\rightarrow n_\epsilon)}| \le\\
\tilde{O}&\left(\frac{\|\hat{H}\|^3 n^2}{d_0}\left(\frac{N_\Gamma \Delta_t^2 n}{D} + \frac{1}{\sqrt{M}}\right)\right).
\end{split}
\end{equation}
Here, $E_{0,\mathrm{ST}}^{(n\rightarrow n_\epsilon)}$ denotes the lowest eigenvalue in the approximated subspace, $\tilde{\mathcal{K}}_n:=\{\hat{U}_{\mathrm{ST}}(k\Delta_t)\ket{\phi_0}\}_{k=0}^{n-1}$, and $\tilde{E}_{0,\mathrm{ST}}^{(n\rightarrow n_\epsilon)}$ is the eigenvalue perturbed by the sampling and Trotter errors, where the thresholding with the parameter $\epsilon$ is employed.
The details of Trotterization perturbation analysis are provided in Appendix \ref{sec:appendix_consideration_of_simulation_error}.

Now, let us focus on the hardware noise.
In early fault tolerant quantum regime, although it includes quantum error correction, a finite level of logical qubit error is induced.
By adopting the Pauli error model introduced in \cite{liang2023modeling}, it can be shown that the hardware noise induces a decay of the measured matrix:
\begin{equation}
\tilde{\bm{Z}}_{\mathrm{H}} \approx e^{-\lambda}\bm{Z}+\bm{\Delta_Z},
\end{equation}
where $\lambda= N_q D \ln r^{-1}$ is determined by single qubit fidelity $r$, the number of qubits $N_q$, and the circuit depth $D$.
Furthermore $\bm{\Delta_Z}$ is an additive error, which behaves similarly to that without hardware noise.
Consequently, the GEVP perturbed from Eq.\eqref{eq:gen_eigeq} is determined as
\begin{equation}\label{eq:hardware_noise_gevp}
(\bm{H}+e^{\lambda}\bm{\Delta_H})\tilde{\bm{c}} = (\bm{S}+e^{\lambda}\bm{\Delta_S})\tilde{\bm{c}}\tilde{E}^{(n)},
\end{equation}
which effectively amplifies the perturbation bound Eqs.\eqref{eq:pert_bound_st} or \eqref{eq:pert_bound_bigO} by the factor of $e^{\lambda}$.
The detailed procedures are presented in Appendix \ref{sec:appendix_Hardware_noise}.

\section{Numerical Simulations}\label{sec:result}
A numerical simulation of many-body systems is performed using the perturbed QKSD method, which considers the error caused by finite sampling of the Hadamard test.
By comparing this simulation with the theoretical results in Section \ref{sec:main_result}, we then analyze the statistical behavior of the error matrices, the validation of the optimal thresholding, and the perturbed ground energy in terms of the finite sampling error.

In this work, we focus on the 1D spinful Fermi-Hubbard model of length $L$, which is given as
\begin{align}
\nonumber \hat{H} = -t\sum_{i=1\cdots L-1}\sum_{\sigma \in \{\uparrow, \downarrow\}}(&\hat{a}^\dagger_{i\sigma}\hat{a}_{{i+1}\sigma} + \hat{a}_{i\sigma}\hat{a}^\dagger_{{i+1}\sigma})\\
&+ u\sum_{i=1\cdots L}\hat{a}^\dagger_{i\uparrow}\hat{a}_{{i}\uparrow}\hat{a}^\dagger_{i\downarrow}\hat{a}_{{i}\downarrow},\label{eq:Hamiltonian}
\end{align}
where $\hat{a}^{(\dagger)}_{i\sigma}$ is a fermionic annihilation (creation) operator at the $i$\textsuperscript{th} site with spin $\sigma\in\{\uparrow,\downarrow\}$.
First, three parameter cases are considered: $(t, u)=(0.1, 0.2), (0,2, 0.1),$ and $(0.1, 0.8)$ cases, which correspond to stronger coupling as $u/t$ increases \cite{annurev:/content/journals/10.1146/annurev-conmatphys-090921-033948}.
Although the parameter ratio of $t$ to $u$ matters, the absolute values are set making  $\|\hat{H}\|_\beta$ close to unity.
We particularly select the system size $L=8$ in this simulation.
The Hamiltonian for each setting is mapped in terms of Pauli operators using a qubit mapping scheme that embraces the number, spin, and spatial symmetries, which results in reducing three qubits.
Thus, the numbers of remaining qubits is $N_q=13$.

\begin{figure}[t!]
	\centering
	\includegraphics[width=\linewidth]{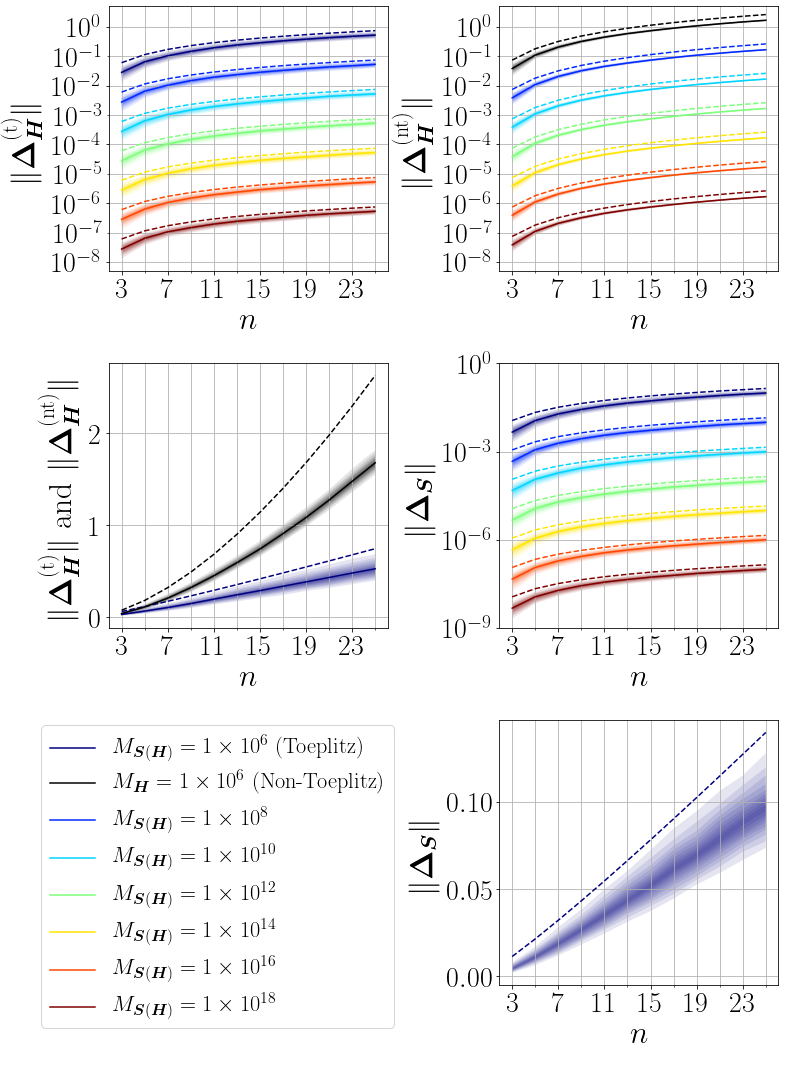}
	\caption{Distribution and estimated bound for $\|\bm{\Delta_H}\|$ and $\|\bm{\Delta_S}\|$, over 10,000 different random matrices per point along the different Krylov orders $n$ (parameters in Eq.\eqref{eq:Hamiltonian}: $L=8$ and $u/t=8$).
		For the case of $\bm{\Delta_H}$, Toeplitz($\bm{\Delta}_{H}^{(\mathrm{t})}$) and non-Toeplitz($\bm{\Delta}_{H}^{(\mathrm{nt})}$) construction methods are considered, whereas only the Toeplitz construction is used for $\bm{\Delta_S}$.
		Each color in the plots denotes different $M_{\bm{H}}$ and $M_{\bm{S}}$ values, and the dashed line represents the upper bound in Eqs.\eqref{eq:toeplitz_error_matrix_norm_bound} and \eqref{eq:non_toeplitz_error_matrix_norm_bound}, where the solid lines and shadows represent the ensemble average and histogram populations, respectively.
		This figure also shows the linearly scaled views for $M_{\bm{H}(\bm{S})}=10^{6}$, which are used to emphasize the precise growth relationship between Toeplitz ($\sim n\sqrt{\log n}$) and the non-Toeplitz ($\sim n^{3/2}\sqrt{\log n}$) for the construction of $\bm{H}$.}
	\label{fig:error_norm}
\end{figure}

Unitary decomposition of the Hamiltonian in Eq.\eqref{eq:pauli_ham_grp} is performed using the sorted insertion grouping algorithm \cite{efficient_quantum_measurement}, which results in $N_\beta = 34$.
It is assumed that we perfectly simulate the time evolution $e^{-i\hat{H}t}$, thereby neglecting Suzuki-Trotterization errors and hardware decoherence, in order to focus on the sampling error.
Furthermore, the Krylov basis of order $n$ is set to $\{\exp{(-ik\Delta_t\hat{H})}\ket{\phi_0}\}_{k=-\lfloor n/2 \rfloor}^{\lfloor n/2\rfloor}$, where $\ket{\phi_0}$ is the Hartree Fock ground state.
Here, we choose time steps as the time sequence from Theorem \ref{theorem:qksd_err}.
Since it is difficult to determine $\Delta E_{N-1}$ in advance, we replace it with $\|\hat{H}\|_\beta$.

\subsection{Sampling Error Matrices}

Statistical results for the error matrix norms ($\|\bm{\Delta_H}^{(\mathrm{t})}\|$, $\|\bm{\Delta_H}^{(\mathrm{nt})}\|$, and $\|\bm{\Delta_S}\|$) generated by the 10,000 different random seeds are shown in Fig.\ref{fig:error_norm}, together with their probabilistic bound, Eqs.~\eqref{eq:toeplitz_error_matrix_norm_bound} and \eqref{eq:non_toeplitz_error_matrix_norm_bound}. 
We can clearly see that the error norm is inversely proportional to the square root of the sampling number. 
Although Fig.\ref{fig:error_norm} displays only the case of $L=8$ and $u/t=8$, the tendency is consistent for the various parameters in the Hamiltonian.
In Fig.\ref{fig:error_norm}, with Krylov order $n$, the error norm is bounded by $O(n^{3/2}\sqrt{\log n})$ for the non-Toeplitz construction and $O(n \sqrt{\log n})$ for the Toeplitz construction.
As Krylov order $n$ and number of samplings $M$ increase, the logarithmic difference between the mean and the bound remains mostly constant.
In other words, the factor $\kappa$ in Eq.\eqref{eq:mat_norm_concentration} is invariant under the fixed target Hamiltonian and matrix construction method.
However, although the tail bound probability can be estimated using Eq.\eqref{eq:mat_norm_concentration}, all experiments in this study were proved to be within the estimated bound, validating our assumption in Eq.\eqref{eq:mat_norm_bound}.

\subsection{Truncation of Unstable Basis}\label{sec:numerical_truncation}

\begin{figure}[!ht]
	\centering
	\includegraphics[width=\linewidth]{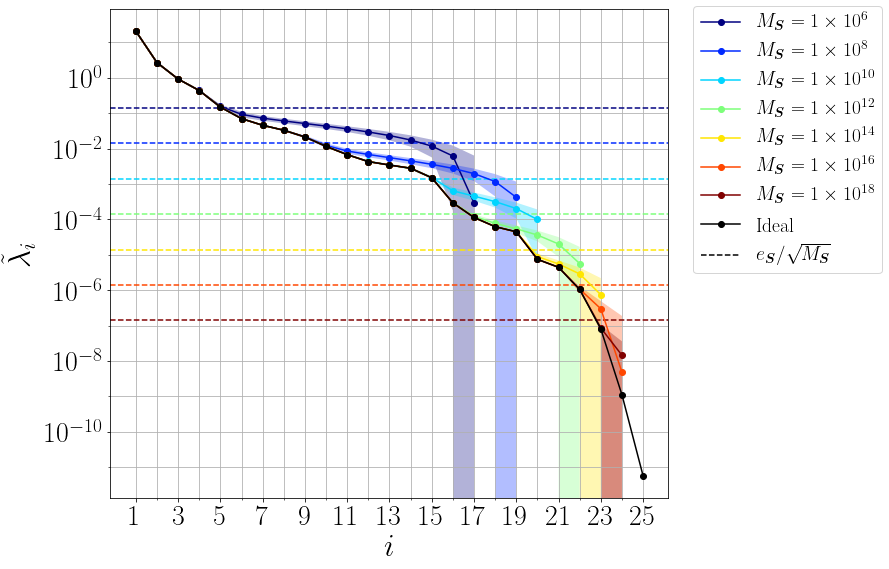}
	\caption{
		Singular values of $\bm{S}$, $\lambda_i$(black), and those of its sampling perturbations $\tilde{\bm{S}}$, $\tilde{\lambda}_m$(other colors) under the Krylov order of $n=25$ (Hamiltonian setting: $L=8$ and $(t, u)=(0.2, 0.1)$).
		They are in nonincreasing order and indexed by $i\in\{1\cdots 25\}$.
		Each color index indicates the number of samplings ($M_{\bm{S}}$) to construct $\tilde{\bm{S}}$.
		The solid lines describe the average ordered singular values at different settings, and the colored regions cover the standard deviations from the averages.
		The horizontal dashed lines indicate the bound for the error matrix norm $e^{(\mathrm{t})}_{\bm{S}}/\sqrt{M_{\bm{S}}}$, which is also the heuristic optimal $\epsilon$ in Eq.\eqref{eq:opt_epsilon}.
		Points for the averaged singular values less than zero are not shown for the simplicity.
	}
	\label{fig:S_eigenvalue_perturbation}
\end{figure}
We analyze the effect of thresholding to reduce errors in the perturbed solution and assess the optimal threshold, $\epsilon$, based on the knowledge of the error matrix norm discussed in the previous section.
Fig.\ref{fig:S_eigenvalue_perturbation} depicts the singular-value spectrum of $\tilde{\bm{S}}$ to obtain an insight into selecting the optimal value of $\epsilon$.
This result confirms that, up to the level of $e_{\bm{S}}^{(\mathrm{t})}/\sqrt{M_{\bm{S}}}$, the singular values of $\tilde{\bm{S}}$, referred to as $\tilde{\lambda}_i$ for $i\in\{1,\cdots, n\}$, coincide with the ideal ones, $\lambda_i$ (represented by the dashed line in Fig.\ref{fig:S_eigenvalue_perturbation}).
Below this level, however, a discrepancy between $\tilde{\lambda}_i$ and $\lambda_i$ occurs on a logarithmic scale, which is attributed to Weyl's inequality, stating that an error matrix norm can limit the singular value perturbation.
This finding agrees with the assumption in Eq.\eqref{eq:mat_norm_bound}, which indicates that it is possible to deduce the threshold $\epsilon$ regardless of prior experiments, 
\begin{align}\label{ineq:trunc_pt}
|\tilde{\lambda}_i - \lambda_i| \le \|\bm{\Delta_S}\| < \frac{e^{(\mathrm{t})}_{\bm{S}}(n)}{\sqrt{M_{\bm{S}}}}=\epsilon.
\end{align}

In order to verify the optimality of such thresholding, we investigate the eigenenergy error for different thresholding points.
Fig.\ref{fig:sol_pert} illustrates the relative difference between the perturbed Krylov and exact ground energies as a function of $n^{(\mathrm{t})}$, the dimensions of the remaining basis after thresholding.
As shown in Fig.\ref{fig:sol_pert}, the error of the perturbed Krylov solution $|\tilde{E}^{(n\rightarrow n^{(\mathrm{t})})}_0 - E_0|$ starts to diverge at a certain point because of ill-conditioning, whereas the ideal Krylov solution error, $|E^{(n\rightarrow n^{(\mathrm{t})})}_0 - E_0|$ (the black dashed line), consistently decreases as the number of effective bases $n^{(\mathrm{t})}$ increases.
Furthermore, we validate that if one uses more samplings $M$, the divergence of the solution error appears at a larger $n^{(\mathrm{t})}$.
This divergent point can be estimated closely using the threshold in Eq.\eqref{eq:opt_epsilon}, as represented by the larger marker in Fig.\ref{fig:sol_pert}.
Note that while such points serve as the optimal thresholds for most Toeplitz construction cases, the non-Toeplitz construction has tolerable corresponding energy differences, even though some threshold points deviate by one or two from the optimal values in Fig.\ref{fig:sol_pert}.

\begin{figure}[!t]
	\centering
	\includegraphics[width=\linewidth]{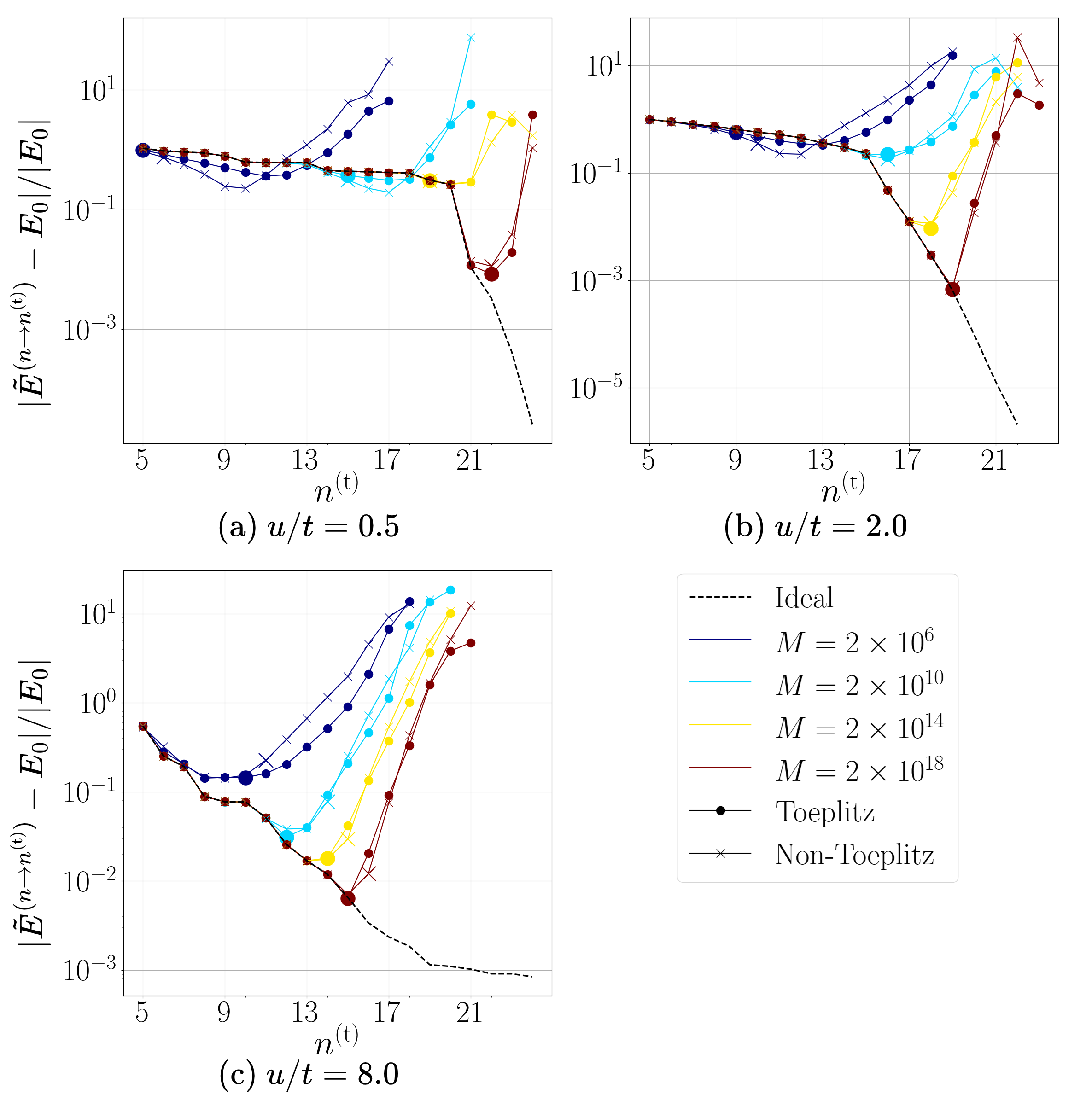}
	\caption{
		Root mean square difference between the Krylov solution and the true ground state energy. It is obtained from the direct diagonalization in the full Hilbert space with different numbers of remaining basis, $n^{(\mathrm{t})}$ for a given Krylov order $n=25$.
		Each subplot shows the result for three different parameters ($u/t$) in Eq.\eqref{eq:Hamiltonian} with $L=8$.
		The black dashed lines correspond to the ideal (unperturbed) Krylov algorithm, with solid lines of other colors indicating experiments with different sampling settings and distributions of Eqs.\eqref{eq:opt_MH} and \eqref{eq:opt_MS}.
		The two marker shapes distinguish the matrix construction methods for $\bm{H}$.
		The threshold points of $n^{(\mathrm{t})}=n_\epsilon$ with the parameter $\epsilon=e_{\bm{S}}^{(\mathrm{t})}/\sqrt{M_{\bm{S}}}$ are emphasized with larger markers.
	}
	\label{fig:sol_pert}
\end{figure}

\subsection{Final Perturbations and Their Bounds}\label{sec:final_pert_bd}
\begin{figure}[t]
	\centering
	\includegraphics[width=\linewidth]{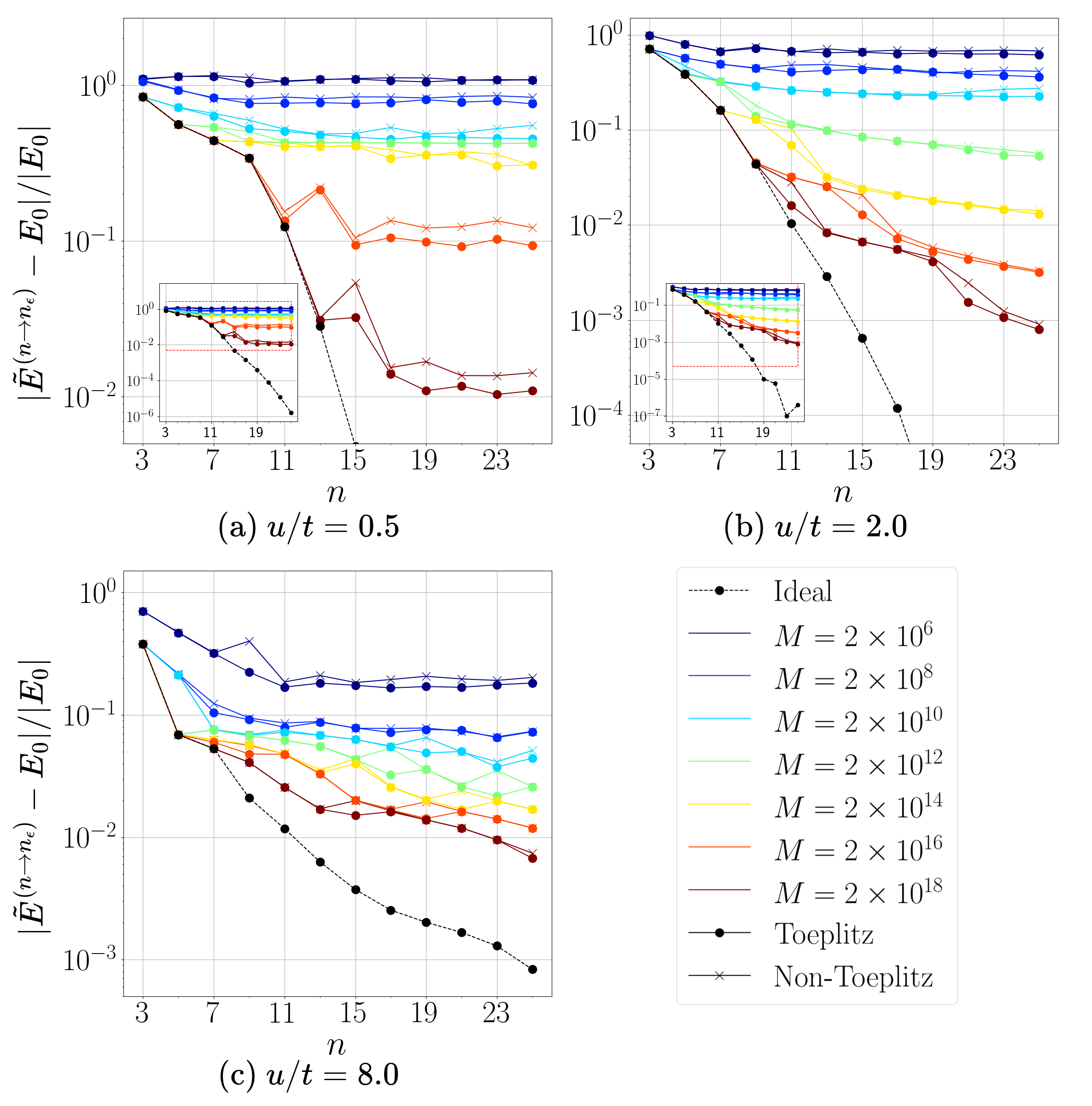}	
	\caption{
		Root mean square difference between the Krylov solution after the optimal threshold of $\epsilon=e_{\bm{S}}^{(\mathrm{t})}/\sqrt{M_{\bm{S}}}$ (the collection of large markers in the Fig.\ref{fig:sol_pert} with different $n$) and the ground state energy found in the full Hilbert space.
		The horizontal axis of each figure represents Krylov order, $n$, and the vertical axis indicates the sampling perturbation, $|\tilde{E}^{(n\rightarrow n_\epsilon)}_0-E_0|/|E_0|$.
		The black dashed lines represent the unperturbed Krylov solution without thresholding, whereas colored solid lines are the perturbed solution where the optimal thresholding criterion is applied. Two different marker shapes indicate the matrix construction methods of $\bm{H}$.
	}
	\label{fig:sol_pert_opt_trunc}
\end{figure}
Based on the optimal threshold described in Section \ref{sec:numerical_truncation}, we depicts the error results and the Krylov order $n$ in Fig.\ref{fig:sol_pert_opt_trunc}.
The eigenenergy errors tend to decrease as the Krylov order increases, particularly with a larger sampling number. This is because for large $n$, an effective basis of higher dimension can be utilized, and small sampling errors give rise to fewer basis vectors being truncated out.
In addition, the non-Toeplitz construction results show slightly more errors than the Toeplitz construction results.
This disparity obviously occurs because the non-Toeplitz construction suffers more sampling errors as the number of samplings becomes more distributed.
The error difference between two constructions is based on the fact that the non-Toeplitz construction requires to obtain $O(n^2)$ elements instead of $O(n)$ in the Toeplitz case.
Although Fig.\ref{fig:sol_pert_opt_trunc} indicates that the Toeplitz construction with large $n$ is advantageous, note that this result does not included the Trotter error which degrades the performance of Toeplitz constructions and increases as simulation time $n\Delta_t$ becomes larger.
In practice, to mitigate the Trotter error sufficiently small and take advantage of Toeplitz construction, the circuit depth represented in Eq.\eqref{eq:min_ckt_depth} is required.

 \begin{figure}[t]
	\centering
	\includegraphics[width=\linewidth]{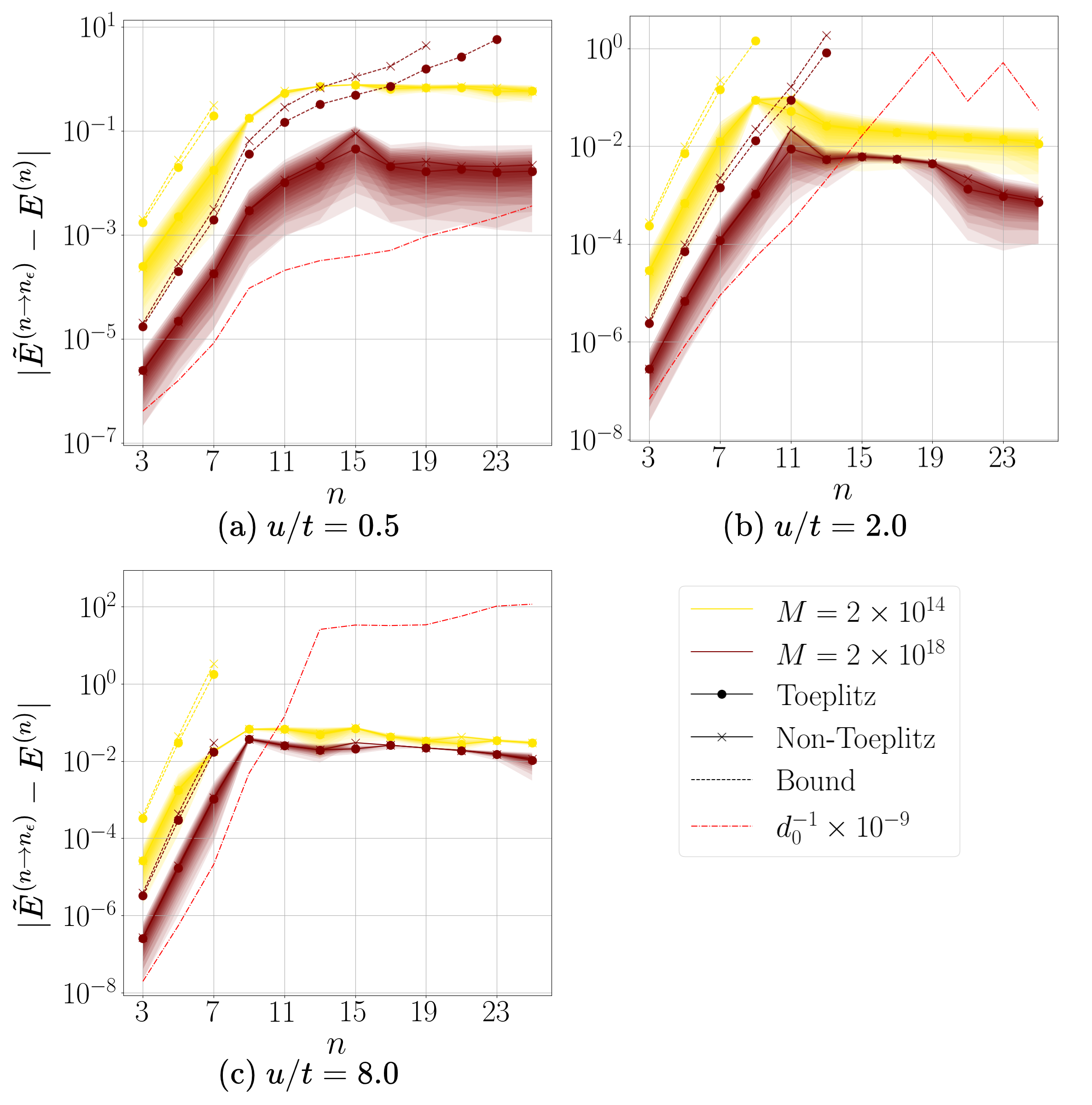}
	\caption{
		Distributions and averages of the sampling perturbations $|\tilde{E}^{(n\rightarrow n_\epsilon)}-E^{(n)}|$ with the optimal thresholding $\epsilon=e^{(\mathrm{t})}_{\bm{S}}/\sqrt{M_{\bm{S}}}$, the perturbation bounds in Eq.\eqref{eq:gep_theorem_result_sampling_noise} and condition number $d_0^{-1}$ are shown for various Hamiltonian settings along with the Krylov order $n$.
		Only two sampling number settings ($M=2\times10^{14}$ and $2\times10^{18}$) are plotted for simplicity.
		The two different marker shapes indicate the construction method for $\bm{H}$.
	}
	\label{fig:sampling_perturbation}
\end{figure}

Fig.\ref{fig:sampling_perturbation} shows the ensemble results, indicating the effects of sampling and the bounds described in Eq.\eqref{eq:gep_theorem_result_sampling_noise}.
The perturbations follow the tendency of the condition number $d_0^{-1}$, plotted together with the shifted log scale.
After the condition number initially increases exponentially with the Krylov basis, it decreases or becomes saturated at a certain level.
Although the condition number largely affects the perturbations, obtaining analytical and prior $d_0^{-1}$ information remains challenging.
In addition, the speculation about sampling perturbations (Theorem \ref{theorem:main_result}) precisely estimates the upper bound for the effect of finite sampling unless the condition number is so large that the $\sin^{-1}$ in Eq.\eqref{eq:gep_theorem_result_sampling_noise} cannot be evaluated.

\section{Conclusion}\label{sec:conclusion}
Based on random matrix theory, we have developed a theoretical framework to analyze the validity of QKSD regarding sampling noise using a non-asymptotic bound.
The analysis involves an accurate modeling of sampling errors and an optimal shot distribution for each configuration, minimizing the overall sampling variance.
We also derived a theorem that precisely estimates finite sampling errors when the condition numbers are small, which was verified by numerical simulations of the 1D Fermi-Hubbard model.
Finally, finite sampling generates eigenenergy noises by $O(n^z \sqrt{\log n} /(d_0 \sqrt{M}))$, where $z=2$ for the Toeplitz construction, and $z=5/2$ for the non-Toeplitz construction.
This is because shots are distributed to obtain more independent elements ($O(n^2)$) in non-Toeplitz than the Toeplitz case ($O(n)$).
The choice of construction method depends on the availability of an accurate simulation of $e^{-i\hat{H}t}$ with an error smaller than the sampling error.
If such a simulation is unavailable, one can use a non-Toeplitz basis instead, which exacerbates the sampling complexities, to avoid simulation errors.

Furthermore, we presented a near-optimal criteria for basis thresholding to manage large condition numbers by quantifying the sampling errors in the overlap matrix.
This criteria is applied to determine the stable basis of the overlap matrix when solving the GEVP in the QKSD method.
The numerical experiments have showed that the error is mostly minimized with the threshold that we presented. 
Unfortunately, the condition number tended to increase exponentially with the Krylov order, which leads to amplify the noise significantly, even when basis thresholding was applied. 
Because some noises, such as finite sampling and the Trotter error, are often related to time complexity, the condition number determines the slope of the trade-off between the runtime and the accuracy of the algorithm.
We found an upper bound for the condition number, which is inversely proportional to the threshold.

Although several researchers have utilized other strategies to mitigate this numerical instability, fully overcoming this problem remains a challenge.
The first strategy is that, if one uses orthogonalization, such as the Gram-Schmidt process, the condition number of $\bm{S}$ ($\mathrm{cond}(\bm{S}) =\|\bm{S}\|\|\bm{S}^{-1}\|$) becomes unity, making the problem restore the numerical stability.
However, stating a new orthonormal basis as a linear combination of the Krylov basis involves the calculation of $\tilde{\bm{\Lambda}}^{-1/2}$, where $\tilde{\bm{\Lambda}}$ is a diagonal matrix whose elements are singular values of $\tilde{\bm{S}}$.
This situation is the same as generalized eigenvalue perturbation; thus, orthogonalization suffers from the problem of the condition number as well.
Another approach, the Lanczos method \cite[Appendix F]{quantum_power_method}, utilizes an orthonormal basis to make $\bm{H}$ tridiagonal.
Although the eigenvalue perturbation is small, this approach depends on the Hamiltonian moments, $\braket{\hat{H}^k}$, whose efficient calculation method with a tolerable additive error is still unknown.
In addition to approaches based on orthogonal bases, different forms of the Krylov basis can be used to construct a numerically stable overlap matrix $\bm{S}$ whose column vectors are linearly independent.
As an example of a different Krylov basis, the QPM highlights that the linear dependency among the bases generated by the Hamiltonian power tends to be lower than that of the time evolution basis \cite[Appendix~A]{quantum_power_method}.
However, this basis is susceptible to noise amplification when it is realized in the quantum circuit, which can pose practical challenges when the norm of the Hamiltonian is large. 

Many recent quantum-classical hybrid algorithms incorporate the estimation of overlaps among states to construct matrices and utilize classical methods to solve GEVP or linear equation \cite{PhysRevResearch.3.043212, PhysRevA.104.042418}.
If the computation of inverse matrices is involved in the classical method, it is highly probable that no one can avoid the finite sampling error in coping with the large condition number. 
We thus expect our analysis to provide great impetus to other similar works that require a meticulous analysis of errors stemming from finite sampling.


\begin{acknowledgments}
This work was supported by Basic Science Research Program through the National Research Foundation of Korea (NRF), funded by the Ministry of Education, Science and Technology (NRF-2022M3H3A106307411, NRF-2022M3K2A1083890, NRF-2023M3K5A1094805, NRF-2023M3K5A1094813) and Institute for Information \& communications Technology Promotion (IITP) grant funded by the Korea government(MSIP) (No. 2019-0-00003, Research and Development of Core technologies for Programming, Running, Implementing and Validating of Fault-Tolerant Quantum Computing System).
\end{acknowledgments}

\bibliographystyle{quantum}
\bibliography{quantum-template}

\onecolumn
\appendix

\section{Anticommutation (Unitary) Grouping}
\label{sec:appendix_hamiltonian_partitioning}
The qubit Hamiltonian $\hat{H}$ is written as
\begin{equation}
\label{eq:pauli_ham}
\hat{H} = \sum_{k=1}^{N_{P}} \alpha_k \hat{P}_k,
\end{equation}
where $\hat{P}_k\in\{\hat{I}, \hat{\sigma_x}, \hat{\sigma_y}, \hat{\sigma_z}\}^{\otimes N_{q}}$ is a Pauli string acting on $N_q$ qubits, and $\alpha_k\in\mathbb{R}$ is the corresponding coefficient.
Because the number of Pauli strings, $N_P$, considered in the Hamiltonian is often considerably less than its maximum possibilities, $4^{N_q}$, the time propagation unitary can be efficiently realized.

The unitary partitioning was proposed to construct an anticommutation grouping that represents the Hamiltonian as a sum of positively weighted unitaries \cite{unitary_partitioning, LCU_Love}:
\begin{equation}
\hat{H} = \sum_{j=1}^{N_\beta} \beta_j \hat{U}_j.
\end{equation}

In such partitioning, each of the $N_\beta$ partitions, $\hat{U}_j$, that contain Pauli strings of a set $\mathcal{S}(j)$,
\begin{equation}
\hat{U}_j = \sum_{k\in\mathcal{S}(j)} \alpha^\prime_k \hat{P}_k,
\end{equation}
becomes unitary if the following conditions hold:
\begin{alignat*}{3}
\mathrm{Im} (\alpha^{\prime*}_k \alpha^\prime_l)&=0  \qquad &\forall k, l &\in \mathcal{S}(j),\\
\{\hat{P}_k, \hat{P}_l\}&=2\delta_{k,l} \hat{I}^{\otimes N_q} \qquad &\forall k, l &\in \mathcal{S}(j),\\
\sum_{k\in\mathcal{S}(j)}\left|\alpha^{\prime}_k\right|^2 &= 1.
\end{alignat*}

Here, the positive coefficients, $\beta_j=(\sum_{k\in\mathcal{S}(j)}\alpha_k^2)^{1/2}$, recover the normalization of $\alpha^{\prime}_k$ to $\alpha_k$.
The partitioning strategy to minimize the number of partitions, $N_\beta$, is equivalent to the minimum clique cover problem ($\texttt{MCC}$) of a graph whose nodes of Pauli words $\hat{P}_k$ are connected to each other if two Pauli words anticommute.
Not only is the $\texttt{MCC}$ an NP-hard problem, but it is also not equivalent to minimizing the sampling error, as described in Appendix \ref{sec:hamiltonian_overlap}.
Instead, it can be heuristically minimized with the $\texttt{SORTED INSERTION}$ algorithm \cite{efficient_quantum_measurement}, which is employed as the default technique throughout this study to determine the partitioning.

Applying unitary partitioning with the Hadamard test, for example, $\bra{\phi_1}\hat{H}\ket{\phi_2} = \sum_j\beta_j\bra{\phi_1}\hat{U}_j\ket{\phi_2}$ for some states $\ket{\phi_1}$ and $\ket{\phi_2}$, we then estimate the overlaps for Krylov matrix elements.
Any unitary grouping other than individual Pauli grouping reduces the sampling variances from $O(\|\hat{H}\|_\alpha^2/M)$ to $O(\|\hat{H}\|_\beta^2/M)$, with $\|\hat{H}\|_\alpha:=\sum_k{|\alpha_k|} \ge \|\hat{H}\|_\beta:=\sum_j{\beta_j}$ and the sampling number $M$.

Although one can use other algorithms for overlap estimation, such as the extended swap test \cite{quantum_filter_method}, the process and results of the analysis in this study can be applied in a similar way. However the anticommutation grouping is replaced by commutation grouping, which corresponds to solving the clique covering problem of the complement graph of the anticommutation graph.

\section{Generalized Eigenvalue Perturbations With Thresholding}
\label{sec:appendix_generalized_eignevalue_perturbation}

Here, we review another version of Theorem \ref{theorem:generalized_eigenvalue_perturbation} using previous results \cite[Theorem~2.7]{Theory_QSDK} considering thresholding.

\begin{theorem}[Generalized Eigenvalue Perturbations with thresholding, {\cite[Theorem~2.7]{Theory_QSDK}}]
	For perturbed pair $(\tilde{\bm{H}}, \tilde{\bm{S}})=(\bm{H}+\bm{\Delta_H}, \bm{S}+\bm{\Delta_S})$ with size $n\times n$ and positive definite $\bm{S}$, let $\tilde{\bm{S}}=\tilde{\bm{V}}^{\dagger}\bm{\tilde{\Lambda}_S}\tilde{\bm{V}}$ be diagonalized by $\tilde{\bm{V}}$, where $\bm{\tilde{\Lambda}_S}=\mathrm{diag}\{\tilde{\lambda}_1, \cdots \tilde{\lambda}_{n}\}$ is ordered by nonincreasing $\tilde{\lambda}$'s.
	Based on threshold parameter $\epsilon>0$, the column vectors in $\tilde{\bm{V}}$ are cut off if the corresponding singular values of $\tilde{\lambda}$ are less than $\epsilon$, resulting in a truncated $n\times {n_\epsilon}$ matrix $\bm{\tilde{V}}_{>\epsilon}$, where ${n_\epsilon}$ is the number of remaining column vectors, i.e., $\tilde{\lambda}_{{n_\epsilon}}\ge\epsilon\ge\tilde{\lambda}_{{n_\epsilon}+1}$. Similarly, let $\bm{\Lambda_S}=\mathrm{diag}\{\lambda_1, \cdots \lambda_n\}$ and $\bm{V}_{>\epsilon}$ be the singular values and the truncated basis matrices obtained from the unperturbed pair $(\bm{H}, \bm{S})$.
	Consider another pair $(\bm{A}, \bm{B})=(\bm{V}_{>\epsilon}^{\dagger}\bm{H}\bm{V}_{>\epsilon},\bm{V}_{>\epsilon}^{\dagger}\bm{S}\bm{V}_{>\epsilon})$ as the thresholded pair and its perturbed pair $(\tilde{\bm{A}}, \tilde{\bm{B}})$ obtained from $(\tilde{\bm{H}}, \tilde{\bm{S}})$ with conjugation by $\tilde{\bm{V}}_{>\epsilon}$.
	Determine if the following three assumptions hold:
	\begin{enumerate}[leftmargin=\assumptionmargin,label=\assumptionlabel \arabic* ]
		\item Pair $(\bm{H}, \bm{S})$ satisfies the following geometric mean bound for some parameters $\mu>0$ and $0\le\alpha\le1/2$,
		\begin{equation}
		\left| \bm{v}_i^{\dagger} \bm{H} \bm{v}_j \right| \le \mu \min(\lambda_i, \lambda_j)^{1-\alpha}\max(\lambda_i, \lambda_j)^{\alpha} 
		\end{equation}
		for all $1\le i,j\le n$,
		where $\bm{v}_i$ is the $i^{\mathrm{th}}$ eigenvector of $\bm{S}$.\label{a:1_Theory_QSDK}
		\item The largest singular value of matrix $\bm{S}$ that is smaller than the threshold, denoted as $\lambda_{n_\epsilon}$, and the next singular value, denoted as $\lambda_{{n_\epsilon}+1}$, are sufficiently separated such that the following inequality holds:
		\begin{equation}
		\lambda_{{n_\epsilon}+1}+\|\bm{\Delta_S}\|\le \epsilon <(1+\rho)\epsilon \le \lambda_{{n_\epsilon}}
		\end{equation}
		for some $\rho > 0$.\label{a:2_Theory_QSDK}
		\item Noise $\|\bm{\Delta_S}\|$ is sufficiently small such that $(1+\rho^{-1})\|\bm{\Delta_S}\|/\epsilon\le 1$.\label{a:3_Theory_QSDK}
	\end{enumerate}
	
	If these hold, then the perturbation on $(\bm{A}, \bm{B})$ is bounded as 
	\begin{equation}
	\label{eq:precise_def_chi}
	\chi := \sqrt{\|\bm{W}^\dagger\tilde{\bm{A}}\bm{W} - \bm{A}\|^2 + \|\bm{W}^\dagger\tilde{\bm{B}}\bm{W} - \bm{B}\|^2} \le 3(2+\mu)n^3(1+\rho^{-1})\left(\frac{\|\bm{S}\|}{\epsilon}\right)^\alpha \|\bm{\Delta_S}\| + \|\bm{\Delta_H}\|
	\end{equation}
	for conjugation, $\bm{W}:=\tilde{\bm{V}}_{>\epsilon}^\dagger\bm{V}_{>\epsilon}$.
	
	Furthermore, let $E^{(n)}_0$ and $E^{(n)}_1$ be the ground and first excited eigenvalues of the pair $(\bm{A}, \bm{B})$. Suppose the following assumptions:
	\begin{enumerate}[leftmargin=\assumptionmargin,label=\assumptionlabel \arabic* ]\addtocounter{enumi}{3}
		\item Error bound $\chi$ is sufficiently small: ${n_\epsilon}\chi \le \lambda_{n_\epsilon}.$\label{a:4_Theory_QSDK}
		\item Gap condition $|\tan^{-1}E^{(n)}_1 - \tan^{-1}E^{(n)}_0|\ge \sin^{-1}\frac{{n_\epsilon}\chi}{\lambda_m}$ holds\label{a:5_Theory_QSDK}.
	\end{enumerate}
	Then, with the condition number that corresponds to the eigenangle $\tan^{-1}E^{(n)}_0$, $d_0^{-1}$, the eigenangle recovered by the thresholding applied to $(\tilde{\bm{H}}, \tilde{\bm{S}})$ satisfies
	\begin{equation}
	\left| \tan^{-1}E^{(n)}_0 - \tan^{-1}\tilde{E}^{(n)}_0 \right| \le \sin^{-1}\frac{{n_\epsilon}\chi}{d_0}.
	\end{equation}
	
\end{theorem}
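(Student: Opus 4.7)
The plan is to apply Theorem \ref{theorem:generalized_eigenvalue_perturbation} to the thresholded pair $(\tilde{\bm{A}}, \tilde{\bm{B}})$, substitute the random-matrix norm bounds from Theorems \ref{theorem:toeplitz_error_matrix}--\ref{theorem:non_toeplitz_error_matrix} for the geometric perturbation parameter $\chi$, optimize the shot allocation between $M_{\bm{H}}$ and $M_{\bm{S}}$, and finally convert the resulting eigenangle bound into an eigenvalue bound by linearizing $\tan^{-1}$. Since the hypotheses \eqref{eq:assume_err_bound} and \eqref{eq:assume_gap_condition} are taken as given, the eigenangle conclusion \eqref{eq:gep_theorem_result} will be immediately available for $(\tilde{\bm{A}}, \tilde{\bm{B}})$ as soon as an explicit shot-dependent upper bound on $\chi$ is in hand.

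First I would reduce $\chi$ to a shot-dependent quantity. The hypothesis \eqref{eq:thm_chi_assumption} gives $\chi \le \eta$, and the random-matrix envelope \eqref{eq:mat_norm_bound} applied separately to $\bm{\Delta_H}$ and $\bm{\Delta_S}$, with $e_{\bm{H}}$ drawn from the appropriate Toeplitz or non-Toeplitz row of Theorems \ref{theorem:toeplitz_error_matrix}--\ref{theorem:non_toeplitz_error_matrix} and $e_{\bm{S}}$ always taken Toeplitz, yields
\begin{equation*}
\chi \;\le\; \eta \;\le\; \mathcal{E}_2 \;:=\; \sqrt{\frac{e_{\bm{H}}^2}{M_{\bm{H}}} + \frac{e_{\bm{S}}^2}{M_{\bm{S}}}}.
\end{equation*}
Next I would minimize $\mathcal{E}_2$ over $(M_{\bm{H}}, M_{\bm{S}})$ subject to the constraint $M_{\bm{H}} + M_{\bm{S}} = M$. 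A single Lagrange multiplier gives the stationary condition $e_{\bm{H}}^2/M_{\bm{H}}^2 = e_{\bm{S}}^2/M_{\bm{S}}^2$, which immediately recovers the optimal allocations \eqref{eq:opt_MH}--\eqref{eq:opt_MS} and evaluates to the minimum $\mathcal{E}_2^{\star} = (e_{\bm{H}} + e_{\bm{S}})/\sqrt{M}$.

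Substituting $\chi \le \mathcal{E}_2^{\star}$ into \eqref{eq:gep_theorem_result} then produces an eigenangle bound whose $\sin^{-1}$ argument matches the one appearing on the right-hand side of \eqref{eq:gep_theorem_result_sampling_noise}, up to the universal constant in the numerator. The final step is the linearization of $\tan^{-1}$: by the mean-value theorem, $|\tilde{E} - E| = (1 + \xi^2)\,|\tan^{-1}\tilde{E} - \tan^{-1}E|$ for some $\xi$ between $E$ and $\tilde{E}$, and the small-relative-error hypothesis $|\tilde{E} - E|/E \ll 1$ lets one replace $\xi$ by $E$ at the cost of the quadratic remainder $O(|\tilde{E}-E|^2/E)$ written explicitly in \eqref{eq:linear_apprx_pert}, which is discarded to leading order; this produces the prefactor $(1 + E_0^{(n\rightarrow n_\epsilon)2})$ in \eqref{eq:gep_theorem_result_sampling_noise}.

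The main subtlety I anticipate is the bookkeeping between two dimensions: Theorem \ref{theorem:generalized_eigenvalue_perturbation} is to be applied at the reduced dimension $n_\epsilon$ with $\chi$ measured on the thresholded pair $(\tilde{\bm{A}}, \tilde{\bm{B}})$, while the norm envelope \eqref{eq:mat_norm_bound} and the shot optimization live naturally on the full $n$-dimensional pair $(\tilde{\bm{H}}, \tilde{\bm{S}})$. The simplifying hypothesis \eqref{eq:thm_chi_assumption} that $\chi \le \eta$ is precisely the bridge that keeps this substitution clean; without it one would be forced to invoke the more delicate bound \eqref{eq:chi_bound} with its $1/(1+\alpha)$ exponent, which would convert the final statement from non-asymptotic to asymptotic in $\eta$ and would introduce auxiliary parameters (the geometric-mean constant $\mu$ and the gap factor $\rho$ of the thresholding hypothesis) that the clean form \eqref{eq:gep_theorem_result_sampling_noise} is designed to avoid.
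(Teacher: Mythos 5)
There is a genuine gap: you have written a proof of the wrong theorem. What you describe --- bounding $\chi\le\eta\le\mathcal{E}_2$, optimizing the shot split between $M_{\bm{H}}$ and $M_{\bm{S}}$ by a Lagrange multiplier, and linearizing $\tan^{-1}$ to get the prefactor $(1+E_0^{(n\rightarrow n_\epsilon)2})$ --- is precisely the paper's proof of Theorem~\ref{theorem:main_result} in the main text. The statement you were asked to prove is the appendix restatement of \cite[Theorem~2.7]{Theory_QSDK}, whose essential content is the inequality \eqref{eq:precise_def_chi}: a bound on the \emph{post-thresholding} perturbation $\chi=\sqrt{\|\bm{W}^\dagger\tilde{\bm{A}}\bm{W}-\bm{A}\|^2+\|\bm{W}^\dagger\tilde{\bm{B}}\bm{B}\bm{W}-\bm{B}\|^2}$ in terms of the \emph{pre-thresholding} error norms $\|\bm{\Delta_H}\|$ and $\|\bm{\Delta_S}\|$, with the explicit factor $3(2+\mu)n^3(1+\rho^{-1})(\|\bm{S}\|/\epsilon)^\alpha$. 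Your proposal never touches this. Worse, the bridge you lean on --- the hypothesis \eqref{eq:thm_chi_assumption} that $\chi\le\eta$ --- is an \emph{assumption of Theorem~\ref{theorem:main_result}}, not something available here; the whole purpose of the appendix theorem is to \emph{derive} the relation between $\chi$ and $(\|\bm{\Delta_H}\|,\|\bm{\Delta_S}\|)$ rather than assume it. Sampling-noise envelopes, shot allocation, and the arctangent linearization play no role in the statement at all (it contains no $M$, no $e_{\bm{Z}}$, and its conclusion is an eigenangle bound, not an eigenvalue bound).

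A correct proof of \eqref{eq:precise_def_chi} must confront the fact that thresholding the perturbed matrix $\tilde{\bm{S}}$ and the unperturbed matrix $\bm{S}$ can select different singular subspaces, so one needs \ref{a:2_Theory_QSDK} and \ref{a:3_Theory_QSDK} (via Weyl's inequality and a Davis--Kahan-type $\sin\Theta$ argument) to control the rotation $\bm{W}=\tilde{\bm{V}}_{>\epsilon}^\dagger\bm{V}_{>\epsilon}$ away from an isometry, and \ref{a:1_Theory_QSDK} to bound the leakage of $\bm{H}$ into the discarded ill-conditioned directions --- this is where the $(\|\bm{S}\|/\epsilon)^\alpha$ and $n^3$ factors come from. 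Only the final step of the statement (the eigenangle bound under \ref{a:4_Theory_QSDK} and \ref{a:5_Theory_QSDK}) is the straightforward application of Theorem~\ref{theorem:generalized_eigenvalue_perturbation} that your plan gestures at. Note also that the paper itself does not prove this theorem; it imports it wholesale from \cite{Theory_QSDK} and only remarks on how its assumptions relate to those of \cite[Corollary~3.6]{Mathias2004TheDG}, so any self-contained proof would have to reproduce the subspace-perturbation analysis of that reference.
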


Based on this theorem, it was previously mentioned \cite{Theory_QSDK} that \ref{a:1_Theory_QSDK} necessarily holds if two parameters are chosen $\mu=\max|\Lambda(\bm{H}, \bm{S})| \lessapprox \|\hat{H}\|$ and $\alpha=1/2$.
Although Eq.\eqref{eq:precise_def_chi} differs from the definition in Eq.\eqref{eq:def_chi}, Eq.\eqref{eq:precise_def_chi} can also explain the perturbation for $(\tilde{\bm{A}}, \tilde{\bm{B}})$ because the perturbed pair $(\tilde{\bm{A}}, \tilde{\bm{B}})$ and its conjugated pair, $(\bm{W}^{\dagger}\tilde{\bm{A}}\bm{W}, \bm{W}^{\dagger}\tilde{\bm{B}}\bm{W})$ are identical problems.
Furthermore, \ref{a:4_Theory_QSDK} in \cite{Theory_QSDK} is analogous to
\begin{equation*}
r(\bm{\Delta_A}, \bm{\Delta_B})\|\bm{X}\|^2 < c(\bm{A},\bm{B}),
\end{equation*}
in another study \cite[Corollary~3.6]{Mathias2004TheDG}, where $\bm{X}$ is a matrix that diagonalizes $\bm{A}+i\bm{B}$ and $r(\bm{\Delta_A}, \bm{\Delta_B}):=\max\{|\bm{x}^{\dagger}(\bm{\Delta_A}+i\bm{\Delta_B})\bm{x}|:\bm{x}^{\dagger}\bm{x} = 1\}$ is defined.
Also, the Crawford number $c(\bm{A}, \bm{B})$, can be more tightly bounded by $\lambda_{n_\epsilon}$, rather than $\epsilon$ in \cite{Theory_QSDK}:
\begin{equation*}
c(\bm{A}, \bm{B}) := \min_{\bm{x}\in\mathbb{C}^n:\bm{x}^\dagger\bm{x}=1}{\left\{ \sqrt{|\bm{x}^{\dagger}\bm{A}\bm{x}|^2+ |\bm{x}^{\dagger}\bm{B}\bm{x}|^2 } \right\}}
\ge \left\|\bm{B}^{-1}\right\|^{-1}=\lambda_{n_\epsilon}.     
\end{equation*}

\section{Hadamard Test Variance}
\label{sec:appendix_hadamard_sampling_variance}
Consider the Hadamard tests performed independently for the real and imaginary parts with $m^{(\mathrm{r})}$ and $m^{(\mathrm{i})}$ samplings, respectively.
Moreover, under such sampling setting, let $Z=R+iI$ be an estimator for $\braket{\psi|\hat{U}|\psi}$, where $R$ represents the real part of $Z$, and $I$ represents the imaginary part.
The expectation value of $Z$ is given by,
\begin{equation}\label{eq:hadamard_noiseless_expectation}
\mathbb{E}[Z] = \mathbb{E}[R] + i\mathbb{E}[I] := \mathrm{Re}[\bra{\psi}\hat{U}\ket{\psi}] + i \mathrm{Im}[\bra{\psi}\hat{U}\ket{\psi}].
\end{equation}
If one labels the probability to find $\ket{0}$ at the measurement of ancillary qubit in the single real and imaginary settings as $p$ and $q$, respectively, the expectation values are represented as
\begin{equation}\label{eq:probability_hadamard}
\begin{split}
\mathbb{E}[R] =& 2p - 1, \\
\mathbb{E}[I] =& 2q - 1.
\end{split}
\end{equation}
Furthermore, the random variables $R$ and $I$ are averaged Bernoulli distributions, 
\begin{equation}
\begin{split}
R \sim \frac{2}{m^{(\mathrm{r})}}\mathrm{Bin}(m^{(\mathrm{r})}, p) - 1,\\
I \sim \frac{2}{m^{(\mathrm{i})}}\mathrm{Bin}(m^{(\mathrm{i})}, q) - 1,
\end{split}
\end{equation}
where $\mathrm{Bin}(m, p)$ denotes a binomial distribution with $m$ samplings and a success probability of $p$.

Thus, the sampling variances become
\begin{equation}
\begin{split}
\mathrm{Var}[R] &= \frac{4p(1-p)}{m^{(\mathrm{r})}} = \frac{1-\mathbb{E}[R]^2}{m^{(\mathrm{r})}},\\
\mathrm{Var}[I] &= \frac{4q(1-q)}{m^{(\mathrm{i})}} = \frac{1-\mathbb{E}[I]^2}{m^{(\mathrm{i})}}.
\end{split}
\end{equation}
 The variance of $Z$ is then given as
\begin{equation}\label{eq:hadamard_noiseless_variance}
\begin{split}
\mathrm{Var}[Z] &= \mathrm{Var}[R] + \mathrm{Var}[I] \\ 
&=  \frac{1 - \mathrm{Re}[\bra{\psi}\hat{U}\ket{\psi}]^2}{m^{(\mathrm{r})}} + \frac{1 -\mathrm{Im}[\bra{\psi}\hat{U}\ket{\psi}]^2}{m^{(\mathrm{i})}}.     
\end{split}
\end{equation}
Furthermore, it is minimized under the constraint of shots, $m^{(\mathrm{r})}+m^{(\mathrm{i})}=m$, by assigning 
\begin{equation}\label{eq:hadamard_shot_alloc}
\begin{split}
m^{(\mathrm{r})} &\propto \sqrt{1-\mathrm{Re}[\braket{\psi|\hat{U}|\psi}]^2}\\
m^{(\mathrm{i})} &\propto \sqrt{1-\mathrm{Im}[\braket{\psi|\hat{U}|\psi}]^2}.
\end{split}
\end{equation}

To remove the dependency of $\braket{\psi|\hat{U}|\psi}$, which is unknown in prior, Haar measure averaging is applied both on $\hat{U}$ and another unitary, $\hat{V}$ such that $\ket{\psi} = \hat{V}\ket{0}$ (i.e. Fubini-Study measure averaging on $\ket{\psi}$) as below:
\begin{equation}
\mathbb{E}_{\hat{U},\hat{V}\sim \mathcal{H}(\mathcal{U}_N)}\left[\left|\bra{0}\hat{V}^{\dagger}\hat{U}\hat{V}\ket{0}\right|^2\right] =
\frac{1}{N}, \label{eq:haar_average_2}
\end{equation}
where $N$ is the Hilbert space dimension.
Here, Eq.\eqref{eq:haar_average_2} holds because of the left- and right- unitary invariant property ($\mathbb{E}_{\hat{U},\hat{V}}[f(\hat{V}^{\dagger}\hat{U}\hat{V})]=\mathbb{E}_{\hat{U}}[f(\hat{U})]$) and a previous result \cite[Corollary 1.2]{https://doi.org/10.48550/arxiv.math/0608108}.
Furthermore, because $\mathbb{E}_{\hat{U}}[f(\hat{U})]=\mathbb{E}_{\hat{U}}[f(i\hat{U})]$ holds, we can show that
\begin{equation*}
\begin{split}
\mathbb{E}_{\hat{U}}\left[\mathrm{Re}[\braket{\psi|\hat{U}|\psi}]^2\right] &= \mathbb{E}_{\hat{U}}\left[\mathrm{Re}[\braket{\psi|i\hat{U}|\psi}]^2\right] \\
&=\mathbb{E}_{\hat{U}}\left[\mathrm{Im}[-\braket{\psi|\hat{U}|\psi}]^2\right] \\
&=\mathbb{E}_{\hat{U}}\left[\mathrm{Im}[\braket{\psi|\hat{U}|\psi}]^2\right].
\end{split}
\end{equation*}
Therefore, the shots in Eq.\eqref{eq:hadamard_shot_alloc} are evenly allocated ($m^{(\mathrm{r})}=m^{(\mathrm{i})}=m/2$), which results in the average variance,
\begin{equation}
\mathbb{E}_{\hat{U},\ket{\psi}}\left[\mathrm{Var}[Z]\right] =\frac{2}{m}\left(2 - \frac{1}{N}\right).
\end{equation}
Consequently, in the large Hilbert space ($N\to\infty$) and after averaging out the state and operator dependencies, the variance becomes as follow:
\begin{equation}
\mathrm{Var}[Z] \approx \frac{4}{m}, \label{eq:single_h_overlap_var}
\end{equation}
which results in Eq.\eqref{eq:var_mat_elem_S}.

\subsection{Hamiltonian Overlap}
\label{sec:hamiltonian_overlap}
Although obtaining a matrix element of $\bm{S}$ by the Hadamard test is straightforward, as written in Eq.\eqref{eq:single_h_overlap_var}, decomposition of the Hamiltonian by a unitary sum in Eq.\eqref{eq:pauli_ham_grp} should be used to evaluate the Hamiltonian overlap $\mathbb{E}[\bm{H}_{kl}] = \bra{\phi}\hat{V}_k^\dagger \hat{H} \hat{V}_l \ket{\phi}$, using a sequence of $N_\beta$ Hadamard tests, where $\hat{V}_k\ket{\phi} = \ket{\phi_k}$ and $\hat{V}_l\ket{\phi} = \ket{\phi_l}$, as below:
\begin{equation*}
\begin{split}
\bra{\phi}\hat{V}_k^\dagger \hat{H} \hat{V}_l \ket{\phi} =& \sum_{j=1}^{N_\beta} \beta_j \bra{\phi}\hat{V}_k^\dagger\hat{U}_j\hat{V}_l\ket{\phi} \\=:& \sum_{j=1}^{N_\beta} \beta_j \bra{\phi}\hat{U}'_j\ket{\phi}.
\end{split}
\end{equation*}

By using Eq.\eqref{eq:single_h_overlap_var} again, but with a distribution of given total shots $m=\sum_j^{N_\beta}m_j$, the merged sampling variance of the Hadamard tests is given as
\begin{equation}
\mathrm{Var}[\bm{H}_{kl}] \approx \sum_{j=1}^{N_\beta} \frac{4\beta_j^2}{m_j},
\end{equation}
which can be optimized with the following Lagrange multiplier approach:
\begin{equation}
\mathcal{L}(\{m_j\}_{j=1}^{N_\beta}; \lambda) = \sum_{j=1}^{N_\beta} \frac{4\beta_j^2}{m_j} + \lambda\left( \sum_{j=1}^{N_\beta}m_j - m \right),
\end{equation}
where $\lambda$ is a Lagrangian multiplier.
The resulting distribution of shots and optimized variance then become
\begin{align}
m_j &= \frac{\beta_j}{\sum_{j'} \beta_{j'}}m, \\
\mathrm{Var}[\bm{H}_{kl}]^\star &= \frac{4(\sum_j \beta_j)^2}{m} =: \frac{4\|\hat{H}\|^2_\beta}{m}, \label{eq:ham_overlap_var}
\end{align}
which result in Eq.\eqref{eq:var_mat_elem_H}.

From Eq.\eqref{eq:ham_overlap_var}, it is desirable to find the anticommutation grouping that minimizes $\|\hat{H}\|_\beta$.
As shown elsewhere \cite{efficient_quantum_measurement}, this formalism is similar to the measurement optimization in the VQE; however, finding such a grouping is NP-hard. Therefore, a heuristic algorithm was suggested instead. Furthermore, one can easily show that any grouping strategy is better than the individual Pauli grouping ($\hat{U}_j = \hat{P}_j$) because $\sum_j \beta_j=\sum_j (\sum_{k\in\mathcal{S}(j)}\alpha_k^2)^{1/2}\le\sum_k|\alpha_k|$.

\section{Behavior of the Random Matrices}
\label{sec:appendix_matrix_variance_statistics}
In this section, Theorems \ref{theorem:toeplitz_error_matrix} and \ref{theorem:non_toeplitz_error_matrix} are proved.
However, before that, the random matrix theory is introduced based on the matrix Laplace transform with slight modifications.

\begin{theorem}[Matrix Gaussian Series, {\cite[Theorem~4.1.1]{MAL-048}}]
	\label{theorem:matrix_gaussian_series}
	Let $\{\bm{A}_k\}$ be a fixed finite sequence of Hermitian matrices with dimensions $n$, and $\{{X}_k\}$ be an independent finite sequence of normal random variables with zero means and the variances of $\{\sigma^2_k\}$. Then, one can introduce a Gaussian matrix series,
	\begin{equation}
	\label{eq:sum_of_random_matrices}
	\bm{\Delta_Z}:=\sum_k {X}_k \bm{A}_k.
	\end{equation}
	Let $v(\bm{\Delta_Z})$ be the matrix variance statistic of the sum:
	\begin{equation}
	\label{eq:matrix_variance}
	v(\bm{\Delta_Z}) := \|\mathbb{E}[\bm{\Delta}_{\bm{Z}}^2]\| = \left\|\sum_k \sigma^2_k \bm{A}^2_k\right\|.
	\end{equation}
	Then, the expected norm of $\bm{\Delta_Z}$ is bounded as
	\begin{equation}
	\label{eq:expected_value_sum_of_random_matrices}
	\mathbb{E}[\|\bm{\Delta_Z}\|] \le \sqrt{2v(\bm{\Delta_Z})\log(2n)}.
	\end{equation}
	Furthermore, for all $t\ge 0$, the following inequality holds:
	\begin{equation}
	\label{eq:prob_sum_of_random_matrices}
	\mathbb{P}\{\|\bm{\Delta_Z}\|\ge t \} \le 2n \exp{\left(\frac{-t^2}{2v(\bm{\Delta_Z})}\right)}.
	\end{equation}
\end{theorem}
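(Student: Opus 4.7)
The plan is to follow the matrix Laplace transform framework. For any $\theta > 0$, the matrix Markov inequality gives
\begin{equation*}
\mathbb{P}\{\lambda_{\max}(\bm{\Delta_Z}) \ge t\} \le e^{-\theta t}\,\mathbb{E}[\operatorname{tr}\exp(\theta \bm{\Delta_Z})],
\end{equation*}
so the tail problem reduces to controlling the trace matrix moment generating function of the independent sum.

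To bound that MGF I would invoke Lieb's concavity theorem in its subadditivity form: for independent Hermitian random matrices $\bm{Y}_k := X_k \bm{A}_k$,
\begin{equation*}
\mathbb{E}\!\left[\operatorname{tr}\exp\!\left(\theta\sum_k \bm{Y}_k\right)\right] \le \operatorname{tr}\exp\!\left(\sum_k \log \mathbb{E}[\exp(\theta \bm{Y}_k)]\right).
\end{equation*}
Each $X_k$ is a zero-mean scalar Gaussian with variance $\sigma_k^2$, so the one-term MGF is a commuting matrix exponential, $\mathbb{E}[\exp(\theta X_k \bm{A}_k)] = \exp(\theta^2 \sigma_k^2 \bm{A}_k^2 / 2)$, whose matrix logarithm is simply $\theta^2 \sigma_k^2 \bm{A}_k^2 / 2$. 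Summing and using $\operatorname{tr}\exp(\bm{M}) \le n\exp(\lambda_{\max}(\bm{M}))$ for $n\times n$ Hermitian $\bm{M}$ yields $\mathbb{E}[\operatorname{tr}\exp(\theta \bm{\Delta_Z})] \le n \exp(\theta^2 v(\bm{\Delta_Z})/2)$. By the symmetry of the Gaussians, $-\bm{\Delta_Z}$ obeys the same bound, so a union bound over $\lambda_{\max}(\bm{\Delta_Z})$ and $-\lambda_{\min}(\bm{\Delta_Z}) = \lambda_{\max}(-\bm{\Delta_Z})$ gives $\mathbb{P}\{\|\bm{\Delta_Z}\| \ge t\} \le 2n\exp(\theta^2 v(\bm{\Delta_Z})/2 - \theta t)$, and choosing $\theta = t/v(\bm{\Delta_Z})$ produces Eq.~\eqref{eq:prob_sum_of_random_matrices}.

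For Eq.~\eqref{eq:expected_value_sum_of_random_matrices} I would start from the pointwise estimate $e^{\theta \|\bm{\Delta_Z}\|} \le \operatorname{tr}e^{\theta \bm{\Delta_Z}} + \operatorname{tr}e^{-\theta \bm{\Delta_Z}}$, take expectations, apply Jensen's inequality on the left-hand side and the MGF estimate on the right, giving $\mathbb{E}[\|\bm{\Delta_Z}\|] \le \log(2n)/\theta + \theta v(\bm{\Delta_Z})/2$; minimizing at $\theta = \sqrt{2\log(2n)/v(\bm{\Delta_Z})}$ yields $\sqrt{2v(\bm{\Delta_Z})\log(2n)}$. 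The main obstacle is Lieb's concavity theorem itself: it is the only non-elementary ingredient and is exactly what enables the matrix Chernoff argument to go through despite the non-commutativity of the summands $X_k \bm{A}_k$. Once subadditivity of the matrix cumulant generating function is in hand, the remainder is a scalar Gaussian MGF calculation, Markov's inequality, and routine optimization in $\theta$.
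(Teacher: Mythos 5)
Your proof is correct and takes exactly the route that underlies the cited result: the paper does not reprove this theorem but imports it verbatim from \cite[Theorem~4.1.1]{MAL-048}, remarking only that the extension from unit variance to variances $\sigma_k^2$ is a trivial rescaling of the Gaussians. Your reconstruction --- the matrix Laplace transform, Lieb/subadditivity of the matrix cumulant generating function, the one-term Gaussian MGF $\mathbb{E}[\exp(\theta X_k \bm{A}_k)] = \exp(\theta^2\sigma_k^2\bm{A}_k^2/2)$, the bound $\operatorname{tr}\exp(\bm{M})\le n\exp(\lambda_{\max}(\bm{M}))$, and the optimizations in $\theta$ --- is precisely the standard argument behind that reference, so the two approaches coincide.
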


Theorem \ref{theorem:matrix_gaussian_series} is a modified version of its original version that adds the Hermitian condition and removes its unit variance condition.
This modification is made because, if a random variable $X$ follows the standard-normal distribution $\mathcal{N}(0, 1)$, its scaled random variable $\sigma X$ would have a distribution of $\mathcal{N}(0, \sigma^2)$ and vice versa.
Eq.\eqref{eq:expected_value_sum_of_random_matrices} provides a tight bound for the expected norm, whereas Eq.\eqref{eq:prob_sum_of_random_matrices} does not offer sufficient insight into the random matrix norm because its right-hand side is larger than 1 unless the deviation is considerably large $(t > \sqrt{2v(\bm{\Delta_Z})\log{2n}})$.
Hence, the use of the concentration inequality has been suggested \cite[Eq.4.1.8]{MAL-048}, treating $|\bm{\Delta_Z}|$ as a $v_{\star}$-Lipschitz function of the Gaussian random variables \cite[Theorem 5.6.]{10.1093/acprof:oso/9780199535255.002.0004}. 

\begin{theorem}[Matrix Gaussian Series Concentration, {\cite[Theorem~5.6.]{10.1093/acprof:oso/9780199535255.002.0004}}]
	\label{theorem:matrix_gaussian_series_concentration}
	The notation from Theorem \ref{theorem:matrix_gaussian_series} is adopted, and a weak variance of $\bm{\Delta_Z}$, denoted as $v_{\star}(\bm{\Delta_Z})$, is introduced, such that
	\begin{equation}
	\begin{split}
	v_{\star}(\bm{\Delta_Z})&:=\sup_{\|\bm{u}\|=\|\bm{v}\|=1}\mathbb{E}[|\bm{u}^{\dagger}\bm{\Delta_Z}\bm{v}|^2] \\
	&=\sup_{\|\bm{u}\|=\|\bm{v}\|=1}\sum_{k}\sigma_k^2|\bm{u}^{\dagger}\bm{A}_k\bm{v}|^2.
	\end{split}
	\end{equation}
	Then, the following concentration inequality holds because the weak variance is bounded as $v(\bm{\Delta_Z})/n\le v_{\star}(\bm{\Delta_Z}) \le v(\bm{\Delta_Z})$.
	\begin{equation}
	\label{eq:mat_norm_tail_bound}
	\mathbb{P}\left\{\|\bm{\Delta_Z}\| \ge \mathbb{E}[\|\bm{\Delta_Z}\|] + t\right\} \le 
	\exp{\left(\frac{-t^2}{2v_{\star}(\bm{\Delta_Z})}\right)}\le
	\exp{\left(\frac{-t^2}{2v(\bm{\Delta_Z})}\right)}.
	\end{equation}
\end{theorem}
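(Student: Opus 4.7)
The plan is to express $\|\bm{\Delta_Z}\|$ as a Lipschitz function of an i.i.d.\ standard Gaussian vector and then invoke the Borell--Tsirelson--Ibragimov--Sudakov (Gaussian) concentration inequality; the sandwich $v(\bm{\Delta_Z})/n \le v_\star(\bm{\Delta_Z}) \le v(\bm{\Delta_Z})$ is handled by a short separate computation.

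First, rescale by setting $Y_k := X_k/\sigma_k \sim \mathcal{N}(0,1)$ independently, and define $F(\mathbf{y}) := \bigl\|\sum_k \sigma_k y_k \bm{A}_k\bigr\|$. By the reverse triangle inequality together with the variational formula $\|\bm{M}\| = \sup_{\|\bm{u}\|=\|\bm{v}\|=1} |\bm{u}^\dagger \bm{M} \bm{v}|$,
\begin{equation*}
|F(\mathbf{y}) - F(\mathbf{y}')| \le \sup_{\|\bm{u}\|=\|\bm{v}\|=1} \Bigl| \sum_k \sigma_k(y_k - y'_k)\, \bm{u}^\dagger \bm{A}_k \bm{v} \Bigr|.
\end{equation*}
Applying Cauchy--Schwarz to the inner sum bounds the right-hand side by $\|\mathbf{y}-\mathbf{y}'\|\, \sqrt{\sum_k \sigma_k^2 |\bm{u}^\dagger \bm{A}_k \bm{v}|^2}$, and taking the supremum over $(\bm{u},\bm{v})$ certifies that $F$ is Lipschitz with constant
\begin{equation*}
L = \sup_{\|\bm{u}\|=\|\bm{v}\|=1}\sqrt{\sum_k \sigma_k^2 |\bm{u}^\dagger \bm{A}_k \bm{v}|^2} = \sqrt{v_\star(\bm{\Delta_Z})}.
\end{equation*}

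Next, I would apply the Borell--TIS inequality: for any $L$-Lipschitz $F$ and standard Gaussian $\mathbf{Y}$, $\mathbb{P}\{F(\mathbf{Y}) \ge \mathbb{E}[F(\mathbf{Y})] + t\} \le \exp(-t^2/(2L^2))$. Combined with $L^2 = v_\star(\bm{\Delta_Z})$, this yields the first inequality in~\eqref{eq:mat_norm_tail_bound}. For the sandwich, the upper bound $v_\star \le v$ follows from Cauchy--Schwarz, $|\bm{u}^\dagger \bm{A}_k \bm{v}|^2 \le \|\bm{A}_k \bm{v}\|^2$, whence $\sum_k \sigma_k^2|\bm{u}^\dagger \bm{A}_k \bm{v}|^2 \le \bm{v}^\dagger\bigl(\sum_k \sigma_k^2 \bm{A}_k^2\bigr)\bm{v}\le v(\bm{\Delta_Z})$. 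For $v_\star \ge v/n$, pick a top unit eigenvector $\bm{v}_0$ of $\mathbb{E}[\bm{\Delta_Z}^2]$ and expand $v(\bm{\Delta_Z}) = \mathbb{E}\|\bm{\Delta_Z}\bm{v}_0\|^2 = \sum_{i=1}^n \sum_k \sigma_k^2 |\bm{e}_i^\dagger \bm{A}_k \bm{v}_0|^2 \le n\, v_\star(\bm{\Delta_Z})$ for any orthonormal basis $\{\bm{e}_i\}$, which supplies the second inequality in the theorem.

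The main obstacle is the Lipschitz identification with the \emph{exact} constant $\sqrt{v_\star}$ rather than the looser $\sqrt{v}$: the weak variance arises naturally precisely because Cauchy--Schwarz is applied pointwise in $(\bm{u},\bm{v})$ and only then is the supremum taken, so that the rank-one quantity $|\bm{u}^\dagger \bm{A}_k \bm{v}|^2$ (rather than $\|\bm{A}_k \bm{v}\|^2$) accumulates. The Borell--TIS inequality itself is then invoked as a black box from the referenced text; its proof, while nontrivial, lies outside the scope of this argument.
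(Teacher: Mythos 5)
Your proposal is correct and follows essentially the route the paper itself indicates: the paper cites this result from Tropp and Boucheron--Lugosi--Massart precisely by ``treating $\|\bm{\Delta_Z}\|$ as a $v_{\star}$-Lipschitz function of the Gaussian random variables,'' which is exactly your Lipschitz-plus-Borell--TIS argument, and your Cauchy--Schwarz derivation of the constant $\sqrt{v_\star(\bm{\Delta_Z})}$ together with the sandwich $v(\bm{\Delta_Z})/n \le v_\star(\bm{\Delta_Z}) \le v(\bm{\Delta_Z})$ is sound.
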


Although this result alone fails to help elucidate $\mathbb{E}[\|\bm{\Delta_Z}\|]$, the tail bound centered by $\mathbb{E}[\|\bm{\Delta_Z}\|]$ is tighter than in Eq.\eqref{eq:prob_sum_of_random_matrices}.
Therefore, Theorem \ref{theorem:matrix_gaussian_series_concentration} is used instead of Eq.\eqref{eq:prob_sum_of_random_matrices} to represent the concentration of the random matrix.
Theorem \ref{theorem:matrix_gaussian_series_concentration} is also applied to derive the concentration inequality for $\|\bm{\Delta_Z}\|$ in Eq.\eqref{eq:mat_norm_concentration}.

In the following subsections, Theorems \ref{theorem:toeplitz_error_matrix} and \ref{theorem:non_toeplitz_error_matrix} are proved.
Although the efforts here are similar to previous work \cite[Chapter 4.2]{MAL-048}, two additional factors are introduced to meet the needs of the present application: the application to the complex random matrices induced by a sampling error and the optimal distribution of samplings that minimize the bound for matrix variances.
Specifically, individual complex error matrix elements describe the sampling errors of the Hadamard test with a finite number of shots and are mean zero and variance inversely proportional to the sampling number, as shown in Appendix \ref{sec:appendix_hadamard_sampling_variance}.
Furthermore, the total number of samplings is adequately distributed to minimize the bound for the expected norm (see Eq.\eqref{eq:expected_value_sum_of_random_matrices}).

\subsection{Toeplitz matrix}\label{sec:Toeplitz_mat_theory_proof}
Here, Theorem \ref{theorem:toeplitz_error_matrix} is proved.
\begin{proof}
Consider a Hermitian, Toeplitz and Gaussian random matrix $\bm{\Delta_Z}$ generated by a random sequence with $2n-1$ Gaussian random variables $\{{X}^{(\mathrm{r})}_{k}\}_{k=0}^{n-1}$ and $\{{X}^{(\mathrm{i})}_{k}\}_{k=1}^{n-1}$, where the sequence with a superscript $(\mathrm{r})$ denotes the real part of the matrix elements, and that with a superscript $(\mathrm{i})$ denotes the imaginary part.
The imaginary diagonal elements are omitted (${X}^{(\mathrm{i})}_0=0$) to satisfy the Hermitian condition.
Thus, $\bm{\Delta_Z}$ can be written as a Gaussian matrix series that is consistent with Eq.\eqref{eq:sum_of_random_matrices},
\begin{align}
\label{eq:sum_of_gaussian_matrices_toeplitz}
\bm{\Delta_Z} = {X}_{0}^{(\mathrm{r})}\bm{I}_n + \sum_{k=1}^{n-1}
\left[({X}_{k}^{(\mathrm{r})}+i{X}_{k}^{(\mathrm{i})})\bm{C}^k_1
+({X}_{k}^{(\mathrm{r})}-i{X}_{k}^{(\mathrm{i})})\bm{C}^k_{-1}\right],
\end{align}
where $\bm{I}_n$ is the $n\times n$ identity matrix, and $\bm{C}_1$ and $\bm{C}_{-1}$ are the superdiagonal and subdiagonal matrices shifted by $\pm 1$:
\begin{equation}
\bm{C}_1 := 
\begin{bmatrix}
0 & 1 & 0 & 0 & \cdots \\
0 & 0 & 1 & 0 & \cdots \\
0 & 0 & 0 & 1 & \cdots \\
0 & 0 & 0 & 0 & \cdots \\
\vdots & \vdots & \vdots & \vdots & \ddots
\end{bmatrix}, \quad
\bm{C}_{-1} := 
\begin{bmatrix}
0 & 0 & 0 & 0 & \cdots \\
1 & 0 & 0 & 0 & \cdots \\
0 & 1 & 0 & 0 & \cdots \\
0 & 0 & 1 & 0 & \cdots \\
\vdots & \vdots & \vdots & \vdots & \ddots
\end{bmatrix}.
\end{equation}
In addition, for a positive integer $k$, $\bm{C}^k_1$ and $\bm{C}^k_{-1}$ are offset by $\pm k$ instead of $\pm1$. For example, their squares are
\begin{equation}
\bm{C}^2_1 = 
\begin{bmatrix}
0 & 0 & 1 & 0 & \cdots \\
0 & 0 & 0 & 1 & \cdots \\
0 & 0 & 0 & 0 & \cdots \\
0 & 0 & 0 & 0 & \cdots \\
\vdots & \vdots & \vdots & \vdots & \ddots
\end{bmatrix}, \quad
\bm{C}^2_{-1} =
\begin{bmatrix}
0 & 0 & 0 & 0 & \cdots \\
0 & 0 & 0 & 0 & \cdots \\
1 & 0 & 0 & 0 & \cdots \\
0 & 1 & 0 & 0 & \cdots \\
\vdots & \vdots & \vdots & \vdots & \ddots
\end{bmatrix}.
\end{equation}
Therefore, Eq.\eqref{eq:sum_of_gaussian_matrices_toeplitz} describes a Hermitian Toeplitz matrix.

Furthermore, let us assume that the variances corresponding to each random Gaussian sequence are given as $\{\sigma_{k}^{(\mathrm{r})2}\}_{k=0}^{n-1}$ and $\{\sigma_{k}^{(\mathrm{i})2}\}_{k=1}^{n-1}$, respectively.
Then, the matrix variance $v(\bm{\Delta_Z})$ is given by
\begin{equation}
\label{eq:var_z_toeplitz_1}
\begin{split}
v(\bm{\Delta_Z}) =& \left\| \sigma_{0}^{(\mathrm{r})2}\bm{I}_n + \sum_{k=1}^{n-1}\sigma_{k}^{(\mathrm{r})2}
(\bm{C}_1^k+\bm{C}_{-1}^k)^2 - \sigma_{k}^{(\mathrm{i})2}(\bm{C}_1^k-\bm{C}_{-1}^k)^2 \right\|\\
=& \left\| \sigma_{0}^{(\mathrm{r})2}\bm{I}_n + \sum_{k=1}^{n-1}\sigma_{k}^{(\mathrm{r})2}(\bm{C}_1^{2k}+\bm{C}_{-1}^{2k}+\bm{D}_{k}+\bm{D}_{-k})- \sigma_{k}^{(\mathrm{i})2}(\bm{C}_1^{2k}+\bm{C}_{-1}^{2k}-\bm{D}_k-\bm{D}_{-k})\right\|,
\end{split}
\end{equation} 
where $\bm{D}_k$ and $\bm{D}_{-k}$ are diagonal matrices with elements of $1$ for the first and last $n-k$ diagonals, respectively, and created by the product of $\bm{C}_{1}^{k}$ and $\bm{C}_{-1}^{k}$.
As shown in Appendix \ref{sec:appendix_hadamard_sampling_variance}, the variances between the real and imaginary parts are made the same, $\sigma_{k}^{(\mathrm{r})2} = \sigma_{k}^{(\mathrm{i})2} =: \sigma_k^2/2$, canceling the $\bm{C}_{\pm 1}$ matrices.
Consequently, Eq.\eqref{eq:var_z_toeplitz_1} is rewritten as
\begin{equation}\label{eq:toeplitz_diag}
v(\bm{\Delta_Z}) = \left\| \sigma_0^2 \bm{I}_n + \sum_{k=1}^{n-1} \sigma_k^2(\bm{D}_k + \bm{D}_{-k})\right \| 
= \left\| \mathrm{diag}(\{v_l\}_{l=1\cdots n}) \right\|,     
\end{equation}
where
\begin{equation}\label{eq:toeplitz_minimax}
v_l := \sigma_0^2 + \sum_{k=1}^{l-1}2\sigma_k^2 + \sum_{k=l}^{n-l}\sigma_k^2.
\end{equation}
is defined for $l \le \lfloor n/2 \rfloor$ and $v_l:=v_{n-l}$ otherwise.

Applying the Hadamard test error model from Appendix \ref{sec:appendix_hadamard_sampling_variance}, $\sigma_k^2$'s are expressed in terms of the number of samplings $m_k$ demonstrated as Eq.\eqref{eq:single_h_overlap_var} or Eq.\eqref{eq:ham_overlap_var}, depending on $\bm{Z}$.
Subsequently, the variances in  Eq.\eqref{eq:toeplitz_minimax} are replaced by the numbers of shots approximated as positive real numbers, $\bm{m}=(m_0, m_1, \cdots m_{n-1})\in\mathbb{R}^{n}_{+}$, resulting in
\begin{equation}\label{eq:toeplitz_minimax_shot}
v_l(\bm{m}) = 2V_{\bm{Z}}^2\left(\frac{\delta_{\bm{Z}\bm{H}}}{m_0} + \sum_{k=1}^{l-1}\frac{4}{m_k} + \sum_{k=l}^{n-l}\frac{2}{m_k}\right),
\end{equation}
for $l\le \lfloor n/2 \rfloor$.
Here, the prefactor $V_{\bm{Z}}$ is determined as $V_{\bm{H}}=\|H\|_\beta$ and $V_{\bm{S}}=1$, which is analogous to Eqs.\eqref{eq:single_h_overlap_var} and \eqref{eq:ham_overlap_var}.
In addition, because the simplified measurement of the diagonal elements as shown in Eqs.\eqref{eq:diagonal_S} and \eqref{eq:diagonal_H}, $\sigma_0^2=2V_{\bm{Z}}^2\delta_{\bm{Z}\bm{H}}/m_0$ is used, where $\delta_{\bm{Z}\bm{H}}$ is one iff $\bm{Z}=\bm{H}$ and zero otherwise.

Moreover, each $v_l(\bm{m})$ for $l\le \lfloor n/2 \rfloor$ is a convex function and has a minimum at $\bm{m}^{(l)}$ with the constraint of the total number of shots ($\sum_k m_k = M_{\bm{Z}}$), which is given by
\begin{equation}\label{eq:minimax_opt_val}
v_l(\bm{m}^{(l)}) = \frac{2V_{\bm{Z}}^2}{M_{\bm{Z}}}\left(\sqrt{2}n-2(\sqrt{2}-1)l-2+\sqrt{2}+\delta_{\bm{Z}\bm{H}}\right)^2,
\end{equation}
\begin{equation}\label{eq:minimax_opt_shot}
\bm{m}^{(l)} = (\delta_{\bm{Z}\bm{H}}m_0^{(l)}, \underbrace{2m_0^{(l)}, \cdots, 2m_0^{(l)}}_{l-1}, \underbrace{\sqrt{2}m_0^{(l)}, \cdots, \sqrt{2}m_0^{(l)}}_{n-2l+1}, \underbrace{0, \cdots 0}_{l-1}),
\end{equation}
where $m_0^{(l)}:=M_{\bm{Z}}(\sqrt{2}n-2(\sqrt{2}-1)l-2+\sqrt{2}+\delta_{\bm{Z}\bm{H}})^{-1}$ satisfies the constraint.
Note that the optimal solution for $l>\lfloor n/2 \rfloor$ cases can be found with the above result because $v_{n-l}(\bm{m})=v_l(\bm{m})$.
Furthermore, for all $l\in \{2\cdots \lfloor {n/2}\rfloor\}$, $v_l(\bm{m})$ intersects with $v_1(\bm{m})$ at the point of $\bm{m}^{(1)}$,
\begin{equation}\label{eq:minimax_help}
v_l(\bm{m}^{(1)}) = v_1(\bm{m}^{(1)}),
\end{equation}
which can be checked by inserting Eq.\eqref{eq:minimax_opt_shot} with $l=1$ to Eq.\eqref{eq:toeplitz_minimax_shot}.

Because the spectral norm of the diagonal matrix in Eq.\eqref{eq:toeplitz_diag} is the maximum absolute value of the elements, the following expression holds:
\begin{equation}
v(\bm{\Delta_Z}(\bm{m})) = \max_{l\in\{1\cdots \lfloor {n/2} \rfloor\}}{v_l(\bm{m})}.
\end{equation}
The residual matrix, $\bm{\Delta_Z}$, is constructed with the distribution of shots of $\bm{m}$. 
Moreover, the minimization of $v(\bm{\Delta_Z})$ with respect to $\bm{m}$ leads to the minimum bound for the expected error matrix norm (see Eq.\eqref{eq:expected_value_sum_of_random_matrices}) and a minimax problem over the distribution of shots and the set of functions $v_l$,
\begin{equation}
v^{\mathrm{(opt)}}(\bm{\Delta_Z})=\min_{\bm{m}\in \mathcal{M}}\max_{l\in\mathcal{L}} v_l(\bm{m}),
\end{equation}
where $\mathcal{M}=\{\bm{m}\in\mathbb{R}^{+n}:\sum_k m_k = M_{\bm{Z}}\}$ and $\mathcal{L}=\{1,\cdots,\lfloor n/2 \rfloor\}$.
To solve the minimax problem, we define $w(\bm{m}):=\max_{l\in\mathcal{L}}v_l(\bm{m})$ whose lower bound is
\begin{equation}\label{eq:minimax_pf_0}
w(\bm{m}) \ge v_1(\bm{m}) \ge v_1(\bm{m}^{(1)}).
\end{equation}
We can also verify that 
\begin{align}
w(\bm{m}^{(1)})&=\max \{v_l(\bm{m}^{(1)}):l\in\mathcal{L}\} \label{eq:minimax_pf_1}\\
&=v_1(\bm{m}^{(1)})\label{eq:minimax_pf_2},
\end{align}
where Eq.\eqref{eq:minimax_pf_2} holds because Eq.\eqref{eq:minimax_help} is satisfied.
Finally, Eqs.\eqref{eq:minimax_pf_0} and \eqref{eq:minimax_pf_2} imply that
\begin{equation}
v^{\mathrm{(opt)}}(\bm{\Delta_Z})=v_1(\bm{m}^{(1)}).
\end{equation}
The corresponding optimal point is derived by substituting $l=1$ into Eqs.\eqref{eq:minimax_opt_val} and \eqref{eq:minimax_opt_shot}, resulting in
\begin{align}
m_{0} &= m_{0}^{(\mathrm{r})} = \frac{M_{\bm{Z}}\delta_{\bm{ZH}}}{\sqrt{2}(n-1)+1},\\
m_{k>0} &= m_{k>0}^{(\mathrm{r})} + m_{k>0}^{(\mathrm{i})} = \frac{M_{\bm{Z}}}{n-1+\delta_{\bm{ZH}}/\sqrt{2}},
\end{align}
\begin{align}
v^{\mathrm{(opt)}}(\bm{\Delta_Z}) &= \frac{2V_{\bm{Z}}^2}{M_{\bm{Z}}}((n-1)\sqrt{2}+\delta_{\bm{ZH}})^2\approx\frac{4V_{\bm{Z}}^2n^2}{M_{\bm{Z}}},
\end{align}
which is applied to Eq.\eqref{eq:expected_value_sum_of_random_matrices} to produce Eq.\eqref{eq:toeplitz_error_matrix_norm_bound}. 
\end{proof}

The results show that the off-diagonal terms are sampled $\sqrt{2}$ times more than the diagonal ones.
The observation includes an additional factor $n$ for the matrix variance statistic found in a previous report \cite[Chapter 4.4]{MAL-048}, which accounts for the increased sampling errors as $n$ increases, if the total number of samplings is limited.

\subsection{Non-Toeplitz matrix}\label{sec:Non_Toeplitz_mat_theory_proof}
In a manner similar to that in Appendix \ref{sec:Toeplitz_mat_theory_proof}, Theorem \ref{theorem:non_toeplitz_error_matrix} is proved.

\begin{proof}
For the non-Toeplitz Gaussian random matrix $\bm{\Delta_H}$ with a size of $n\times n$, imaginary parts are added to Theorem \ref{theorem:matrix_gaussian_series} by specifying the random Hermitian matrices as
\begin{equation}
\bm{\Delta_H} = \sum_{1\le k<l\le n} \left\{{X}_{k, l}^{(\mathrm{r})} (\bm{E}_{kl}+\bm{E}_{lk}) + i{X}_{k, l}^{(\mathrm{i})} (\bm{E}_{kl}-\bm{E}_{lk})\right\} + \sum_{1\le k \le n}{{X}_{k,k} \bm{E}_{kk}},
\end{equation}
where $\bm{E}_{kl}$ is a matrix with an element of $1$ at the $(k,l)$ index and $0$ elsewhere, and ${X}_{k,l}^{(\mathrm{r, i})}$ is the Gaussian random variable for the real and imaginary off-diagonal element parts and the real($k=l$) diagonal elements.
The total $n^2$ independent ${X}$'s are adopted construct the matrix. As shown in Appendix \ref{sec:appendix_hadamard_sampling_variance}, the variances are set to be equal to their complex counterparts, giving $\sigma^{(\mathrm{r})2}_{k,l}=\sigma^{(\mathrm{i})2}_{k,l}=\sigma^2_{k,l}/2$. Then, $v(\bm{\Delta_Z})$ is given as
\begin{align}
v(\bm{\Delta_H}) &= \left\|\sum_{1\le k<l\le n} \{\sigma^{(\mathrm{r})2}_{k,l}(\bm{E}_{kl}+\bm{E}_{lk})^2-\sigma^{(\mathrm{i})2}_{k,l}(\bm{E}_{kl}-\bm{E}_{lk})^2\} + \sum_{1\le k \le n} \sigma^2_{k,k}\bm{E}^2_{kk}\right\|\\
&=\left\| \sum_{1\le k < l \le n}\sigma^2_{k,l}(\bm{E}_{kk}+\bm{E}_{ll}) + \sum_{1\le k \le n}\sigma^2_{k,k}\bm{E}^2_{kk} \right\|\\
&=\left\|\mathrm{diag}\left\{\sigma_{k,k}^2+\sum_{1\le l < k}\sigma_{(l, k)}^2+\sum_{k < l \le n}\sigma_{(k, l)}^2\right\}_{1\le k \le n}\right\|\\
&=\max_{1\le k\le n}\left\{\sigma_{k,k}^2+\sum_{l\in [n]\setminus{\{k\}}}\sigma_{\mathrm{ord}(k,l)}^2\right\}  \label{eq:mat_var_non_toeplitz}
\end{align}
where $\mathrm{ord}(\cdot,\cdot)$ performs index ordering to keep the summation condition ($k<l$).
Subsequently, as we did in Appendix \ref{sec:hamiltonian_overlap}, $\sigma_{k,l}^2$ is replaced by an expression of samplings, $m_{k,l}$ (see Eq.\eqref{eq:single_h_overlap_var}):
\begin{align}
v(\bm{\Delta_H}(\bm{m})) = \max_{k\in [n]} 2\|\hat{H}\|_\beta^2\left( \frac{1}{m_{(k,k)}} + \sum_{l\in [n]\setminus\{k\}} \frac{2}{m_{\mathrm{ord}(k,l)}} \right)
=: \max_{k\in [n]} v_k(\bm{m}),
\end{align}
where the vector 
\begin{equation*}
\bm{m}=(m_{(1,1)}, m_{(1,2)}, \cdots m_{(n,n)}) \in \mathbb{R}_{+}^{n(n+1)/2}     
\end{equation*}
represents the distribution of shots to obtain the matrix element ${X}_{k,l}$.
Note that different notations are used for the function $v_k(\bm{m})$ and vector $\bm{m}$ compared to those in Appendix \ref{sec:Toeplitz_mat_theory_proof}.
	
The minimization of $v(\bm{\Delta_H})$ under $M_{\bm{H}}$ shots, leads to the following minimax problem:
\begin{equation}
v^{\mathrm{(opt)}}(\bm{\Delta_H}) = \min_{\bm{m}\in \mathcal{M}} \max_{k\in [n]} v_k(\bm{m}),
\end{equation}
where $\mathcal{M}=\{\bm{m}\in \mathbb{R}_+^{n(n+1)/2} : \sum_k m_k = M_{\bm{H}}\}$.
The minimax problem is solved by introducing two functions:
\begin{align}
w_M(\bm{m}) :=& \max_{k\in [n]} v_k(\bm{m}),\\
w_A(\bm{m}) :=& \frac{1}{n}\sum_{k\in[n]} v_k(\bm{m}).
\end{align}
Here, $w_A(\bm{m})$ is a convex function with respect to $\bm{m}$ and has a minimum point with the restriction of $\mathcal{M}$ at $\bm{m}^{(A)}$, whose elements are
\begin{equation}
m^{(A)}_{(k,k)}=\frac{M_{\bm{H}}}{n^2},
\end{equation}
and
\begin{equation}
m^{(A)}_{(k,l)}=\frac{2M_{\bm{H}}}{n^2},
\end{equation}
for all $k<l$. In addition, its minimum value is
\begin{equation}
w_A(\bm{m}^{(A)}) = \frac{2\|\hat{H}\|_\beta^2n^3}{M_{\bm{H}}}.
\end{equation}
Therefore, the following can be stated for all $\bm{m}\in \mathcal{M}$:
\begin{equation}
w_M(\bm{m}) \ge w_A(\bm{m}) \ge w_A(\bm{m}^{(A)}).
\end{equation}
Moreover, inserting $\bm{m}^{(A)}$ into $w_M(\bm{m})$ leads to
\begin{equation}
w_M(\bm{m}^{(A)}) = \frac{2\|\hat{H}\|_\beta^2n^3}{M_{\bm{H}}} = w_A(\bm{m}^{(A)}),
\end{equation}
because $v_k(\bm{m}^{(A)})$ has the same value for every $k\in [n]$.
Therefore, the minimax problem is solved.
Thus, the resulting optimized number of shots and $v(\bm{\Delta_H})$ are
\begin{align}
m_{(k,l)} &= \begin{cases}
\frac{2M_{\bm{H}}}{n^2} & k \neq l\\
\frac{M_{\bm{H}}}{n^2} & k = l
\end{cases},\\
v^{\mathrm{(opt)}}(\bm{\Delta_H}) &= \frac{2\|\hat{H}\|_\beta^2n^3}{M_{\bm{H}}}.
\end{align}
\end{proof}
Here, prefactor $\|\hat{H}\|_\beta$ is determined from Appendix \ref{sec:appendix_hadamard_sampling_variance}.
Furthermore, unlike in the Toeplitz case, the result shows that the off-diagonal terms should be sampled two times more than the diagonal terms.
Although, the variance statistics of a Hermitian standard-normal matrix has been argued to grow with order $n$ \cite[Chapter 4.2.1]{MAL-048}, we discovered that its scale increases to $n^3$ if the sampling complexity is considered (i.e., if the total number of samplings is limited). 

\section{A Simpler GEVP Perturbation}\label{sec:Simpler_GEVP_Perturbation}
In this section, we present another GEVP perturbation bound, which is related to the condition number in the linear equation problem, $\mathrm{cond}(\bm{S}):= \|\bm{S}\|\|\bm{S}^{-1}\|$.
\begin{theorem}[Relative GEVP perturbation based on Weyl's inequality]
	Assume that a GEVP of a matrix pair of a Hermitian $\bm{H}$ and positive definite $\bm{S}$ is solved by an eigenvalue of $E$ and the corresponding eigenvector $\bm{c}$.
	Also, consider the perturbed problem, $(\tilde{\bm{H}}, \tilde{\bm{S}}):=(\bm{H}+\bm{\Delta_H}, \bm{S}+\bm{\Delta_S})$, where $\tilde{\bm{H}}$ is Hermitian and $\tilde{\bm{S}}$ is positive definite.
	If small perturbation of 
	\begin{equation}\label{eq:small_S_pert_assumption}
	\|\bm{S}^{-1}\bm{\Delta_S}\| < 1         
	\end{equation}
	is assumed, the error in the perturbed eigenvalue is bounded as
	\begin{equation}\label{eq:th9_result}
	|\tilde{E} - E| \le \frac{\|\bm{H}\| \|\bm{S}^{-1}\|}{1-\|\bm{S}^{-1}\bm{\Delta_S}\|}\left( \mathrm{cond}(\bm{S})\frac{\|\bm{\Delta_S}\|}{\|\bm{S}\|} + \frac{\|\bm{\Delta_H}\|}{\|\bm{H}\|} \right).
	\end{equation}
\end{theorem}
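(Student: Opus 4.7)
The plan is to reduce the generalized eigenvalue problem to a standard eigenvalue problem and then invoke Weyl's inequality on a Hermitian representative. Since $\bm{S}$ is positive definite, $E$ is an eigenvalue of $\bm{A}:=\bm{S}^{-1}\bm{H}$, and the similarity $\bm{A}=\bm{S}^{-1/2}\bm{N}\bm{S}^{1/2}$ with the Hermitian matrix $\bm{N}:=\bm{S}^{-1/2}\bm{H}\bm{S}^{-1/2}$ identifies it as an eigenvalue of $\bm{N}$; the analogous statements hold for $\tilde{E}$, $\tilde{\bm{A}}$, and $\tilde{\bm{N}}$. Weyl's inequality applied to the Hermitian pair $(\bm{N},\tilde{\bm{N}})$ then yields $|\tilde{E}-E|\le\|\tilde{\bm{N}}-\bm{N}\|$, and I would argue (exploiting the $\bm{S}$-self-adjoint structure of $\bm{A}$, or a Rayleigh-quotient comparison on the shared spectrum) that this estimate is in fact controlled by $\|\tilde{\bm{A}}-\bm{A}\|$, which is the quantity that admits a clean algebraic expansion.

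With $|\tilde{E}-E|\le\|\tilde{\bm{A}}-\bm{A}\|$ in hand, the remainder is pure algebra. Adding and subtracting $\tilde{\bm{S}}^{-1}\bm{H}$ yields
\begin{equation*}
\tilde{\bm{A}}-\bm{A}=\tilde{\bm{S}}^{-1}\bm{\Delta_H}+(\tilde{\bm{S}}^{-1}-\bm{S}^{-1})\bm{H},
\end{equation*}
and the standard resolvent identity $\tilde{\bm{S}}^{-1}-\bm{S}^{-1}=-\tilde{\bm{S}}^{-1}\bm{\Delta_S}\bm{S}^{-1}$ rewrites the second summand as $-\tilde{\bm{S}}^{-1}\bm{\Delta_S}\bm{S}^{-1}\bm{H}$. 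Triangle inequality together with sub-multiplicativity of the spectral norm then gives
\begin{equation*}
\|\tilde{\bm{A}}-\bm{A}\|\le\|\tilde{\bm{S}}^{-1}\|\bigl(\|\bm{\Delta_H}\|+\|\bm{S}^{-1}\|\,\|\bm{H}\|\,\|\bm{\Delta_S}\|\bigr).
\end{equation*}
Next I would bound $\|\tilde{\bm{S}}^{-1}\|$ by factoring $\tilde{\bm{S}}=\bm{S}(\bm{I}+\bm{S}^{-1}\bm{\Delta_S})$; the corresponding Neumann series converges under Eq.\eqref{eq:small_S_pert_assumption}, so
\begin{equation*}
\|\tilde{\bm{S}}^{-1}\|\le\frac{\|\bm{S}^{-1}\|}{1-\|\bm{S}^{-1}\bm{\Delta_S}\|}.
\end{equation*}
Substituting, pulling out the prefactor $\|\bm{H}\|\|\bm{S}^{-1}\|/(1-\|\bm{S}^{-1}\bm{\Delta_S}\|)$, and rewriting $\|\bm{S}^{-1}\|\,\|\bm{\Delta_S}\|=\mathrm{cond}(\bm{S})\,\|\bm{\Delta_S}\|/\|\bm{S}\|$ reproduces Eq.\eqref{eq:th9_result} term-for-term.

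The main obstacle I anticipate is the very first inequality. A naive Bauer--Fike application on the non-Hermitian $\bm{A}$ would introduce an extra $\sqrt{\mathrm{cond}(\bm{S})}$ factor that the target bound does not contain, so one cannot simply defer to Bauer--Fike; the argument must route through the Hermitian $\bm{N}$ and exploit that $\bm{A}$ and $\tilde{\bm{A}}$ are self-adjoint with respect to the $\bm{S}$- and $\tilde{\bm{S}}$-inner products (which are close under assumption~\eqref{eq:small_S_pert_assumption}). Once that eigenvalue-perturbation step is secured, the subsequent Neumann-series and triangle-inequality manipulations are entirely mechanical.
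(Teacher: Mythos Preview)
Your approach is essentially the paper's: reduce the GEVP to the ordinary eigenvalue problem for $\bm{A}=\bm{S}^{-1}\bm{H}$, invoke a Weyl-type bound $|\tilde{E}-E|\le\|\tilde{\bm{A}}-\bm{A}\|$, split $\tilde{\bm{A}}-\bm{A}=\tilde{\bm{S}}^{-1}\bm{\Delta_H}+(\tilde{\bm{S}}^{-1}-\bm{S}^{-1})\bm{H}$, and control both pieces via the Neumann series for $(\bm{I}+\bm{S}^{-1}\bm{\Delta_S})^{-1}$ under assumption~\eqref{eq:small_S_pert_assumption}. The paper does exactly this, and your resolvent-identity treatment of $\tilde{\bm{S}}^{-1}-\bm{S}^{-1}$ and the paper's explicit Neumann-series summation yield the identical bound $\|\bm{S}^{-1}\|^{2}\|\bm{\Delta_S}\|/(1-\|\bm{S}^{-1}\bm{\Delta_S}\|)$. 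The final regrouping into Eq.~\eqref{eq:th9_result} is the same.

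The only substantive divergence is your detour through the Hermitian $\bm{N}=\bm{S}^{-1/2}\bm{H}\bm{S}^{-1/2}$. You are right to flag that $\bm{A}$ is not Hermitian, so the textbook Weyl inequality does not literally apply, and you correctly note that a naive Bauer--Fike argument would cost an extra $\sqrt{\mathrm{cond}(\bm{S})}$. The paper, however, does not take this detour: it simply asserts ``Weyl's inequality'' for $\bm{S}^{-1}\bm{H}$ directly and proceeds, without passing through $\bm{N}$ or otherwise justifying the step for non-Hermitian matrices. So the obstacle you single out as the main gap is one the paper's own proof shares rather than resolves; your proposal is in that sense more scrupulous than the original. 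If you want a clean route, working with $\bm{N}$ and $\tilde{\bm{N}}$ throughout (bounding $\|\tilde{\bm{N}}-\bm{N}\|$ directly rather than trying to get back to $\|\tilde{\bm{A}}-\bm{A}\|$) would give a rigorous statement, at the price of the condition-number factor you anticipated.
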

\begin{proof}
	Since $\bm{S}$ is positive definite and invertible, the GEVP turns into a standard eigenvalue problem, $\bm{S}^{-1}\bm{H}\bm{c} = E^{(n)}\bm{c}$.
	Then, Weyl's inequality leads to an upper bound for the eigenvalue error as shown below:
	\begin{equation}\label{eq:pert_bound_th9}
	\begin{split}
	|\tilde{E}-E|&\le \| \tilde{\bm{S}}^{-1}\tilde{\bm{H}} - \bm{S}^{-1}\bm{H} \| \\
	&\le \| \tilde{\bm{S}}^{-1} - \bm{S}^{-1}\| \|\bm{H}\| + \|\tilde{\bm{S}}^{-1}\|\|\bm{\Delta_H}\|.         
	\end{split}
	\end{equation}
	Here, the $\tilde{\bm{S}}^{-1}=\bm{S}^{-1}(\mathbb{I}+\bm{S}^{-1}\bm{\Delta_S})^{-1}$ is expanded as
	\begin{equation}
	(\bm{S}+\bm{\Delta_S})^{-1} = \bm{S}^{-1}\sum_{k=0}^{\infty}(-\bm{S}^{-1}\bm{\Delta_S})^k,
	\end{equation}
	where the norm is bounded as
	\begin{equation}\label{eq:norm_bound_th9_1}
	\|(\bm{S}+\bm{\Delta_S})^{-1}\| \le \|\bm{S}^{-1}\|\sum_{k=0}^\infty \|\bm{S}^{-1}\bm{\Delta_S}\|^k 
	=\frac{\|\bm{S}^{-1}\|}{1-\|\bm{S}^{-1}\bm{\Delta_S}\|},
	\end{equation}
	with the condition of Eq.\eqref{eq:small_S_pert_assumption}.
	Likewise, $(\bm{S}+\bm{\Delta_S})^{-1} - \bm{S}^{-1}$ is expanded as
	\begin{equation}
	(\bm{S}+\bm{\Delta_S})^{-1} - \bm{S}^{-1} = \bm{S}^{-1}\sum_{k=1}^{\infty}(-\bm{S}^{-1}\bm{\Delta_S})^k.
	\end{equation}
	Correspondingly, we can show the norm bound
	\begin{equation}\label{eq:norm_bound_th9_2}
	\|(\bm{S}+\bm{\Delta_S})^{-1} - \bm{S}^{-1}\| \le \frac{\|\bm{S}^{-1}\|^2\|\bm{\Delta_S}\|}{1-\|\bm{S}^{-1}\bm{\Delta_S}\|},
	\end{equation}
	Finally, plugging Eqs.\eqref{eq:norm_bound_th9_1} and \eqref{eq:norm_bound_th9_2} into Eq.\eqref{eq:pert_bound_th9} results in Eq.\eqref{eq:th9_result}.
\end{proof}

\section{Consideration of Simulation Error}
\label{sec:appendix_consideration_of_simulation_error}
In this section, we discuss how the simulation error that results from the approximated simulation of $e^{-i\hat{H}t}$ affects the Toeplitz construction of the matrix $\bm{H}$.
Moreover, we propose an asymptotic criterion for circuit depth to determine a matrix construction method for $\bm{H}$ that incurs fewer total errors.
As reviewed in Section \ref{sec:QKSD}, the Toeplitz construction suffers from simulation errors and sampling errors, whereas the non-Toeplitz construction only suffers from sampling errors, that are larger than those of the Toeplitz construction.
Therefore, sufficient suppression of the simulation error in the Toeplitz construction leads to a smaller overall error than in the non-Toeplitz construction.
However, because it requires a longer circuit depth, the aim here is to determine the minimum circuit depth that makes the Toeplitz construction more advantageous.

We consider a Trotterization unitary, $\hat{U}_{\mathrm{ST}}(t)$, with a time step of $\Delta_{t}^{(\mathrm{ST})}$ that simulates the time propagation of the Hamiltonian, $e^{-i\hat{H}t}$.
The simulation error is defined as $\hat{\mathcal{E}}_{\mathrm{ST}}(t) = e^{-i\hat{H}t} - \hat{U}_{\mathrm{ST}}(t)$, whose magnitude is given as $\|\hat{\mathcal{E}}_{\mathrm{ST}}(t)\|=O(t^2/N_{\mathrm{ST}})$.
Here, we denote $N_{\mathrm{ST}}:=t/\Delta_t^{(\mathrm{ST})}$ as the number of Trotter steps.
In addition, the circuit realization of $\hat{U}_{\mathrm{ST}}(t)$ has a depth of $D=O(N_{\Gamma} N_{\mathrm{ST}})$, where $N_\Gamma$ is the number of Hamiltonian fragments which can be easily diagonalized.
Hence, the magnitude of the simulation error can be rewritten in terms of the circuit depth as
\begin{equation}
\|\hat{\mathcal{E}}_{\mathrm{ST}}(t)\|=O\left(\frac{N_\Gamma t^2}{D}\right).   
\end{equation}

Here, the approximated basis $\{\hat{U}_{\mathrm{ST}}(k\Delta_t)\ket{\phi_0}\}_{k=0}^{n-1}$ is used instead of the exact Krylov basis; thus, the ideal element of $\bm{H}$ becomes
\begin{equation}
[\bm{H}]_{kl} = \bra{\phi_0} \hat{U}_{\mathrm{ST}}^\dagger(k\Delta_t)\hat{H}\hat{U}_{\mathrm{ST}}(l\Delta_t) \ket{\phi_0}.
\end{equation}
The Toeplitz construction in Eq.\eqref{eq:prj_H_QKD} is possible with the help of the commutation relation, $[\hat{H}, e^{-i\hat{H}}]=0$.
However, because $\hat{U}_{\mathrm{ST}}$ does not commute with $\hat{H}$, the Toeplitz treatment may cause an error when obtaining the matrix element of $\bm{H}$.
The Toeplitz treatment of the element with the sampling noise leads to
\begin{align}
[\tilde{\bm{H}}^{(\mathrm{t})}]_{kl}&=\bra{\phi_0} \hat{H}\hat{U}_{\mathrm{ST}}((l-k)\Delta_t) \ket{\phi_0} + [\bm{\Delta}^{(\mathrm{t})}_{\bm{H},\mathrm{S}}]_{kl}\\
&=[\bm{H}]_{kl}
+ \bra{\phi_0} [\hat{H},\hat{U}_{\mathrm{ST}}^\dagger(k\Delta_t)]\hat{U}_{\mathrm{ST}}(l\Delta_t) \ket{\phi_0} + [\bm{\Delta}^{(\mathrm{t})}_{\bm{H},\mathrm{S}}]_{kl}\\
&=[\bm{H}]_{kl}
+ \bra{\phi_0} [\hat{H},\hat{\mathcal{E}}_{\mathrm{ST}}^\dagger(k\Delta_t)]\hat{U}_{\mathrm{ST}}(l\Delta_t) \ket{\phi_0} + [\bm{\Delta}^{(\mathrm{t})}_{\bm{H},\mathrm{S}}]_{kl}\\
&=[\bm{H}]_{kl} + [\bm{\Delta}_{\bm{H},\mathrm{C}}^{(\mathrm{t})}]_{kl} + [\bm{\Delta}_{\bm{H},\mathrm{S}}^{(\mathrm{t})}]_{kl},
\end{align} 
 where $[\bm{\Delta}^{(\mathrm{t})}_{\bm{H},\mathrm{S}}]_{kl}=O(\|\hat{H}\|_\beta \sqrt{n/M_{\bm{H}}})$ is sampling error in the Toeplitz sequence (see Eq.\eqref{eq:QKD_H_seq}), whose magnitude is derived in Appendix \ref{sec:hamiltonian_overlap}.
In addition to the sampling error, a commutation error appears in the Toeplitz construction of $\bm{H}$, which is
\begin{align}
[\bm{\Delta}_{\bm{H},\mathrm{C}}^{(\mathrm{t})}]_{kl}:=&\bra{\phi_0} [\hat{H},\hat{\mathcal{E}}_{\mathrm{ST}}^\dagger(k\Delta_t)]\hat{U}_{\mathrm{ST}}(l\Delta_t) \ket{\phi_0}\\
=&O\left(\|\hat{\mathcal{E}}_{ST}(k\Delta_t)\| \|\hat{H}\| \right)\\
=&O\left( \frac{N_\Gamma \|\hat{H}\| \Delta_t^2}{D}k^2 \right).
\end{align}
A bound for spectral norm of $\bm{\Delta}_{\bm{H},\mathrm{C}}^{(\mathrm{t})}$ can be derived from the Frobenius norm ($\|\bm{\Delta}_{\bm{H},\mathrm{C}}^{(\mathrm{t})}\|\le\|\bm{\Delta}_{\bm{H},\mathrm{C}}^{(\mathrm{t})}\|_\mathrm{F}$):
\begin{equation}
\|\bm{\Delta}_{\bm{H},\mathrm{C}}^{(\mathrm{t})}\|_\mathrm{F} = \left(\sum_{kl}[\bm{\Delta}_{\bm{H},\mathrm{C}}^{(\mathrm{t})}]_{kl}^2\right)^{1/2}=O\left( \frac{N_\Gamma \|\hat{H}\| \Delta_t^2 n^3}{D} \right).
\end{equation}

Finally, triangular inequality is used to analyze the total error matrix norm,
\begin{align}\label{eq:toeplitz_tot_err_norm_comm}
\begin{split}
\|\bm{\Delta}^{(\mathrm{t})}_{\bm{H}}\|\le& \| \bm{\Delta}^{(\mathrm{t})}_{\bm{H},\mathrm{C}} \| + \| \bm{\Delta}^{(\mathrm{t})}_{\bm{H},\mathrm{S}} \| \\
=& O\left(\frac{N_\Gamma \|\hat{H}\| \Delta_t^2 n^3}{D} + \frac{\|\hat{H}\|_\beta n \sqrt{\log n}}{\sqrt{M_{\bm{H}}}}\right),     
\end{split}
\end{align}
where the spectral norm of $\bm{\Delta}^{(\mathrm{t})}_{\bm{H},\mathrm{S}}$ is obtained from Theorem \ref{theorem:toeplitz_error_matrix}.

Meanwhile, for the non-Toeplitz case, only the sampling error whose upper bound is revealed in Theorem \ref{theorem:non_toeplitz_error_matrix} is considered.
Although only the upper bound is shown, we assume that the result in Theorem \ref{theorem:non_toeplitz_error_matrix} also explains the asymptotic lower bound based on the observation shown in the numerical analysis (Fig.\ref{fig:error_norm}). In other words,
\begin{equation}\label{eq:nontoeplitz_tot_err_norm_comm}
\|\bm{\Delta}_{\bm{H}}^{(\mathrm{nt})}\| = \Theta\left( \frac{\|\hat{H}\|_\beta n\sqrt{n\log n}}{\sqrt{M_{\bm{H}}}} \right)
\end{equation}

To determine which construction is advantageous, one can compare Eqs.\eqref{eq:toeplitz_tot_err_norm_comm} and \eqref{eq:nontoeplitz_tot_err_norm_comm}.
Correspondingly, one can show an asymptotic condition for the quantum circuit depth that makes the Toeplitz construction more advantageous than the non-Toeplitz construction ($\|\bm{\Delta}_{\bm{H}}^{(\mathrm{t})}\| \le \|\bm{\Delta}_{\bm{H}}^{(\mathrm{nt})}\|$):
\begin{equation}\label{eq:circuit_depth_threshold}
D^{\star} = \Omega\left( N_{\Gamma}\Delta_t^2 \left(\frac{\|\hat{H}\|}{\|\hat{H}\|_\beta}\right) \left(\frac{n^3M_{\bm{H}}}{\log n}\right)^{1/2} \right).
\end{equation}
In other words, if a quantum computer offers circuits longer than Eq.\eqref{eq:circuit_depth_threshold}, the Toeplitz construction suffers less error than the non-Toeplitz construction.
The dependency of $M_{\bm{H}}$ explains that the simulation error should be suppressed as sampling error is reduced in order to make the Toeplitz construction advantageous.
Moreover, because the time step for the Krylov basis, $\Delta_t$, determines the maximum simulation time $(t=n\Delta_t)$, the circuit depth becomes proportional to $\Delta_t$.

The corresponding eigenvalue perturbation scales differently depending on whether the circuit depth of Eq.\eqref{eq:circuit_depth_threshold} is satisfied, due to the different construction strategies involved:
\begin{equation}\label{eq:pert_bound_st_1}
|\tilde{E}_{0,\mathrm{ST}}^{(n\rightarrow n_\epsilon)} - {E}_{0,\mathrm{ST}}^{(n\rightarrow n_\epsilon)}| \le
\left\{
\begin{array}{cc}
\tilde{O}\left(\frac{\|\hat{H}\|^3 n^2}{d_0\sqrt{M}}\right)&D > D^{\star}\\
\tilde{O}\left(\frac{\|\hat{H}\|^3 n^{5/2}}{d_0\sqrt{M}}\right)&D \le D^{\star}
\end{array}
\right..
\end{equation}
Here, $E_{0,\mathrm{ST}}^{(n\rightarrow n_\epsilon)}$ denotes the lowest eigenvalue in the approximated subspace, $\tilde{\mathcal{K}}_n:=\{\hat{U}_{\mathrm{ST}}(k\Delta_t)\ket{\phi_0}\}_{k=0}^{n-1}$, and $\tilde{E}_{0,\mathrm{ST}}^{(n\rightarrow n_\epsilon)}$ is the eigenvalue perturbed by the sampling and Trotter errors, where the thresholding with the parameter $\epsilon$ is employed.

\section{Hardware Noise}
\label{sec:appendix_Hardware_noise}
Here, we partially review an analysis of Hardware noise in \cite{liang2023modeling}, which explains how the erroneous quantum operations contribute to the final error in the Randomized Fourier Estimation (RFE) algorithm as a case study in the early fault tolerant regime.
Furthermore, we demonstrate how this noise model is applied to the QKSD algorithm.

A noisy circuit is modeled by placing independently selected random circuits between the ideal unitaries.
Such setting results in the following mixed state $\rho_f$, evolved from the pure initial state, $\ket{\Psi_{i}}\bra{\Psi_{i}}$:
\begin{equation}\label{eq:rho_f}
\rho_f = \Lambda \circ \mathcal{W}_D \circ \cdots \circ \Lambda \circ \mathcal{W}_1(\ket{\Psi_{i}}\bra{\Psi_{i}}),
\end{equation}
where $\mathcal{W}_k(\rho) := \hat{W}_k \rho \hat{W}_k^{\dagger}$ is a superoperator describing the action of an ideal unitary, $\hat{W}_k$, which constitutes the entire circuit with the depth of $D$.
Also, the error channel is described by the following Kraus decomposition with the complete set of $N_q$-qubit Pauli operators ($\hat{P}_j \in \{ \hat{I}, \hat{\sigma}_x, \hat{\sigma}_y, \hat{\sigma}_z\}^{N_q}$):
\begin{equation}
\Lambda(\rho) = r_0\rho + \sum_{j=1}^{4^{N_q}-1}r_j\hat{P}_j \rho \hat{P}_j^\dagger,
\end{equation}
where $r_j\in [0, 1]$ is the probability that $\hat{P}_j$ occurs.
Note that Eq.\eqref{eq:rho_f} represents the logical qubit state, and thus, $(1-r_0)$ is analogous to the logical qubit error rate.
Then, the final mixed state (Eq.\eqref{eq:rho_f}) is represented as
\begin{equation}
\rho_f = f_{0} \ket{\Psi_{0}}\bra{\Psi_{0}} + \sum_{j=1}^{4^{N_q D} - 1}f_j \ket{\Psi_j}\bra{\Psi_j},
\end{equation}
where the ideal state $\ket{\Psi_0}:=\hat{W}_D \cdots \hat{W}_1 \ket{\Psi_i}$ is found with the probability of $f_0$, while the erroneous trajectory states, $\ket{\Psi_{j}}$ for $j>0$, appear with the probability of $f_j$.
For a simpler analysis, the erroneous trajectory states are assumed to be sampled from the $N$-dimensional complex spherical 2-design.
Under this assumption, the corresponding sampling results in the following expectation values and variance for the observable $\hat{O}$,
\begin{align}
\mathbb{E}_{\Psi_{j>0}\sim \mathcal{S}^{N}}[\mathrm{Tr}[\rho_f \hat{O}]] =& f_0 \braket{\Psi_0|\hat{O}|\Psi_0}\label{eq:noisy_expectation}\\
\mathrm{Var}_{\Psi_{j>0}\sim \mathcal{S}^{N}}[\mathrm{Tr}[\rho_f \hat{O}]] =& \frac{1}{N+1} \sum_{j=1}^{4^{N_qD}-1}f_j^2.\label{eq:noisy_variance}
\end{align}
If the error rates are much smaller than $f_0$, the fluctuation to $\mathrm{Tr}[\rho_f \hat{O}]$ (square root of Eq.\eqref{eq:noisy_variance}) becomes negligible compare to Eq.\eqref{eq:noisy_expectation}.
Therefore, $\mathrm{Tr}[\rho_f \hat{O}]$ will not largely deviate from Eq.\eqref{eq:noisy_expectation}.
With a further assumption of Pauli errors which are identical and independent to the qubits, it can be shown that $f_0=r_0^{D}=r^{N_q D}$, where $r$ is single qubit fidelity.

Then, the probabilities of the outcome in our Hadamard test settings in Eq.\eqref{eq:probability_hadamard} are respectively modified to
\begin{align}
\nonumber p_{\mathrm{noisy}} =& \frac{1}{2}\left( 1 + e^{-\lambda}\mathbb{E}[R] \right) \\ 
q_{\mathrm{noisy}} =& \frac{1}{2}\left( 1 + e^{-\lambda}\mathbb{E}[I] \right),
\end{align}
where $e^{-\lambda} = r^{N_q D}$.
Therefore, the errors change the expectation value (Eq.\eqref{eq:hadamard_noiseless_expectation}) and the variance (Eq.\eqref{eq:hadamard_noiseless_variance}) of the Hadamard test outcome as
\begin{align}
\mathbb{E}[Z_{\mathrm{noisy}}] =& e^{-\lambda}\braket{\psi|\hat{U}|\psi}\\
\mathrm{Var}[Z_{\mathrm{noisy}}] =& \frac{2}{m}\left(2 - e^{-2\lambda}\left|\braket{\psi|\hat{U}|\psi}\right|^2\right). \label{eq:hardware_variance}
\end{align}
However, the approximation to variance by removing the state and operator dependencies produces the same result as Eq.\eqref{eq:single_h_overlap_var}.
Therefore, the Pauli error results in an exponentially vanishing value with respect to the circuit depth and the number of qubits, while the other errors remain nearly the same.

Let us assume the identical circuit depth is used for each Hadamard test, and thus $\lambda$ is invariant over the matrix elements.
Consequently, the QKSD matrix decays with the constant factor
\begin{equation}\label{eq:hardware_noise_matrix}
\tilde{\bm{Z}}_{\mathrm{H}}=e^{-\lambda}\bm{Z} + \bm{\Delta}_{\bm{Z},\mathrm{H}}\approx e^{-\lambda}\bm{Z} + \bm{\Delta}_{\bm{Z}},
\end{equation}
where $\tilde{\bm{Z}}_{\mathrm{H}}$ is a QKSD matrix ($\bm{Z}\in\{\bm{H}, \bm{S}\}$) with the hardware noise and $\bm{\Delta}_{\bm{Z},\mathrm{H}}$ denotes the statistical error with the uncertainty of Eq.\eqref{eq:hardware_variance}, which behaves almost identical to $\bm{\Delta_Z}$.
When $\bm{\Delta}_{\bm{Z},\mathrm{H}}$ is small ($e^{\lambda} \|\bm{\Delta}_{\bm{Z}}\| \ll 1$), the multiplicative error is canceled, since the GEVP(Eq.\eqref{eq:gen_eigeq}) remains invariant ($e^{-\lambda}\bm{H}\bm{c}_j=e^{-\lambda}\bm{S}\bm{c}_j E^{(n)}_j$).
However, in general, $\bm{\Delta_Z}$ is amplified by the factor of $e^{\lambda}$, because replacing the matrices in Eq.\eqref{eq:gen_eigeq} with Eq.\eqref{eq:hardware_noise_matrix} results in Eq.\eqref{eq:hardware_noise_gevp}

\end{document}